\documentclass[opre,nonblindrev]{informs4C}

\OneAndAHalfSpacedXI

\usepackage{lmodern}
\usepackage{mathtools}
\usepackage{amsmath}
\usepackage{amssymb}
\usepackage{mathrsfs}
\usepackage{bm}
\usepackage{booktabs}
\usepackage{enumitem}
\usepackage{algorithm}
\usepackage{algorithmic}
\usepackage{float}
\usepackage{microtype}
\usepackage[final,hypertexnames=false]{hyperref}
\usepackage{natbib}

\bibpunct[, ]{(}{)}{,}{a}{}{,}%
\def\bibfont{}%

\hypersetup{colorlinks=true,linkcolor=blue,citecolor=blue,urlcolor=blue}

\newtheorem{definition}{Definition}
\newtheorem{lemma}{Lemma}
\newtheorem{proposition}{Proposition}
\newtheorem{theorem}{Theorem}
\newtheorem{corollary}{Corollary}
\newtheorem{remark}{Remark}

\ECRepeatTheorems
\EquationsNumberedThrough
\MANUSCRIPTNO{}

\newcommand{\E}{\mathbb E}
\newcommand{\Prb}{\mathbb P}

\newcommand{\one}{\mathbf 1}
\newcommand{\R}{\mathbb R}
\newcommand{\cM}{\mathcal M}
\newcommand{\cT}{\mathcal T}
\newcommand{\cN}{\mathcal N}
\newcommand{\cA}{\mathcal A}

\newcommand{\cQ}{\mathcal Q}
\newcommand{\SBLP}{\operatorname{SBLP}_{\rm MC}}

\definecolor{olive}{RGB}{128,128,0}
\definecolor{firebrick}{RGB}{178,34,34}

\begin{document}

\RUNTITLE{Multi-Capacity Online Assortment Optimization under Markov Choice}
\TITLE{Multi-Capacity Online Assortment Optimization with Consecutive Stays under Markov Chain Choice}

\ARTICLEAUTHORS{
\AUTHOR{Hongbin Zhang}
\AFF{Sauder School of Business, University of British Columbia, BC V6T 1Z4, Canada} 
\AUTHOR{Yanzhi Li}
\AFF{College of Business, City University of Hong Kong, Tat Chee Avenue, Kowloon, Hong Kong}
\AUTHOR{Woonghee Tim Huh}
\AFF{Sauder School of Business, University of British Columbia, BC V6T 1Z4, Canada}
}

\ABSTRACT{%
We study online assortment optimization under Markov chain choice
and independent Bernoulli arrivals, with each period's request
type known in advance. Each product may have multiple
interchangeable physical copies, and each booking reserves
consecutive service slots on a single copy. Customers choose
among products, while the platform assigns purchases to feasible
copies. We develop a sales-based relaxation that
upper-bounds the optimal online revenue and use it to design
a policy with two components: independent sampling and
purchase--update coupling.
We generate independent copy proposals and make a product
eligible whenever at least one of its copies proposes.
A general revenue-extraction bound for random candidate sets
with independent product inclusions shows that optimizing over eligible products recovers at least half of
each period's relaxation revenue benchmark in expectation.
A purchase--update coupling reproduces the interval flows
through independent virtual copy updates while preserving
physical feasibility. Under general Markov chain choice,
the policy uses a random gate to make the coupling feasible:
it offers the selected assortment with probability \(1-1/e\)
and the empty assortment otherwise.
No gate is needed under multinomial logit (MNL) choice.

The resulting policies achieve approximation ratios of
\((1-1/e)/2\) and \(1/2\), respectively, relative to the optimal
nonanticipative online policy, independently of product capacities.
When outside-option attractiveness is zero in every period,
the same ungated MNL policy achieves a \(1-1/e\) guarantee.
Both policies admit exact implementations in expected
polynomial time. In the single-copy MNL setting, our policy
improves the guarantee of \citet{huwu2026} from \(0.271\)
to \(1/2\) under the same arrival convention.
The \(1/2\) static extraction factor and the \(1-1/e\)
uniform scalar-gate factor are separately sharp, and the
\(1/2\) MNL guarantee is tight for the proposed policy.
}

\KEYWORDS{Online assortment optimization; Consecutive stays; Markov chain choice; Proposal discarding}

\maketitle

\section{Introduction}
\label{sec:intro}

Advance booking changes the nature of capacity. In our setting,
a sale does not simply remove one interchangeable unit from an
inventory pool; it reserves a particular physical unit for the
entire requested service interval, with the assignment fixed
irrevocably at booking. In high-speed rail, for example, an
accepted itinerary may require immediate and irrevocable
assignment to a single seat. Having an available seat on each
leg is not sufficient: the same seat must be available on all
requested legs \citep{zhu2023assign}. Similar requirements arise in hotels, vehicle-rental companies, and other reservation systems when a booking must be fulfilled by the same room or asset throughout the requested stay. These requirements call
for tracking each physical unit's availability across service
slots \citep{rusmevichientong2023revenue}.

The physical unit used to fulfill a booking, however, is
often not the alternative chosen by the customer. A rail
passenger may choose a seating class, while the firm assigns
a specific seat within that class. A related form of
fulfillment flexibility appears in flexible-product models,
which allow the seller to assign a customer to one of several
specified alternatives after purchase
\citep{gallego2004managing}. We call each customer-facing
category a \emph{product} and its individual physical units
\emph{copies}. Copies of the same product are interchangeable
from the customer's perspective, but the firm must distinguish
them for assignment and capacity accounting.

This product--copy distinction matters for both feasibility
and customer choice. With existing assignments fixed, tracking
only the number of available copies in each service slot may
suggest that a request is feasible even when no single copy
can fulfill it. For example, one seat may be available on the
first leg of a journey and another on the second, even though
neither is available for both legs. Displaying physical copies
as separate alternatives without a corresponding reformulation
of the choice model can change product purchase probabilities.
We therefore model customer choice at the product level while
tracking individual copies for feasibility and assignment.

Consecutive stays also complicate the evolution of each copy's
availability. Reserving an interior portion of a free interval
splits it into two disjoint blocks, so the feasibility of later
requests depends on which service slots remain free, not merely
how many. This differs from standard inventory models, which
track remaining stock, and from many reusable-resource models,
which summarize availability through release times or counts
of available units. In our setting, booking periods and service
slots play distinct roles: requests arrive sequentially during
the booking horizon, while each accepted reservation occupies
an interval of service slots on its assigned copy. Once reserved,
those slots remain unavailable on that copy for the rest of
the booking horizon.

We adopt the fixed-type Bernoulli-arrival convention of
\citet{huwu2026} over a finite booking horizon. Each period
has at most one potential customer, and arrivals are independent
across periods. The arrival probabilities, requested intervals,
revenues, and choice parameters for all periods are known before
the booking horizon begins and may vary across periods.
Arrival and choice outcomes are observed only as they occur.
Before observing the arrival outcome in each period, the platform
selects an assortment of products with feasible copies and offers
it if the customer arrives. Each purchase is assigned to a single
copy available throughout the requested interval. We model
customer substitution using the Markov chain choice model of
\citet{blanchet2016markov}. The chain is defined over
customer-visible products and the outside option, and the
customer's path stops upon first reaching an offered product
or the outside option. This model includes multinomial logit
(MNL) choice with positive outside-option attractiveness and
allows richer substitution patterns. We treat MNL with zero
outside-option attractiveness separately.

The closest related work is \citet{huwu2026}, whose
fixed-type Bernoulli model is included in our setting when
each product has one copy and choice follows MNL.
Building on their interval-flow relaxation and
proposal-discarding framework, we extend the analysis to
general Markov chain choice and multiple copies per product.
The main challenge is to coordinate product-level customer
choice with copy-level proposals and virtual updates.

Our benchmark is the expected revenue of the optimal nonanticipative
online policy. Like our algorithm, this policy knows the model
primitives but observes arrival and choice outcomes only as they
occur. The benchmark ranges over all feasible history-dependent
policies, without computational restrictions. Approximation
guarantees for efficient policies relative to this benchmark
are often called \emph{philosopher inequalities}, whereas
prophet inequalities use a clairvoyant offline benchmark
\citep{papadimitriou2024online,huang2024online}.
We construct policies that admit exact implementations in
expected polynomial time and achieve approximation guarantees
of \((1-1/e)/2\) under general Markov chain choice and
\(1/2\) under MNL, relative to this benchmark and independently
of product capacities.

\subsection{Literature Review}
\label{subsec:literature}

Our work relates to four research streams:
(i) choice-based revenue management,
(ii) flexible assignment and reusable resources,
(iii) interval scheduling and advance booking, and
(iv) online approximation and rounding.

(i) \textbf{Choice-Based Revenue Management.}
Classical network revenue management studies how to allocate
limited capacity among products with fixed resource requirements,
balancing current revenue against the value of preserving capacity
for future demand
\citep{gallegovanryzin1997,tallurivanryzin1998}.
Choice-based models incorporate customer substitution by allowing
demand to depend on the assortment offered.
\citet{talluri2004general} study revenue management under a general
discrete-choice model. \citet{liu2008choice} analyze a choice-based
linear programming formulation for network revenue management,
and \citet{gallego2015general} develop a sales-based formulation
that incorporates customer choice into capacity allocation.

Related work examines how assortments should vary across
customers and over time.
\citet{golrezaei2014realtime} study real-time personalized
assortments, while \citet{aouad2018greedy} develop greedy-like
algorithms for assortment planning under MNL choice.
\citet{masimchilevi2020algorithms} analyze online algorithms
for assortment and pricing decisions, and
\citet{masimchilevizhao2021dynamic} study the performance
of price and assortment calendars fixed in advance.

Within this literature, the Markov chain choice model represents
substitution through transitions among product states
\citep{blanchet2016markov}.
\citet{feldman2017revenue} develop LP formulations for assortment
optimization under this model and an equivalent compact relaxation
for network revenue management. They also provide a procedure
for recovering an optimal assortment mixture.
\citet{desir2020constrained} develop algorithms for constrained
assortment optimization under the same choice model.
Our analysis uses the continuation structure of Markov paths
to establish a lower bound on the expected revenue from
optimizing over a random candidate set in which products
are included independently.

(ii) \textbf{Flexible Assignment and Reusable Resources.}
Flexible-product models allow the seller to assign a customer
to one of several specified alternatives after purchase
\citep{gallego2004managing}.
Our model involves a related form of fulfillment flexibility:
the customer purchases a product, and the seller assigns the
booking to one of its interchangeable physical copies.

A related literature studies assortment, pricing, and allocation
for resources that become available again after service completion.
\citet{rusmevichientong2020dynamic} study dynamic assortment
optimization with random usage durations.
\citet{lei2020realtime} analyze dynamic pricing with deterministic
service durations, with extensions to heterogeneous durations
and advance reservations.
\citet{gong2022online} study online assortment optimization with
reusable resources, while \citet{baekma2022bifurcating} develop
approximation methods for network revenue management with resource
reuse. Further work considers Bayesian online assortment
optimization \citep{feng2024near} and online allocation under
adversarial requests \citep{goyal2025reusable}.

Our setting combines Markov substitution among products with
the requirement that each booking be fulfilled by a single
physical copy throughout the requested service interval.
Accepted reservations remain fixed throughout the booking horizon,
and the feasibility of subsequent requests depends on which
service slots remain free on each copy. The policy must therefore
coordinate assortment and assignment decisions using each
copy's free-slot set.

(iii) \textbf{Interval Scheduling and Advance Booking.}
Classical fixed-interval scheduling assigns jobs with prescribed
start and end times to machines so that no two jobs overlap
on the same machine
\citep{kolen2007interval,kovalyov2007fixed}.
\citet{chawla2019pricing} study Bayesian online allocation with
interval demand. In revenue-management applications,
\citet{zhu2023assign} impose the assign-to-seat requirement that
an accepted rail itinerary use the same physical seat throughout
the journey. \citet{rusmevichientong2023revenue} study assortment
decisions for heterogeneous unit-capacity resources requested
over service intervals. These studies highlight the need to
track availability on individual resources when each request
must be fulfilled by the same resource throughout its service
interval.

The closest methodological reference is \citet{huwu2026}, who
develop interval-flow relaxations and use virtual resource states
with proposal discarding for consecutive-stay revenue management.
Their choice-based model uses MNL and treats each physical
resource as a customer-facing alternative. Under fixed-type
Bernoulli arrivals, this model is included in our MNL
specialization with one physical copy per product. They also
consider general arrivals with random request types.
Their MNL policy samples virtual resource updates after
observing the customer's choice. This construction ensures
that the purchased resource is updated and recovers the
prescribed independent update law after averaging over
the choice outcome.

We build on this framework to accommodate general Markov
chain choice and multiple copies per product. Our static
analysis establishes a revenue-extraction bound for random
candidate sets in which products are included independently.
For dynamic implementation, we characterize coupling feasibility
through necessary and sufficient Hall-type coverage conditions.
A priority-rule representation and copy-level lifting then
generate the required virtual updates. Under fixed-type
Bernoulli arrivals, our MNL policy improves the approximation
guarantee of \citet{huwu2026} from \(0.271\) to \(1/2\).
This improvement holds even with one copy per product and,
in that setting, follows from the stronger static
revenue-extraction bound.
Table~\ref{tab:closest-comparison} summarizes the comparison.

\begin{table}[t]
\TABLE
{Comparison under fixed-type Bernoulli arrivals.
\label{tab:closest-comparison}}
{\begin{tabular}{@{}p{0.21\textwidth}p{0.30\textwidth}p{0.35\textwidth}@{}}
\toprule
Dimension & \citet{huwu2026} & Our paper \\
\midrule
Displayed alternative
& Individual resource
& Product with interchangeable copies \\

Choice model
& MNL
& Markov chain choice \\

Proposals
& Resource-level
& Copy-level, aggregated by product \\

Revenue analysis
& MNL-specific
& Markov continuation \\

Update coupling
& Conditional resource updates
& Hall coupling with copy-level lifting \\

MNL guarantee
& \(0.271\)
& \(1/2\) \\
\bottomrule
\end{tabular}}
{Notes. Both guarantees are relative to the optimal online benchmark.
The zero-outside-option MNL case is treated separately in our paper.}
\end{table}

(iv) \textbf{Online Approximation and Rounding.}
Our approximation guarantees use the optimal online benchmark
described above. The guarantees of \citet{husekarwu2026philosopher} concern
inventory depletion and do not directly apply to our setting,
where feasibility depends on each physical copy's availability
throughout the requested service interval.
In online stochastic matching, relaxation marginals guide
randomized proposals, while discarding and dependent rounding
help preserve feasibility and control dependence
\citep{braverman2022max,braverman2025new}.
In our setting, implementing the relaxation requires tracking
each copy's free service slots and assigning each purchase
to a copy available throughout the requested interval. 

Contention-resolution schemes similarly transform fractional
or independently sampled solutions into feasible discrete outcomes
\citep{chekuri2014submodular,kashaev2023uniform,
sergeev2024contention}.
Our implementation must also preserve a prescribed joint law
for virtual copy updates. After observing the purchase outcome,
the policy generates updates that include at least one proposed
copy of the purchased product whenever a purchase occurs.
Conditional on the current virtual states and proposals,
averaging over purchase outcomes must recover the prescribed
joint law of independent copy updates. We characterize
coupling feasibility through Hall-type coverage conditions
and use a priority-rule decomposition to construct the
conditional sampler.

\subsection{Main Results and Contributions}
\label{subsec:contributions}

Our main contributions are a general revenue-extraction bound
under Markov chain choice and approximation guarantees for
consecutive-stay systems with arbitrary product capacities.

\begin{itemize}[leftmargin=10pt]

\item \textbf{A general revenue-extraction bound under Markov chain choice.}
We establish a general bound on the expected revenue from
optimizing over a random candidate set with independent
product inclusions, relative to the expected revenue of an
arbitrary randomized benchmark assortment
(Theorem~\ref{thm:general-static}).
We express the bound in terms of candidate-inclusion
probabilities and benchmark purchase probabilities normalized
by singleton purchase probabilities.
Our analysis does not require candidate-inclusion probabilities
to sum to at most one or match the benchmark's
product-inclusion probabilities.
We prove the bound using a continuation decomposition of
Markov paths and simultaneous independent thinning of the
candidate set.

The result applies beyond the consecutive-stay setting.
Applying it to our proposal mechanism, we show that optimizing
over products with at least one proposing copy recovers at
least half of each period's relaxation revenue in expectation,
for arbitrary copy multiplicities.

\item \textbf{Purchase--update coupling for dynamic implementation.}
We characterize when a purchase outcome can be coupled with
prescribed independent Bernoulli product updates so that every
purchased product receives an update.
Necessary and sufficient Hall-type coverage inequalities describe
the feasible purchase distributions.
Using the classical greedy representation of the resulting
coverage polymatroid \citep{edmonds2003submodular}, we construct
a priority-rule decomposition and a sampler conditional on the
observed purchase outcome.

We lift these product updates to individual proposed copies,
assigning every purchase to a copy selected for a virtual update.
Conditional on the current virtual states, proposals, and selected
assortment, averaging over purchase outcomes recovers the prescribed
joint law of independent copy updates.
Other updated copies undergo only virtual discards and consume
no physical capacity.
This construction preserves the relaxation's interval marginals
and independence across virtual copies at the beginning of each
booking period, while keeping virtual availability within actual
availability.

Under general Markov chain choice, in each nonterminal period,
the policy offers the selected assortment with probability
\(1-1/e\) and the empty assortment otherwise.
This gate ensures coupling feasibility.
Combining this implementation with static revenue extraction gives
a \({(1-1/e)}/{2}\)-approximation relative to both the relaxation
and the optimal nonanticipative online value, for arbitrary product
capacities (Theorem~\ref{thm:main-guarantees}(i)).

\item \textbf{An improved MNL guarantee and sharpness results.}
Under MNL, all Hall-type coupling conditions hold without gating,
so dynamic implementation introduces no additional revenue loss.
The resulting policy achieves a \({1}/{2}\)-approximation
relative to both benchmarks for arbitrary product capacities
(Theorem~\ref{thm:main-guarantees}(ii)).
Under the common fixed-type Bernoulli-arrival setting, this improves
the general MNL guarantee of \citet{huwu2026} from \(0.271\) to
\(1/2\), even with one copy per product.
When the outside option has zero attractiveness in every period,
the same policy attains a \(1-1/e\) guarantee relative to both
benchmarks for arbitrary copy multiplicities, matching their
guarantee for this case (Remark~\ref{rem:mnl-zero-outside}).

We establish separate sharpness of the \(1/2\) static extraction
factor and the \(1-1/e\) uniform scalar-gate factor.
The extraction factor is sharp even on one-period, unit-capacity
instances with uniquely determined optimal sales flows.
These results do not establish tightness of the combined Markov
guarantee.
Under MNL, one-period instances with one copy per product show
that the proposed policy's revenue ratio relative to the optimal
online value approaches \(1/2\), establishing tightness for this
policy.

\end{itemize}

Building on the interval-flow and proposal-discarding framework
of \citet{huwu2026}, we use a copy-expanded sales-based linear
program (SBLP) that upper-bounds the optimal online value and
provides target flows for independent copy proposals.
To handle multiple copies, the relaxation combines product-level
customer choice with maximal-free-interval marginals for individual
copies.
Aggregating interchangeable copies yields an equivalent relaxation
whose number of interval state and sales-flow variables is
independent of the product capacities.
A symmetric lift recovers feasible copy-level flows with the same
objective value.


\section{Model and Benchmark}\label{sec:model}

We introduce the consecutive-stay system with multiple copies per product, specify the customer-choice model, and define the online benchmark. Customers choose among products, while the platform assigns each purchase to a physical copy available throughout the requested service interval.

\subsection{Multi-Capacity System and Consecutive Stays}
\label{subsec:model-capacity}

We consider a booking system with \(T\) periods in which
the platform selects product assortments and assigns
purchases to physical copies. The objective is to maximize
total expected revenue over the booking horizon.

\textbf{Products and physical copies.} 
Let \(\cM=\{1,\ldots,M\}\) denote the set of customer-visible
products. Product \(i\) has \(C_i\in\mathbb Z_+\) interchangeable
physical copies, indexed by \(c\in[C_i]\). Each copy \((i,c)\)
can fulfill only product \(i\). Copies are distinguished solely
for capacity accounting and assignment. Regardless of \(C_i\),
product \(i\) appears at most once in an offered assortment.
Since at most \(T\) bookings can be accepted over the horizon,
replacing \(C_i\) by \(\min\{C_i,T\}\) leaves the optimal online
value unchanged. We therefore assume \(C_i\le T\) without loss
of generality.

\textbf{Service slots and booking periods.}
Let \(\cN=\{1,\ldots,N\}\) denote the ordered set of service
slots and \(\cT=\{1,\ldots,T\}\) the set of booking periods.
Once a service slot on a copy is reserved, it remains unavailable for the rest of the booking horizon. At the beginning of period
\(t\), let \(\mathcal F_{tic}\subseteq\cN\) denote the free
service slots on copy \((i,c)\). Initially, every copy is fully
available: \(\mathcal F_{1ic}=\cN\) for all \(i\in\cM\) and
\(c\in[C_i]\).

\textbf{Customer arrivals and requests.}
We adopt the fixed-type Bernoulli-arrival model of
\citet{huwu2026}.
Each booking period \(t\) has at most one potential customer,
who arrives with probability \(p_t\in(0,1]\), independently
across periods. The customer's request is for the consecutive
service interval \(I_t=[\ell_t,r_t]\subseteq\cN\), where
\(1\le\ell_t\le r_t\le N\).
The arrival probabilities, requested intervals, revenues, and customer-choice models described below are known for all periods before the booking horizon begins. Arrival and choice outcomes remain uncertain and are observed only as they occur.

\textbf{Feasibility and sequence of events.}
Each accepted booking is irrevocably assigned to a single
physical copy for its entire requested service interval. Thus, copy
\((i,c)\) can fulfill the period-\(t\) request if and only if
\(I_t\subseteq\mathcal F_{tic}\). Let
\(\mathcal C_t(i):=\{c\in[C_i]:I_t\subseteq\mathcal F_{tic}\}\)
denote the set of feasible copies of product \(i\), and let
\(\cA_t:=\{i\in\cM:\mathcal C_t(i)\ne\varnothing\}\)
denote the set of available products in period \(t\).
The events in each period occur in the following order:
\begin{enumerate}[label=(\roman*)]
    \item Before observing the arrival outcome, the platform
    selects an assortment \(S_t\subseteq\cA_t\).

    \item The arrival outcome is observed. If the customer
    arrives, the platform offers the preselected assortment
    \(S_t\).

    \item The arriving customer chooses a product \(i\in S_t\)
    or makes no purchase, according to the choice model
    described below.

    \item If product \(i\) is purchased, the platform assigns
    the purchase to a feasible copy \(c_t\in\mathcal C_t(i)\)
    and removes the requested interval from that copy's
    free-slot set:
    \(\mathcal F_{t+1,i,c_t}
    =\mathcal F_{t,i,c_t}\setminus I_t\).
    All other free-slot sets remain unchanged.
\end{enumerate}
If no customer arrives or the customer makes no purchase,
all free-slot sets remain unchanged.

\subsection{Markov Chain Choice}
\label{subsec:model-choice}

Conditional on an arrival in period \(t\), the customer chooses
from the preselected assortment \(S_t\). For any assortment
\(S\subseteq\cM\), let \(\phi_{ti}(S)\) denote the purchase
probability of product \(i\in S\), and let
\(\phi_{t0}(S):=1-\sum_{i\in S}\phi_{ti}(S)\) denote the
no-purchase probability. We set \(\phi_{ti}(S)=0\) for
\(i\in\cM\setminus S\). A purchase of product \(i\) generates
revenue \(r_{ti}\ge0\).

\textbf{Markov chain representation.}
For each period \(t\), the choice probabilities are generated
by a Markov chain on \(\cM\cup\{0\}\), where state \(0\)
represents no purchase and is absorbing. The chain has initial
distribution \(\lambda_t=(\lambda_{ti})_{i\in\cM\cup\{0\}}\)
and transition matrix
\(Q_t=(q_{tij})_{i,j\in\cM\cup\{0\}}\), with self-transitions
allowed. The underlying customer trajectories are independent
across periods and of the arrival process and the
platform's internal randomization. Given an assortment \(S\), the chain starts according
to \(\lambda_t\) and stops upon its first visit to \(S\cup\{0\}\).
If the initial state belongs to this set, it stops immediately.
If the stopping state is \(i\in S\), the customer purchases
product \(i\); if it is \(0\), the customer makes no purchase.

We assume that the submatrix of \(Q_t\) restricted to the product
states is transient. Consequently, the chain reaches
\(S\cup\{0\}\) almost surely for every \(S\subseteq\cM\),
including \(S=\varnothing\). The purchase probabilities are
therefore well defined by
\[
\phi_{ti}(S)
:=
\Prb_{\lambda_t,Q_t}
\left\{
i\text{ is the first state visited in }S\cup\{0\}
\right\},
\qquad i\in S,
\]
and the probabilities over \(S\cup\{0\}\) sum to one.

\textbf{Regularity and singleton purchase probabilities.}
The hitting-time representation implies the standard regularity
property \citep{blanchet2016markov}: for
\(S'\subseteq S\subseteq\cM\) and \(i\in S'\),
\(\phi_{ti}(S')\ge\phi_{ti}(S)\). In particular,
\begin{equation}
\phi_{ti}(S)
\le
a_{ti}:=\phi_{ti}(\{i\}),
\qquad i\in S.
\label{eq:singleton-upper}
\end{equation}
Thus, \(a_{ti}\) is the maximum purchase probability of
product \(i\) over all assortments containing it.

If \(a_{ti}=0\), nonnegativity and \eqref{eq:singleton-upper}
imply \(\phi_{ti}(S)=0\) for every \(S\subseteq\cM\).
Such a product is never purchased in period \(t\) and can be
omitted from that period's assortment. We therefore restrict
attention to products with \(a_{ti}>0\) in each period.

\textbf{MNL special case.}
The multinomial logit (MNL) model is a special case of Markov
chain choice \citep{blanchet2016markov}. When the outside option
has positive attractiveness, we normalize its attractiveness
to one. Under the classical MNL formulation
\citep{luce1959individual,mcfadden1974conditional},
\begin{equation}
\phi_{ti}(S)
=
\frac{u_{ti}}{1+\sum_{j\in S}u_{tj}},
\qquad i\in S,
\label{eq:mnl-choice}
\end{equation}
where \(u_{ti}>0\) is the normalized attractiveness of
product \(i\). Hence, \(a_{ti}=u_{ti}/(1+u_{ti})\). Remark~\ref{rem:mnl-zero-outside} treats the case of
zero outside-option attractiveness.

\subsection{Online Policies, Performance Benchmark, and Main Results}
\label{subsec:model-benchmark}

A feasible nonanticipative online policy may use internal
randomization to select its period-\(t\) assortment based on
the model primitives, the history through period \(t-1\),
and the current free-slot sets, without observing current
or future arrival and choice realizations. If a purchase occurs,
the policy assigns it to a feasible copy of the purchased product.
For an instance \(I\), let \(\Pi^{\rm on}(I)\) denote the set of
feasible nonanticipative online policies, and let \(V^\pi(I)\)
denote the expected revenue earned by policy
\(\pi\in\Pi^{\rm on}(I)\). Our performance benchmark is the
optimal online value,
\(V^{\rm on}(I):=\sup_{\pi\in\Pi^{\rm on}(I)}V^\pi(I)\).

A booking reserves a consecutive service interval on a single
copy, preventing that copy from serving later requests with
overlapping service intervals. Customer substitution makes
expected revenue depend nonlinearly on the offered assortment.
The platform must therefore coordinate assortment and assignment
decisions while tracking each copy's availability.
We establish the following approximation guarantees.

\begin{theorem}[Multi-capacity approximation guarantees]
\label{thm:main-guarantees}
Under the fixed-type Bernoulli-arrival model, there exist feasible nonanticipative policies with the following guarantees:
\begin{enumerate}[label=(\roman*),ref=(\roman*),leftmargin=*,itemsep=2pt]
\item \textit{Markov chain choice}. For every instance \(I\) under Markov chain choice, policy \(\pi_{\rm MC}\) satisfies
\begin{equation}\label{eq:markov-main}
V^{\pi_{\rm MC}}(I)
\ge
\frac{1-1/e}{2}V^{\rm on}(I)
\end{equation}

\item \textit{MNL choice}. For every instance \(I\) under MNL choice with positive outside-option attractiveness in every period, policy \(\pi_{\rm MNL}\) satisfies
\begin{equation}\label{eq:mnl-main}
V^{\pi_{\rm MNL}}(I)
\ge
\frac{1}{2}V^{\rm on}(I)
\end{equation}

\end{enumerate}
Both policies admit exact implementations in expected time polynomial in 
\(T,N,M\) and the size of the model input. The approximation factors are independent of the product capacities.
\end{theorem}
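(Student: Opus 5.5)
We prove Theorem~\ref{thm:main-guarantees} in three steps. First we build a relaxation that upper-bounds \(V^{\rm on}\). Then we show that a single period extracts half of that relaxation's revenue. Finally we implement the per-period targets dynamically while preserving the relaxation's marginals.

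\textbf{Step 1: the relaxation.} We formulate the copy-expanded sales-based linear program \(\SBLP\). Its state variables \(x_{tic}(J)\) give the probability that, at the start of period \(t\), copy \((i,c)\)'s maximal free interval containing \(I_t\) is \(J\). Its sales flows \(y_{tic}(J)\) split \(J\) into the two residual blocks. The product-level sales \(\sum_{c,J}y_{tic}(J)\) are constrained to equal \(p_t\) times the purchase probabilities of some randomized assortment, written in the compact Markov chain form of \citet{feldman2017revenue}: balance equations \(w+\)flows with \(\lambda_t,Q_t\). Given any online policy, we set the variables to the realized marginals under that policy. These marginals satisfy the flow-conservation constraints, so the relaxation's value is at least \(V^\pi(I)\), and hence \(\SBLP\ge V^{\rm on}(I)\). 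Aggregating copies symmetrically yields a program whose size does not depend on \(C_i\), which gives polynomial solvability. A symmetric lift recovers copy-level flows with the same objective.

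\textbf{Step 2: static extraction.} At the start of period \(t\), each copy \((i,c)\) is in a virtual state \(J\) with \(I_t\subseteq J\). It proposes independently with probability \(y_{tic}(J)/(p_t\,a_{ti}\,x_{tic}(J))\), suitably normalized so the probability is at most one, using the singleton bound \eqref{eq:singleton-upper}. Product \(i\) becomes eligible if at least one of its copies proposes, so products are included independently. We take \(S_t\) to be a revenue-maximizing assortment over the eligible products, which can be computed by an LP under Markov chain choice. Theorem~\ref{thm:general-static}, applied with inclusion probabilities \(\pi_i=1-\prod_c(1-\cdot)\) and the benchmark equal to the relaxation's randomized assortment, gives expected revenue at least \(\tfrac12\) of the period-\(t\) relaxation revenue. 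The bound holds without requiring \(\sum\pi_i\le1\), which is what makes it work for arbitrary multiplicities.

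\textbf{Step 3: dynamic implementation.} This is the main obstacle. After a purchase of product \(i\), we must update virtual copies so that two properties hold.
\begin{enumerate}[label=(\alph*)]
\item Averaging over the purchase outcome, each proposing copy is updated independently with its prescribed probability. This keeps virtual states independent across copies with the exact \(\SBLP\) marginals at every period.
\item The purchased product receives at least one update. We assign the real booking to that copy, which ensures virtual availability never exceeds actual availability.
\end{enumerate}
We reduce property (b) to a coupling problem. Given independent Bernoulli product updates \(U_i\) with \(\Prb(U_i=1)=\rho_i\) and purchase probabilities \(\phi_i\), we need a joint law with \(\text{purchase}=i\Rightarrow U_i=1\). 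By Hall's theorem this exists iff \(\sum_{i\in A}\phi_i\le 1-\prod_{i\in A}(1-\rho_i)\) for all \(A\). The right side is a coverage function, so it is submodular and defines a polymatroid. Using the greedy vertex representation \citep{edmonds2003submodular}, we write the feasible \(\phi\) as a mixture of priority orders and sample \(U\) conditionally on the observed purchase.

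Under general Markov chain choice, we verify the inequalities after scaling \(\phi\) by \(\gamma=1-1/e\). Step 2's normalization gives \(\sum_{i\in A}\phi_i\le\sum_{i\in A}\rho_i\) with \(\sum\rho_i\le1\) in the relevant regime. Then \(1-\prod(1-\rho_i)\ge1-e^{-\sum\rho_i}\ge(1-1/e)\sum\rho_i\), by concavity on \([0,1]\). This is exactly why we offer \(S_t\) with probability \(1-1/e\) and the empty set otherwise; the gate costs a factor \(1-1/e\), giving (i).

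Under MNL, we verify directly that \(\sum_{i\in A}u_i/(1+\sum_{S}u_j)\) lies below the coverage bound without scaling. This follows from the MNL form, since \(\phi_i\le a_i=u_i/(1+u_i)\) and the joint expression is dominated via \(1-\prod(1+u_i)^{-1}\ge\sum u_i/(1+\sum u_i)\). So no gate is needed, giving (ii).

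Lifting product updates to copies proceeds by conditionally sampling which proposing copies update given \(U_i\), matching the independent copy law. The remaining verifications are routine: per-period revenue is at least \(\gamma\cdot\tfrac12\) of the relaxation revenue, and summing over \(t\) and using Step 1 gives \eqref{eq:markov-main} and \eqref{eq:mnl-main}. Rejection sampling for the conditional laws yields expected polynomial time.
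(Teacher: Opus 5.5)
Your architecture is the paper's. It has the SBLP upper bound with symmetric aggregation, copy proposals with probability $y_{tic}/(p_ta_{ti}x_{tic})$, visibility combined with Theorem~\ref{thm:general-static}, a Hall/coverage-polymatroid coupling realized by priority rules, copy-level lifting, and a scalar gate. Writing the choice block of the SBLP in the compact Markov form, rather than as an assortment mixture solved by ellipsoid separation, is a legitimate simplification. Every feasible point of that compact polytope is a mixture of assortment purchase vectors, which is all Theorem~\ref{thm:general-static} needs.

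The genuine gaps are in the two gate verifications, which are exactly what fix the constants.

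\textbf{Markov gate.} Conditional on the proposals, a visible product with $m_i$ proposed copies is updated with probability $\eta_i=1-(1-pa_i)^{m_i}\ge pa_i$, not with probability $\rho_i=z_i/a_i$. If your $\rho_i$ is that proposal mass, then $\sum_{i\in A}\phi_i\le\sum_{i\in A}\rho_i$ fails whenever a rarely proposed product is realized and offered. The correct comparison $p\phi_i(S)\le pa_i\le\eta_i$ comes from regularity, not from the proposal normalization. Moreover, $x:=p\sum_{i\in A}a_i$ can exceed one, so you cannot restrict attention to a regime where the sum is at most one. You need $p\sum_{i\in A}\phi_i(S)\le\min\{1,x\}$, which also uses that total purchase probability is at most one. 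Combine this with $1-e^{-x}\ge(1-1/e)\min\{1,x\}$.

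\textbf{MNL gate.} Your inequality $1-\prod_{i\in A}(1+u_i)^{-1}\ge U_A/(1+U_A)$, with $U_A:=\sum_{i\in A}u_i$ and $U_S$ defined similarly, covers only the case $p=1$. For $p<1$, the coverage is $1-\prod_{i\in A}(1-pa_i)$ and the purchase mass is $pU_A/(1+U_S)$. The missing step is $1-\prod_{i\in A}(1-pa_i)\ge p\bigl[1-\prod_{i\in A}(1-a_i)\bigr]$. This follows from concavity of $s\mapsto1-\prod_{i\in A}(1-sa_i)$ on $[0,1]$ and its vanishing at $s=0$.

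\textbf{Complexity.} The complexity claim is also not covered, because rejection sampling does not address two things. First, it does not say how to compute a priority-rule decomposition of the purchase vector. The paper solves a minimum-mass representation LP by the ellipsoid method, with the greedy vertex as separation oracle, and then extracts at most $m+1$ rules. Second, it does not give exact sampling under the irrational gate $1-1/e$. The paper decomposes the rational ungated vector, scales the weights by $\gamma$, appends the empty rule, and samples using certified dyadic bounds on $\gamma$.

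\textbf{Smaller points.} Your $x_{tic}(J)$ must be the probability that $J$ is one of the copy's maximal free intervals, not only the one containing $I_t$. Otherwise the balance equations cannot link periods with different requests. Also, to invoke the $1/2$ case of Theorem~\ref{thm:general-static}, you should verify $1-\prod_c(1-\rho_{ic})\ge1-e^{-\rho_i}\ge\rho_i/(1+\rho_i)$ with $\rho_i=\sum_c\rho_{ic}$.
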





The policies \(\pi_{\rm MC}\) and \(\pi_{\rm MNL}\) are the
Markov and MNL versions of Algorithm~\ref{alg:general-policy} in Section \ref{sec:performance},
respectively; their construction and analysis follow below.
The next section introduces a relaxation that links
copy-level availability to product-level choice probabilities
and provides the target flows used to construct these policies.


\section{Sales-Based Fluid Relaxation}\label{sec:sblp}

We first formulate the relaxation and show that its optimal value
upper-bounds the optimal online value. We then derive an equivalent
aggregate formulation that can be solved efficiently.

\subsection{Copy-Expanded SBLP and Its Benchmark Bound}
\label{subsec:sblp-formulation}

The availability of a physical copy is described by its free-slot
set, which can be any subset of \(\cN\). Tracking its distribution
can therefore require up to \(2^N\) states per copy. Our relaxation
decomposes each free-slot set into maximal consecutive intervals
and tracks their marginal probabilities without retaining their
joint distribution. Related interval-based representations appear
in revenue-management and seat-allocation models with interval
demand \citep{rusmevichientong2023revenue,zhu2023assign} and form
the basis of the consecutive-stay relaxation of \citet{huwu2026}.

Recall that \(\mathcal F_{tic}\subseteq\cN\) denotes the free service
slots on copy \((i,c)\) at the beginning of period \(t\). For any
free-slot set \(\mathcal F\subseteq\cN\), an interval
\([a,b]\subseteq\mathcal F\) is a \emph{maximal free interval} if
it cannot be extended in either direction while remaining in
\(\mathcal F\). Its endpoints must therefore satisfy both conditions:
\(a=1\) or \(a-1\notin\mathcal F\), and \(b=N\) or
\(b+1\notin\mathcal F\). We denote this relationship by
\([a,b]\sim\mathcal F\). For example, the maximal free intervals
of \(\mathcal F=\{1,2,3,6,7\}\) are \([1,3]\) and \([6,7]\).
The interval \([1,2]\) is free but not maximal because slot \(3\)
is also free. Every free-slot set has a unique decomposition into
disjoint maximal free intervals.

A copy can fulfill the request \(I_t=[\ell_t,r_t]\) if and only if
one of its maximal free intervals contains \(I_t\). Since these
intervals are disjoint, the containing interval is unique whenever
it exists.
Fulfilling the request from \([a,b]\) removes \(I_t\), leaving
the left residual interval \([a,\ell_t-1]\) when \(a<\ell_t\)
and the right residual interval \([r_t+1,b]\) when \(r_t<b\).
Thus, each sale removes one maximal free interval and creates
at most two smaller ones, leaving all other maximal free intervals
unchanged.

Let \(\cQ:=\{[a,b]:1\le a\le b\le N\}\) denote the set of all
\(N(N+1)/2\) service intervals. For each \(t\in\cT\), \(i\in\cM\),
\(c\in[C_i]\), and \([a,b]\in\cQ\), we introduce nonnegative
variables with the following intended probabilistic interpretations:
\begin{itemize}
\item
\(x_{tic}([a,b])\) is the probability that \([a,b]\) is a maximal
free interval of copy \((i,c)\) at the beginning of period \(t\).

\item
\(y_{tic}([a,b])\) is the joint probability that \([a,b]\) is a
maximal free interval of copy \((i,c)\) at the beginning of
period \(t\), the customer arrives and purchases product \(i\),
and the purchase is assigned to this copy and fulfilled from
\([a,b]\).
\end{itemize}

We remark that a copy may have several maximal free intervals simultaneously, so
the events represented by the \(x\)-variables are not mutually
exclusive. Under this interpretation,
\(\sum_{[a,b]\in\cQ}x_{tic}([a,b])\) equals the expected number
of maximal free intervals on copy \((i,c)\) at the beginning of
period \(t\) and need not equal one. 

Following the sales-based linear programming framework of
\citet{gallego2015general}, we combine copy-level interval flows
with product-level choice variables. For each period \(t\) and
assortment \(S\subseteq\cM\), let \(\theta_{tS}\ge0\) denote the
mixture weight on \(S\), and let \(z_{ti}\) denote the resulting
purchase probability of product \(i\), conditional on an arrival.

The copy-expanded sales-based linear program (SBLP) is
\begin{equation}
\SBLP^C(I)
=
\max_{\substack{x\ge0,\;y\ge0\\ \theta\ge0,\;z\ge0}}
\quad
\sum_{t\in\cT}
\sum_{i\in\cM}
\sum_{c=1}^{C_i}
\sum_{[a,b]\in\cQ}
r_{ti}y_{tic}([a,b])
\label{eq:sblp-objective}
\end{equation}
subject to
\begin{align}
\sum_{S\subseteq\cM}\theta_{tS}
&=
1,
&& t\in\cT,
\label{eq:sblp-theta}\\
z_{ti}
&=
\sum_{S\ni i}\theta_{tS}\phi_{ti}(S),
&& t\in\cT,\ i\in\cM,
\label{eq:sblp-choice}\\
\sum_{c=1}^{C_i}
\sum_{[a,b]\in\cQ}
y_{tic}([a,b])
&=
p_tz_{ti},
&& t\in\cT,\ i\in\cM,
\label{eq:sblp-sales-link}\\
y_{tic}([a,b])
&\le
p_ta_{ti}x_{tic}([a,b])
\one\{I_t\subseteq[a,b]\},
&&
\substack{
t\in\cT,\ i\in\cM,\\
c\in[C_i],\ [a,b]\in\cQ,
}
\label{eq:sblp-state-cap}\\
x_{tic}([a,b])
&=
x_{t-1,i,c}([a,b])
-y_{t-1,i,c}([a,b])
\notag\\
&\quad+
\sum_{\substack{
b'\ge r_{t-1}\\
{[a,b']\in\cQ}
}}
\one\{b+1=\ell_{t-1}\}
y_{t-1,i,c}([a,b'])
\notag\\
&\quad+
\sum_{\substack{
a'\le\ell_{t-1}\\
{[a',b]\in\cQ}
}}
\one\{r_{t-1}=a-1\}
y_{t-1,i,c}([a',b]),
&&
\substack{
t=2,\ldots,T,\ i\in\cM,\\
c\in[C_i],\ [a,b]\in\cQ,
}
\label{eq:sblp-balance}\\
x_{1ic}([a,b])
&=
\one\{[a,b]=[1,N]\},
&&
\substack{
i\in\cM,\ c\in[C_i],\\
{[a,b]\in\cQ.}
}
\label{eq:sblp-boundary}
\end{align}

Equations~\eqref{eq:sblp-theta}--\eqref{eq:sblp-sales-link}
connect assortment choices to sales. The weights \(\theta_{tS}\)
form a distribution over assortments, and \(z_{ti}\) is the
corresponding purchase probability conditional on an arrival.
The unconditional sales probability \(p_tz_{ti}\) is then allocated
across copies and intervals through the \(y\)-variables.
By \eqref{eq:sblp-sales-link}, the objective can equivalently be
written as
\(\sum_{t\in\cT}p_t\sum_{i\in\cM}r_{ti}z_{ti}\).

Constraint~\eqref{eq:sblp-state-cap} bounds the sales probability
associated with each copy and interval. Conditional on the current
free-slot sets and the preselected assortment, the arrival
probability remains \(p_t\). Given an arrival, the purchase
probability of product \(i\) is at most \(a_{ti}\) by
\eqref{eq:singleton-upper}. Fulfillment from \([a,b]\) further
requires that it be a maximal free interval of copy \((i,c)\)
and contain \(I_t\). Since the maximal-interval event has
probability \(x_{tic}([a,b])\), these observations yield
\eqref{eq:sblp-state-cap}.

Equations~\eqref{eq:sblp-balance}--\eqref{eq:sblp-boundary}
describe the evolution and initialization of the interval marginals.
In \eqref{eq:sblp-balance}, the first two terms give the mass of
\([a,b]\) that persists after subtracting sales fulfilled from
that interval. The first inflow term records sales from a parent
interval \([a,b']\) that create \([a,b]\) as the left residual,
requiring \(b=\ell_{t-1}-1\). The second records sales from a
parent interval \([a',b]\) that create \([a,b]\) as the right
residual, requiring \(a=r_{t-1}+1\).
The initial condition \eqref{eq:sblp-boundary} reflects that every
copy is initially free over \([1,N]\).

Using the probabilistic interpretations above, every feasible
nonanticipative online policy \(\pi\) induces a feasible SBLP
solution with objective value \(V^\pi(I)\). Taking the supremum
over such policies yields the following upper bound.

\begin{proposition}[SBLP upper bound]\label{prop:sblp-upper}
For every instance \(I\),
\begin{equation}
V^{\rm on}(I)
\le
\SBLP^C(I).
\label{eq:sblp-upper}
\end{equation}
\end{proposition}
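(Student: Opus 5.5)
Fix an arbitrary feasible nonanticipative policy \(\pi\in\Pi^{\rm on}(I)\). From it we build an SBLP solution whose objective equals \(V^\pi(I)\); taking the supremum over \(\pi\) then gives \eqref{eq:sblp-upper}. All probabilities refer to the joint law of arrivals, customer trajectories, and the internal randomization of \(\pi\). Let \(H_t\) be the history through period \(t-1\) together with the period-\(t\) internal randomization, so \(H_t\) determines the preselected assortment \(S_t\) and the free-slot sets \(\mathcal F_{tic}\). The variables are defined by the intended interpretations:
\[
x_{tic}([a,b])=\Prb\{[a,b]\sim\mathcal F_{tic}\},\qquad
\theta_{tS}=\Prb\{S_t=S\},\qquad
z_{ti}=\sum_{S\ni i}\theta_{tS}\phi_{ti}(S).
\]
Let \(y_{tic}([a,b])\) be the probability that \([a,b]\sim\mathcal F_{tic}\), the customer arrives and buys \(i\), the purchase is assigned to copy \(c\), and \(I_t\subseteq[a,b]\). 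Nonnegativity, \eqref{eq:sblp-theta}, and \eqref{eq:sblp-choice} hold by construction, and \eqref{eq:sblp-boundary} holds because \(\mathcal F_{1ic}=\cN\).

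For \eqref{eq:sblp-sales-link}: if \(i\) is purchased in period \(t\), it is assigned to exactly one feasible copy \(c_t\). Since maximal free intervals are disjoint, exactly one \([a,b]\sim\mathcal F_{t,i,c_t}\) contains \(I_t\). Summing \(y\) over \(c\) and \([a,b]\) therefore gives \(\Prb\{\text{arrival},\ \text{purchase of } i\}\). The arrival is independent of \(H_t\), and the trajectory is independent of both, so conditioning on \(H_t\) gives
\[
\Prb\{\text{arrival},\ \text{purchase of } i\}=p_t\,\E[\phi_{ti}(S_t)]=p_tz_{ti}.
\]
The revenue earned in period \(t\) is \(r_{ti}\) on exactly these events, so the objective \eqref{eq:sblp-objective} equals \(V^\pi(I)\).

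For \eqref{eq:sblp-state-cap}, again condition on \(H_t\). The event \(\{[a,b]\sim\mathcal F_{tic}\}\) is \(H_t\)-measurable. Given \(H_t\), the customer arrives with probability \(p_t\) and buys \(i\) with probability \(\phi_{ti}(S_t)\le a_{ti}\) by \eqref{eq:singleton-upper}. The assignment to \(c\) is one further restriction, and the whole event is empty unless \(I_t\subseteq[a,b]\). This yields \(y_{tic}([a,b])\le p_ta_{ti}x_{tic}([a,b])\one\{I_t\subseteq[a,b]\}\).

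The main bookkeeping step is the balance equation \eqref{eq:sblp-balance}, and it follows from the pathwise interval-splitting description given above. In period \(t-1\), copy \((i,c)\) changes state only when a sale is fulfilled from the unique maximal interval \([a',b']\supseteq I_{t-1}\). That sale removes \([a',b']\), creates \([a',\ell_{t-1}-1]\) if \(a'<\ell_{t-1}\) and \([r_{t-1}+1,b']\) if \(r_{t-1}<b'\), and leaves every other maximal interval unchanged. Consequently, pathwise,
\begin{align*}
\one\{[a,b]\sim\mathcal F_{tic}\}
&=\one\{[a,b]\sim\mathcal F_{t-1,i,c}\}-\one\{\text{sale on }(i,c)\text{ from }[a,b]\}\\
&\quad+\sum_{b'}\one\{b+1=\ell_{t-1}\}\one\{\text{sale on }(i,c)\text{ from }[a,b']\}\\
&\quad+\sum_{a'}\one\{r_{t-1}=a-1\}\one\{\text{sale on }(i,c)\text{ from }[a',b]\}.
\end{align*}
A residual is never itself a pre-existing maximal interval, because the residuals border the newly reserved slots. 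The index ranges \(b'\ge r_{t-1}\) and \(a'\le\ell_{t-1}\) are automatic, since a sale from a parent interval requires it to contain \(I_{t-1}\). When \(a'=\ell_{t-1}\), no left residual is created, and the condition \(b+1=\ell_{t-1}\ge a\) then fails, so such a sale is correctly not counted; the same holds symmetrically on the right. Taking expectations gives \eqref{eq:sblp-balance}.

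All constraints therefore hold, so \(V^\pi(I)\le\SBLP^C(I)\) for every feasible \(\pi\), and taking the supremum over \(\pi\) yields \eqref{eq:sblp-upper}.
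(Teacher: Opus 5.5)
Your proof is correct and follows the paper's argument essentially step for step. You induce the SBLP variables from an arbitrary nonanticipative policy through their intended probabilistic interpretations, verify the sales-link and state-capacity constraints by conditioning on the pre-arrival information (with regularity giving $\phi_{ti}(S_t)\le a_{ti}$), and obtain the balance equations by taking expectations of the pathwise interval-splitting identity. The only cosmetic difference is the order of definitions: you define $z_{ti}$ by the mixture formula and then derive the sales link, whereas the paper defines $z_{ti}$ as the conditional purchase probability and derives the mixture formula.
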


\subsection{Exact Symmetry Reduction and Tractability}
\label{subsec:sblp-symmetry}

Copies of the same product have identical initial conditions
and interval-flow constraints. We therefore aggregate their
interval state and sales-flow variables to reduce the size
of the relaxation. Define
\begin{equation}
X_{ti}([a,b])
:=
\sum_{c=1}^{C_i}x_{tic}([a,b]),
\qquad
Y_{ti}([a,b])
:=
\sum_{c=1}^{C_i}y_{tic}([a,b]).
\label{eq:aggregate-xy}
\end{equation}
The choice constraints
\eqref{eq:sblp-theta}--\eqref{eq:sblp-choice} remain unchanged,
since they involve only product-level variables. Substituting
\eqref{eq:aggregate-xy} into the objective
\eqref{eq:sblp-objective} gives
\(\sum_{t\in\cT}\sum_{i\in\cM}\sum_{[a,b]\in\cQ}
r_{ti}Y_{ti}([a,b])\).

The sales-linking constraint \eqref{eq:sblp-sales-link} becomes
\[
\sum_{[a,b]\in\cQ}Y_{ti}([a,b])
=
p_tz_{ti},
\qquad t\in\cT,\ i\in\cM.
\]
Summing the interval feasibility bounds
\eqref{eq:sblp-state-cap} over \(c\in[C_i]\) yields
\begin{equation}
0
\le
Y_{ti}([a,b])
\le
p_ta_{ti}X_{ti}([a,b])
\one\{I_t\subseteq[a,b]\},
\label{eq:aggregate-state-cap}
\end{equation}
for every \(t\in\cT\), \(i\in\cM\), and \([a,b]\in\cQ\).
Similarly, summing the balance equations
\eqref{eq:sblp-balance} over \(c\in[C_i]\) gives, for
\(t=2,\ldots,T\),
\begin{align}
X_{ti}([a,b])
={}&
X_{t-1,i}([a,b])
-
Y_{t-1,i}([a,b])
+
\sum_{\substack{
b'\ge r_{t-1}\\
{[a,b']\in\cQ}
}}
\one\{b+1=\ell_{t-1}\}
Y_{t-1,i}([a,b'])
\notag\\
&+
\sum_{\substack{
a'\le\ell_{t-1}\\
{[a',b]\in\cQ}
}}
\one\{r_{t-1}=a-1\}
Y_{t-1,i}([a',b]).
\label{eq:aggregate-balance}
\end{align}
Finally, summing the initial conditions
\eqref{eq:sblp-boundary} over copies gives
\(X_{1i}([a,b])=C_i\one\{[a,b]=[1,N]\}\).
All variables \(X,Y,\theta,z\) are nonnegative.

Aggregation maps every feasible copy-expanded solution to
a feasible aggregate solution with the same objective value.
Conversely, every feasible aggregate solution admits the
symmetric lift
\[
x_{tic}([a,b])
=
\frac{X_{ti}([a,b])}{C_i},
\qquad
y_{tic}([a,b])
=
\frac{Y_{ti}([a,b])}{C_i},
\qquad c\in[C_i],
\]
with \(\theta\) and \(z\) unchanged. Each copy's interval
constraints and initial conditions follow by dividing their
aggregate counterparts by \(C_i\). The lifted sales flows
sum to the original aggregate flows, preserving both the
sales-linking constraints and the objective value.
This establishes the following result.

\begin{proposition}[Exact symmetry reduction]
\label{prop:sblp-symmetry}
The aggregate and copy-expanded SBLP formulations have the same
optimal value. The aggregate formulation contains \(O(TMN^2)\)
interval state and sales-flow variables, independently of the
product capacities \(C_i\).
\end{proposition}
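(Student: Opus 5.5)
The plan is to prove equality of optimal values by exhibiting objective-preserving maps in both directions between the two feasible regions. The variable count then follows by direct counting.

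\emph{Aggregation (copy-expanded to aggregate).} Take any feasible copy-expanded solution \((x,y,\theta,z)\) and define \(X,Y\) by \eqref{eq:aggregate-xy}, leaving \(\theta,z\) unchanged. Constraints \eqref{eq:sblp-theta}--\eqref{eq:sblp-choice} involve only \(\theta,z\) and are unaffected. The aggregate sales link is exactly the sum of \eqref{eq:sblp-sales-link} over \(c\). Summing \eqref{eq:sblp-state-cap} over \(c\in[C_i]\) gives \eqref{eq:aggregate-state-cap}, since \(p_t a_{ti}\) and the indicator do not depend on \(c\). The balance equations \eqref{eq:sblp-balance} are linear with coefficients independent of \(c\), so summing them yields \eqref{eq:aggregate-balance}. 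Summing \eqref{eq:sblp-boundary} gives \(X_{1i}=C_i\one\{[a,b]=[1,N]\}\), and nonnegativity is preserved. The objective \eqref{eq:sblp-objective} becomes \(\sum r_{ti}Y_{ti}([a,b])\) with the same value, so the aggregate optimum is at least \(\SBLP^C(I)\).

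\emph{Symmetric lift (aggregate to copy-expanded).} Given a feasible aggregate solution, set \(x_{tic}=X_{ti}/C_i\) and \(y_{tic}=Y_{ti}/C_i\) for every \(c\in[C_i]\). Products with \(C_i=0\) have no copies; for them \(X_{ti}\equiv 0\) follows inductively from the initial condition and \eqref{eq:aggregate-state-cap}, so \(Y_{ti}\equiv0\) and there is nothing to lift.
\begin{itemize}
\item Dividing \eqref{eq:aggregate-state-cap}, \eqref{eq:aggregate-balance}, and the aggregate initial condition by \(C_i\) gives \eqref{eq:sblp-state-cap}, \eqref{eq:sblp-balance}, and \eqref{eq:sblp-boundary} for each copy. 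This works because each of these constraints is homogeneous linear in \((x,y)\), and the initial right-hand side scales as \(C_i\cdot\one\{\cdot\}\).
\item Summing \(y_{tic}\) over the \(C_i\) copies returns \(Y_{ti}\), which yields \eqref{eq:sblp-sales-link} and the same objective.
\end{itemize}
Hence the copy-expanded optimum is at least the aggregate optimum, and the two values coincide.

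\emph{Counting.} The aggregate variables \(X_{ti}([a,b])\) and \(Y_{ti}([a,b])\) are indexed by \(t\in\cT\), \(i\in\cM\), and \([a,b]\in\cQ\) with \(|\cQ|=N(N+1)/2\). This gives \(2TM\cdot N(N+1)/2=O(TMN^2)\) variables, with no dependence on \(C_i\).

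There is no real obstacle. The only care needed is the degenerate case \(C_i=0\) in the lift and verifying that every copy-level constraint is homogeneous with \(c\)-independent coefficients, which is immediate from the displayed formulation.
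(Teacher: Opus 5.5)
Your proof is correct and follows the paper's argument essentially step for step: you aggregate by summing copy-level constraints whose coefficients do not depend on \(c\), you use the symmetric lift \(x_{tic}=X_{ti}/C_i\), \(y_{tic}=Y_{ti}/C_i\), and you count \(2TM|\cQ|=TMN(N+1)\) variables. Your handling of the degenerate case \(C_i=0\), where \(X_{ti}\), \(Y_{ti}\), and hence \(z_{ti}\) are forced to zero, is a small refinement, since the paper's lift simply assumes \(C_i\ge1\).
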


Although aggregation reduces the interval variables, the
assortment-mixture variables \(\theta_{tS}\) remain exponential
in number. After eliminating \(z\) using
\eqref{eq:sblp-choice}, the unrestricted dual multipliers
\(\alpha_t\) and \(\beta_{ti}\) for \eqref{eq:sblp-theta}
and the aggregate form of \eqref{eq:sblp-sales-link},
respectively, yield the constraints
\[
\alpha_t
\ge
p_t\sum_{i\in S}\beta_{ti}\phi_{ti}(S),
\qquad t\in\cT,\ S\subseteq\cM.
\]
Checking these constraints amounts to maximizing their
right-hand side for each period. By regularity, it suffices
to solve the following problem for eligible products
\(A\subseteq\cM\) and weights \(w\in\R_+^M\):
\begin{equation}
\max_{S\subseteq A}
\sum_{i\in S}w_i\phi_{ti}(S).
\label{eq:weighted-static}
\end{equation}
For dual separation, take
\(A=\{i\in\cM:\beta_{ti}>0\}\) and
\(w_i=\max\{\beta_{ti},0\}\).
Products outside \(A\) remain transition states but cannot
be offered.
This Markov assortment problem can be solved in polynomial
time through an optimal-stopping formulation
\citep{blanchet2016markov}. Together with direct checks of the remaining dual constraints,
this oracle allows the constructive ellipsoid method to solve
the aggregate SBLP and recover an optimal primal solution
without enumerating all assortments
\citep{grotschel1981ellipsoid}.



Let \(C_{\rm tot}:=\sum_{i\in\cM}C_i\) denote the total number
of physical copies. Accounting also for the construction
of the symmetric copy-level lift yields the following result.

\begin{proposition}[Polynomial solvability of the aggregate relaxation]
\label{prop:sblp-polytime-general}
An optimal aggregate SBLP solution and its symmetric
copy-level lift can be computed in time polynomial in
\(T,N,M,C_{\rm tot}\) and the size of the model input.
The assortment mixture can be represented using at most
\(M+1\) positive weights per period.
\end{proposition}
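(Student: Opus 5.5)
We solve the aggregate SBLP through its dual, using the constructive ellipsoid method of \citet{grotschel1981ellipsoid}, and then apply the symmetric lift.

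\textbf{Dual and separation.} First eliminate \(z\) via \eqref{eq:sblp-choice}. In the aggregate program the primal variables are then \(\theta_{tS}\) (exponentially many), \(X_{ti}([a,b])\) and \(Y_{ti}([a,b])\) (there are \(O(TMN^2)\) of these by Proposition~\ref{prop:sblp-symmetry}). The constraints are \eqref{eq:sblp-theta}, the aggregate sales link, \eqref{eq:aggregate-state-cap}, \eqref{eq:aggregate-balance}, and the aggregate boundary conditions. The dual therefore has polynomially many variables: \(\alpha_t\), \(\beta_{ti}\), multipliers for the caps, and multipliers for the balance and boundary constraints.

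The dual constraints indexed by \(X\) and \(Y\) are polynomially many and can be checked directly. The constraints indexed by \(\theta_{tS}\) read \(\alpha_t\ge p_t\sum_{i\in S}\beta_{ti}\phi_{ti}(S)\).

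\textbf{Reduction to weighted assortment problems.} For fixed \(t\), the violation check is solved by problem \eqref{eq:weighted-static} with \(A=\{i:\beta_{ti}>0\}\) and \(w_i=\max\{\beta_{ti},0\}\). This reduction is justified by regularity \eqref{eq:singleton-upper}. Take any \(S\) and delete the products \(j\in S\) with \(\beta_{tj}\le0\). For each remaining \(i\), regularity says \(\phi_{ti}(S')\ge\phi_{ti}(S)\) for \(S'\subseteq S\). The remaining terms have positive weights, so they do not decrease, and the dropped terms were nonpositive. Hence the maximum of \(\sum_{i\in S}\beta_{ti}\phi_{ti}(S)\) over all \(S\) equals the maximum of \(\sum_{i\in S}w_i\phi_{ti}(S)\) over \(S\subseteq A\).

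Problem \eqref{eq:weighted-static} is solved exactly in polynomial time via the optimal-stopping LP of \citet{blanchet2016markov}. In that LP, the states outside \(A\) are forced to continue. The LP returns a maximizing set \(S\), which gives a separating hyperplane. All data are rational, so the encoding lengths of the vertices are polynomially bounded. The ellipsoid method therefore optimizes the dual in polynomial time.

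\textbf{Primal recovery.} The equivalence of separation and optimization also yields an optimal primal solution. Restrict the primal to the polynomially many assortments generated as separating constraints during the run. The dual of this restricted primal has the same optimum, so we solve the restricted primal LP directly in polynomial time.

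\textbf{Sparse assortment mixture.} For each \(t\), fix the optimal \(z_{t\cdot}\in\R^M\). The set of feasible \(\theta_{t\cdot}\) is then a polytope defined by \(M+1\) equality constraints: the \(M\) equations \eqref{eq:sblp-choice} and the normalization \eqref{eq:sblp-theta}. A basic feasible solution of this polytope has at most \(M+1\) positive weights. Such a solution can be computed from the polynomial-size support by standard Carathéodory-type pivoting.

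\textbf{Symmetric lift.} Finally, write \(x=X/C_i\) and \(y=Y/C_i\) for every copy. This requires \(O(TN^2C_{\rm tot})\) arithmetic operations. This step is the source of the dependence on \(C_{\rm tot}\), and its correctness was verified before Proposition~\ref{prop:sblp-symmetry}.

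\textbf{Main obstacle.} The hardest step is the rigorous oracle argument. Two points need care. First, the Markov stopping LP must return an exact optimal set with polynomial bit size. Second, the separation-to-primal recovery must respect the encoding-length bounds. I would handle both with standard facet-complexity bounds, since all constraint coefficients are products of the input data \(p_t\), \(a_{ti}\), and \(\phi_{ti}(S)\), which are rational numbers with polynomially bounded encoding lengths computable by solving linear systems in \(Q_t\).
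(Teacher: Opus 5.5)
Your proposal is correct and follows essentially the same route as the paper's proof. Regularity reduces dual separation to the nonnegative-weight Markov assortment problem, which the optimal-stopping LP solves; you then recover the primal on polynomially many assortment columns via the ellipsoid method, reduce each period to a basic solution with at most \(M+1\) positive weights, and apply the symmetric lift. The one step you assert rather than prove is that the primal restricted to the generated columns attains the full optimum. The paper justifies it as follows: it certifies infeasibility of the dual system with the cut \(b^{\mathsf T}\xi+\sum_t\alpha_t\le R^*-\varepsilon\), where \(\varepsilon=2^{-2B-1}\) is a rational-gap threshold; it adds the empty assortment so the restricted primal stays feasible; and it shows the auxiliary bounding box can be removed without invalidating the certificate.
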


Through the symmetric lift, we assign identical target flows to copies of the same product. Explicitly representing these flows requires \(O(TN^2C_{\rm tot})\) entries. In the online policy, we also track actual and virtual availability separately for each copy. Thus, although aggregation makes the interval-variable count independent of product capacities, constructing the copy-level flows and implementing the policy still depend on \(C_{\rm tot}\). Since \(C_i\le T\), we have \(C_{\rm tot}\le MT\), so this dependence remains polynomial in the model dimensions. We use these target flows to generate proposals in Section~\ref{sec:proposal-revenue}. Before presenting that construction, we establish a general revenue-extraction bound for independent candidate sets.


\section{Revenue Extraction under Markov Chain Choice}
\label{sec:general-static}

We establish a static revenue bound for independent candidate sets relative to an arbitrary randomized-assortment benchmark. The result is independent of the consecutive-stay setting and will later be applied to the candidate sets generated by our copy-proposal mechanism.

\textbf{Choice Model and Benchmark.}
Consider the Markov chain choice model defined in
Subsection~\ref{subsec:model-choice}, with the time index suppressed.
Let \(r_i\ge0\) denote the revenue of product \(i\), and recall
its singleton purchase probability \(a_i:=\phi_i(\{i\})\).
Write \(R(S):=\sum_{i\in S}r_i\phi_i(S)\) for the expected revenue
from offering assortment \(S\), and define
\(f(A):=\max_{S\subseteq A}R(S)\) for a candidate set
\(A\subseteq\cM\).

Let \(B\) be an arbitrary random benchmark assortment with
\(\Prb(B=S)=\theta_S\); its product-inclusion indicators need
not be independent. Define the benchmark purchase
probabilities as
\(z_i:=\E[\phi_i(B)]=\sum_{S\ni i}\theta_S\phi_i(S)\)
and its expected revenue as
\(W:=\sum_{i\in\cM}r_i z_i\).
We assume \(W>0\); otherwise, the revenue lower bound
follows from nonnegativity.

\textbf{Independent Candidate Sets.}
Let \(P\subseteq\cM\) have mutually independent product-inclusion
indicators, with \(p_i:=\Prb(i\in P)\).
Define the normalized benchmark target as \(\rho_i:=z_i/a_i\)
when \(a_i>0\), and set \(\rho_i:=0\) when \(a_i=0\).
Regularity gives \(0\le z_i\le a_i\), so the latter convention
applies only when \(z_i=0\), and \(0\le\rho_i\le1\).
For a realized candidate set \(P\), \(f(P)\) is the maximum expected revenue obtainable by offering an assortment contained in \(P\).

\begin{theorem}[Revenue extraction under Markov choice]
\label{thm:general-static}
For any independent candidate set \(P\) with
\(\Prb(i\in P)=p_i\),
\begin{equation}
\E[f(P)]
\ge
\left(
\min_{i:z_i>0}
\frac{p_i}{p_i+\rho_i(1-p_i)}
\right)
\sum_{i\in\cM}r_i z_i.
\label{eq:general-static}
\end{equation}
In particular, if
\(p_i\ge\rho_i/(1+\rho_i)\) for every product with \(z_i>0\),
then \(\E[f(P)]\ge\frac12\sum_{i\in\cM}r_i z_i\).
\end{theorem}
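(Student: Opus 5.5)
The plan is to reinterpret $f(P)$ as an optimal-stopping value. Under Markov chain choice, offering $S$ means stopping the customer's path at its first visit to $S\cup\{0\}$, and $R(S)$ is the revenue collected at the stopping state. Since $f(P)=\max_{S\subseteq P}R(S)$, $f(P)$ is the value of the problem in which stopping is allowed only at states of $P$ (and at $0$). By the optimal-stopping characterization of \citet{blanchet2016markov}, this value is attained by a deterministic set rule. It therefore weakly dominates every randomized, history-dependent stopping rule that stops only at states of $P$. Hence
\[
\E[f(P)]\ \ge\ \E\bigl[\text{revenue of any randomized rule } \tau_P \text{ stopping only in } P\bigr],
\]
and it suffices to build one rule whose expected revenue is at least the stated fraction of $W=\sum_i r_i z_i$.

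\textbf{Step 1 (benchmark as a stopping rule and continuation decomposition).}
Draw $B\sim\theta$ independently of $P$ and of the path. Under $B$ the path stops at its first visit to $B\cup\{0\}$, and the stopping probability at $i$ is $z_i$. Write $h_i(\cdot)$ for the event that the path reaches $i$ before $0$ with no earlier $B$-state. Then $z_i=\Prb(h_i)$, while $a_i=\Prb(\text{path reaches } i \text{ before } 0)$, so $\rho_i=z_i/a_i$ is the conditional probability that the benchmark "accepts" a visit to $i$.

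The continuation decomposition is the strong Markov property at the stopping time: when the path reaches a benchmark state $j$ with $j\notin P$, it restarts from $j$. Therefore, conditional on not yet stopping, the future evolves like a fresh chain started at $j$.

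\textbf{Step 2 (simultaneous independent thinning).}
Consider the rule that, at every visit to $i$, stops iff $i\in P$ and $i$ is "accepted". Acceptance will be an independent coin whose probability is tuned through a thinning parameter. Applying the restart argument of Step 1 repeatedly, each time the path reaches an accepted state that is not in $P$, it is as if a fresh benchmark trial begins from that state.

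This gives a geometric-renewal structure. Each renewal stops at $i$ with probability proportional to $p_i\rho_i$ per unit of visit mass $a_i$, and the renewal continues with probability proportional to $(1-p_i)\rho_i$. Because $P$ is independent across products and independent of $B$ and the path, these factors multiply cleanly. Summing the geometric series per product gives a per-product factor of the form $p_i/(p_i+\rho_i(1-p_i))$ on the purchase mass of $i$. Independence of the thinnings for different products is what makes all the factors hold simultaneously.

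\textbf{Step 3 (collect and specialize).}
Multiply each product's recovered purchase mass by $r_i\ge0$ and lower-bound every factor by the minimum over $\{i:z_i>0\}$; this yields \eqref{eq:general-static}. The corollary is algebra: $p\ge\rho/(1+\rho)$ is equivalent to $p(1+\rho)\ge\rho$, that is, to $p\ge \rho(1-p)$, and this gives $p/(p+\rho(1-p))\ge 1/2$.

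\textbf{Main obstacle.}
The hard step is Step 2: making the renewal accounting rigorous when the path revisits states and when $B$ has dependent inclusions. Two points need care. First, restarting from a rejected state $j\notin P$ must not bias the conditional law of $P$ on states visited later. I would handle this by revealing the coordinates of $P$ lazily, only at first visits, and re-randomizing the acceptance coins per visit so that every per-visit decision is independent of the past. Second, the renewal mass from a restart at $j$ must be compared with the original benchmark mass $z_i$ rather than with some other assortment's mass. I would first try to show that the expected revenue collected after a restart at any state $j$ is at least the revenue obtained by the same rule applied to the chain started at $j$. Then I would use a linear-programming (occupation-measure) inequality in the visit probabilities $a_i$, normalized by $\rho_i$, to close the recursion.
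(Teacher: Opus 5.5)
Your opening reduction is correct: $f(P)$ is the optimal-stopping value with stopping allowed only in $P$, so it dominates any randomized rule that stops only in $P$. This is more than needed, since the paper only uses $f(P)\ge R(Q)$ for a random $Q\subseteq P$. Step~3 is also fine.

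The gap is Step~2, which carries the whole theorem, and the mechanism you sketch there does not work as described.
\begin{itemize}
\item \emph{No renewal in terms of $z$.} The masses $z_i$ refer to paths started from $\lambda$. After a ``restart'' at a rejected state $j$, the future stopping probabilities are those of the chain started at $j$. These are not expressible through $z$ and $\rho$.
\item \emph{A per-product geometric series is incoherent once $P$ is fixed.} If a visit to $i$ is lost because $i\notin P$, no later renewal can sell $i$.
\item \emph{Your rates give the wrong factor.} Taken literally, stop rate $p_i\rho_i$ and continuation rate $(1-p_i)\rho_i$ per unit of $a_i$ sum to $z_i\,p_i/(1-\rho_i+p_i\rho_i)$. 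For a single product with $B=\{i\}$ (so $\rho_i=1$) and $0<p_i<1$, this equals $z_i$, whereas in that case $\E[f(P)]=p_iW$ exactly.
\item \emph{Purchase probabilities do not factor across products.} Independence of $P$ therefore does not make per-product factors ``multiply cleanly.'' Competitors of $i$ that are offered too often absorb more than their benchmark share of the paths heading toward $i$, and with arbitrary $r\ge0$ this loss is not compensated.
\end{itemize}

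You are missing two ingredients.

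\textbf{(a) The continuation identity.} For $S\ni i$, every path with $T_i<T_0$ first hits $S$ at some $j$ and then reaches $i$. Hence $\phi_i(S)=a_i-\sum_{j\in S\setminus\{i\}}h_{ji}\phi_j(S)$, where $h_{ji}=\Prb_j(T_i<T_0)$. The inequality version for arbitrary $S$, averaged over $B$, gives $\sum_{j\ne i}h_{ji}z_j\le a_i-z_i$, with no independence assumption on $B$. This is how mass diverted at competitors is compared with the benchmark; no restart is ever needed.

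\textbf{(b) Caps on every competitor.} Thin $P$ independently to $Q^\beta\subseteq P$, and choose $\beta$ so that $\E[\phi_j(Q^\beta)]=\vartheta z_j$ for all $j$ simultaneously. Suppose $\beta_i=1$ and every $j\ne i$ satisfies $\E[\phi_j(Q^\beta)]\le\vartheta z_j$. Regularity and independence then give $\E[\phi_j(\{i\}\cup Q^\beta_{-i})]\le\vartheta z_j$. Hence $\E[\phi_i(Q^\beta)]\ge p_i[a_i-\vartheta(a_i-z_i)]\ge\vartheta z_i$, where the last step is exactly the condition $\vartheta\le p_i/(p_i+\rho_i(1-p_i))$. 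Since $\E[\phi_i(Q^\beta)]$ is nondecreasing in $\beta_i$ and nonincreasing in the other coordinates, a retention vector of maximal total retention among those respecting all caps meets every target.

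So the factor comes from this balance inequality, not from a geometric series. Without the caps, the per-product lower bound cannot be closed.
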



The theorem allows arbitrary inclusion probabilities \(p_i\in[0,1]\), without requiring their sum to be at most one. Its comparison differs from a standard correlation-gap comparison, which fixes a set function and compares independent and correlated distributions with the same inclusion marginals \citep{agrawal2010correlation}. Here, the benchmark is \(\mathbb E[R(B)]\), and \(z_i=\mathbb E[\phi_i(B)]\) records a purchase probability rather than an inclusion probability. The candidate probabilities \(p_i\) need not equal either \(z_i\) or \(\Pr(i\in B)\). The bound instead connects independent product availability to benchmark purchases through \(\rho_i=z_i/a_i\).

\textbf{Relation to \citet{huwu2026}.}
In their analysis of the MNL setting with positive outside-option attractiveness, \citet{huwu2026} study the corresponding static revenue-extraction problem. In our notation, their proposal mechanism yields \(p_i=\rho_i=z_i/a_i\), and their analysis establishes \(\mathbb E[f(P)]\ge0.271W\).
Their proof combines a convexity bound for offering all
low-attractiveness proposals with a tail-probability bound
for the best high-attractiveness singleton, optimizing
the threshold between the groups.


Theorem~\ref{thm:general-static} gives a unified analysis
through a Markov-path decomposition and independent thinning.
For the same MNL proposal probabilities \(p_i=\rho_i\), it yields
\[
\E[f(P)]
\ge
\left(\min_{i:z_i>0}\frac{1}{2-\rho_i}\right)W
\ge \frac12 W.
\]
Thus, for the same proposal probabilities, our analysis strengthens the single-period revenue guarantee from \(0.271W\) to at least \(W/2\).
More generally, the theorem applies to Markov chain choice
and allows candidate-inclusion probabilities that differ
from the normalized benchmark targets.


\subsection{Proof of Theorem~\ref{thm:general-static}}
\label{sec:proof-static-general}

The proof has two steps. First, a Markov-path decomposition relates singleton purchase probabilities to the benchmark purchase probabilities. Second, we independently thin the candidate set so that every product attains the same fraction of its benchmark purchase probability. The decomposition shows that a product retained whenever available meets its target whenever all other products remain below theirs. Regularity then allows us to attain all targets simultaneously. Because the thinned assortment is contained in \(P\), optimizing over \(P\) earns at least its expected revenue.

\subsubsection{Markov-Path Decomposition and the Benchmark Bound}
\label{subsec:continuation-class}

Let \(Z_0,Z_1,\ldots\) follow the underlying Markov chain
without stopping at offered products. Define the hitting times
\(T_i:=\inf\{n\ge0:Z_n=i\}\) and
\(T_0:=\inf\{n\ge0:Z_n=0\}\).
For a chain started from the initial distribution \(\lambda\),
\(a_i=\Prb_\lambda(T_i<T_0)\).
For a chain started from product \(j\), write
\(h_{ji}:=\Prb_j(T_i<T_0)\).
These continuation probabilities satisfy \(0\le h_{ji}\le1\)
and \(h_{ii}=1\), since hitting times include time zero.

The following decomposition separates paths according to
the first offered product they reach, then follows their
continuation toward product \(i\).

\begin{proposition}[Markov-path decomposition]
\label{prop:markov-continuation}
For every product \(i\) and assortment \(S\subseteq\cM\),
\[
\sum_{j\in S}h_{ji}\phi_j(S)\le a_i.
\]
Equality holds whenever \(i\in S\).
\end{proposition}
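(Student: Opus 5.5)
We prove the decomposition with the strong Markov property at the stopping time \(\tau_S:=\inf\{n\ge0:Z_n\in S\cup\{0\}\}\). Under the transience assumption, \(\tau_S<\infty\) almost surely. By definition, \(\phi_j(S)=\Prb_\lambda(Z_{\tau_S}=j)\) for \(j\in S\). On the event \(\{Z_{\tau_S}=0\}\), state \(0\) is absorbing, so \(T_0\le\tau_S\) and the path can no longer reach \(i\) before \(T_0\).

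The plan is to decompose the event \(\{T_i<T_0\}\) according to the first state \(Z_{\tau_S}\). We have
\[
a_i=\Prb_\lambda(T_i<T_0)\ \ge\ \sum_{j\in S}\Prb_\lambda\bigl(Z_{\tau_S}=j,\ \exists\, n\ge\tau_S:\ Z_n=i,\ n<T_0\bigr).
\]
The inequality holds because the events on the right are disjoint subevents of \(\{T_i<T_0\}\). Indeed, if \(Z_{\tau_S}=j\ne0\), then \(T_0>\tau_S\), and a visit to \(i\) at some time \(n\ge\tau_S\) with \(n<T_0\) gives \(T_i<T_0\). Conditioning on \(\mathcal F_{\tau_S}\) and applying the strong Markov property, each summand equals \(\phi_j(S)\,\Prb_j(T_i<T_0)=\phi_j(S)h_{ji}\). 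This proves the inequality.

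For equality when \(i\in S\), we show that the neglected part of \(\{T_i<T_0\}\) is null. Recall \(T_i<T_0\) means the path visits \(i\) before absorption. Since \(i\in S\), the first entrance to \(S\cup\{0\}\) happens no later than \(T_i\), that is, \(\tau_S\le T_i\). On \(\{T_i<T_0\}\), we therefore have \(Z_{\tau_S}\ne0\), since otherwise \(T_0\le\tau_S\le T_i\). Hence \(Z_{\tau_S}=j\) for some \(j\in S\), and the visit to \(i\) occurs at \(T_i\ge\tau_S\) before \(T_0\). So \(\{T_i<T_0\}\) is exactly the disjoint union of the events above, and equality holds. Note that \(j=i\) contributes \(\phi_i(S)h_{ii}=\phi_i(S)\), consistent with \(h_{ii}=1\).

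The only delicate point is the time-zero and self-transition convention. Hitting times include \(n=0\), so \(h_{jj}=1\). Started from \(j\), the "continuation" event \(\{T_i<T_0\}\) under \(\Prb_j\) must match \(\{\exists n\ge\tau_S: Z_n=i, n<T_0\}\) after shifting time by \(\tau_S\). This holds because \(T_0>\tau_S\) whenever \(Z_{\tau_S}\ne0\), so the post-\(\tau_S\) path's first hit of \(0\) is exactly the shifted \(T_0\). Verifying this identification of shifted hitting times is the main step to check carefully. The rest is the standard strong Markov argument.
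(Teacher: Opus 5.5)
Your proof is correct and follows the same route as the paper: you decompose \(\{T_i<T_0\}\) by the first entrance state \(Z_{\tau_S}\in S\), apply the strong Markov property at \(\tau_S\) to show that each piece has probability \(\phi_j(S)h_{ji}\), and, when \(i\in S\), use \(\tau_S\le T_i\) to show that these disjoint pieces exhaust \(\{T_i<T_0\}\). Your explicit check that, on \(\{Z_{\tau_S}\ne 0\}\), the first hit of \(0\) after \(\tau_S\) is the time-shifted \(T_0\) just spells out a step the paper leaves implicit.
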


\proof{Proof.}
Let \(\tau_S:=\inf\{n\ge0:Z_n\in S\cup\{0\}\}\).
Transience of the product states makes \(\tau_S\) finite
almost surely. For each \(j\in S\), define
\[
E_j:=
\left\{
Z_{\tau_S}=j,\quad
Z_n=i\text{ for some }n\text{ with }\tau_S\le n<T_0
\right\}.
\]
The strong Markov property gives
\(\Prb_\lambda(E_j)=\phi_j(S)h_{ji}\).
The events \(E_j\) are mutually disjoint and contained in
\(\{T_i<T_0\}\). Hence,
\[
\sum_{j\in S}h_{ji}\phi_j(S)
=
\Prb_\lambda\!\left(\bigcup_{j\in S}E_j\right)
\le
\Prb_\lambda(T_i<T_0)
=
a_i.
\]
If \(i\in S\), every path with \(T_i<T_0\) satisfies
\(\tau_S\le T_i\) and first reaches some \(j\in S\).
It therefore belongs to exactly one event \(E_j\), giving
\(\bigcup_{j\in S}E_j=\{T_i<T_0\}\) and proving equality.
For \(S=\varnothing\), the inequality is immediate.
\hfill\(\blacksquare\)
\endproof

Averaging this decomposition over the benchmark assortment
gives a bound expressed directly in terms of its purchase
probabilities.

\begin{lemma}[Benchmark continuation bound]
\label{lem:general-continuation-budget}
For every product \(i\),
\[
\sum_{j\ne i}h_{ji}z_j\le a_i-z_i.
\]
\end{lemma}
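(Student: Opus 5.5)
The plan is to average Proposition~\ref{prop:markov-continuation} over the random benchmark assortment \(B\), keeping only the terms where \(i\) itself is offered. Fix a product \(i\). For each realization \(S\) of \(B\), Proposition~\ref{prop:markov-continuation} gives
\[
\sum_{j\in S}h_{ji}\phi_j(S)\le a_i .
\]
Because \(h_{ii}=1\), the \(j=i\) term, when \(i\in S\), is exactly \(\phi_i(S)\). We therefore rewrite the inequality as
\[
\sum_{j\in S,\,j\ne i}h_{ji}\phi_j(S)\le a_i-\phi_i(S),
\]
using the convention \(\phi_i(S)=0\) for \(i\notin S\), so the rewriting is valid for every \(S\).

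Next we sum against the weights \(\theta_S\), using \(\sum_S\theta_S=1\). The right-hand side becomes \(a_i-\sum_{S\ni i}\theta_S\phi_i(S)=a_i-z_i\). On the left-hand side we exchange the order of summation, which is legitimate because all terms are nonnegative and finite. This gives
\[
\sum_{j\ne i}h_{ji}\sum_{S\ni j}\theta_S\phi_j(S)=\sum_{j\ne i}h_{ji}z_j .
\]
Comparing the two sides yields the claim.

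No step is a real obstacle; the lemma is essentially a linear averaging. The only point needing care is isolating the diagonal term. This uses \(h_{ii}=1\), which holds because hitting times include time zero, together with the zero convention for \(\phi_i(S)\) outside \(S\). These two facts make the inequality from Proposition~\ref{prop:markov-continuation} uniform in whether \(i\in S\), so no case split survives the averaging.
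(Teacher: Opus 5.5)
Your proposal is correct and matches the paper's proof: both apply Proposition~\ref{prop:markov-continuation} to each realization of \(B\), average using the weights \(\theta_S\), and isolate the diagonal term via \(h_{ii}=1\) and the convention \(\phi_i(S)=0\) for \(i\notin S\). The paper writes the averaged inequality as \(z_i+\sum_{j\ne i}h_{ji}z_j\le a_i\) and then rearranges, so you perform the same steps in a slightly different order.
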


\proof{Proof.}
Apply Proposition~\ref{prop:markov-continuation} to each
realization of \(B\) and take expectations. Since
\(\phi_j(B)=0\) when \(j\notin B\) and \(h_{ii}=1\),
\[
z_i+\sum_{j\ne i}h_{ji}z_j
=
\sum_{j\in\cM}h_{ji}\E[\phi_j(B)]
\le a_i.
\]
Rearranging proves the claim. No independence assumption
on the benchmark assortment is needed.
\hfill\(\blacksquare\)
\endproof

\subsubsection{Independent Thinning to Match the Purchase Targets}
\label{subsec:balanced-attenuation-general}

Let \(I^+:=\{i\in\cM:z_i>0\}\).
If \(p_i=0\) for some \(i\in I^+\), the coefficient in
\eqref{eq:general-static} is zero and the theorem is immediate.
We therefore assume \(p_i>0\) for every \(i\in I^+\), and set
\[
\vartheta:=
\min_{i\in I^+}
\frac{p_i}{p_i+\rho_i(1-p_i)}.
\]
Since \(0<\rho_i\le1\) on \(I^+\), we have
\(0<\vartheta\le1\).
Our goal is to construct a random subset of \(P\) with
purchase probabilities exactly \((\vartheta z_i)_{i\in\cM}\).

For a retention vector \(\beta\in[0,1]^M\), independently
retain each product \(i\in P\) with probability \(\beta_i\),
using draws independent of \(P\).
Denote the resulting assortment by \(Q^\beta\subseteq P\),
and write \(\xi_i(\beta):=\E[\phi_i(Q^\beta)]\).
Its inclusion indicators are independent, with probabilities
\(p_i\beta_i\).

Each \(\xi_i\) is continuous in \(\beta\), since it is a
finite sum of purchase probabilities weighted by polynomials
in the retention probabilities.
Moreover, writing \(Q^\beta_{-i}:=Q^\beta\setminus\{i\}\),
independence gives
\[
\xi_i(\beta)
=
p_i\beta_i\,
\E\!\left[\phi_i(\{i\}\cup Q^\beta_{-i})\right].
\]
The expectation does not depend on \(\beta_i\), so
\(\xi_i\) is linear and nondecreasing in its own coordinate.
Under common random draws, increasing another coordinate
\(\beta_j\) can only add a competing product; regularity
therefore implies that it cannot increase \(\xi_i\).

The next lemma supplies the key boundary condition:
if all other purchase probabilities do not exceed their targets,
fully retaining product \(i\) suffices to reach its target.

\begin{lemma}[Purchase target at full retention]
\label{lem:general-one-coordinate}
Fix \(i\in I^+\).
If \(\beta_i=1\) and \(\xi_j(\beta)\le\vartheta z_j\)
for every \(j\ne i\), then
\(\xi_i(\beta)\ge\vartheta z_i\).
\end{lemma}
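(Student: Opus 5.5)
We apply Proposition~\ref{prop:markov-continuation} pathwise to the realized assortment \(Q^\beta\) and then average. With \(\beta_i=1\), product \(i\) belongs to \(Q^\beta\) exactly when \(i\in P\), which happens with probability \(p_i\) and independently of \(Q^\beta_{-i}\).

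On the event \(i\in Q^\beta\), the equality case of Proposition~\ref{prop:markov-continuation} gives
\[
\phi_i(Q^\beta)=a_i-\sum_{j\in Q^\beta,\,j\ne i}h_{ji}\phi_j(Q^\beta).
\]
On the event \(i\notin Q^\beta\), the inequality case gives only \(\sum_{j\in Q^\beta}h_{ji}\phi_j(Q^\beta)\le a_i\). That is not directly what we need, so we handle this event through the conditional structure instead. Write \(\eta_j:=\E[\phi_j(\{i\}\cup Q^\beta_{-i})]\) for the purchase probability of \(j\) when \(i\) is present, and \(\kappa_j:=\E[\phi_j(Q^\beta_{-i})]\) for the purchase probability of \(j\) when \(i\) is absent.

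Because \(\{i\in P\}\) is independent of \(Q^\beta_{-i}\), we have
\[
\xi_i=p_i\eta_i,\qquad \xi_j=p_i\eta_j+(1-p_i)\kappa_j \quad (j\ne i).
\]
Taking expectations in the equality case yields
\[
\eta_i=a_i-\sum_{j\ne i}h_{ji}\eta_j.
\]
Regularity gives \(\eta_j\le\kappa_j\) for \(j\ne i\), since adding \(i\) can only reduce the purchase probability of \(j\). Hence \(\xi_j\ge\eta_j\), and the hypothesis \(\xi_j\le\vartheta z_j\) implies \(\eta_j\le\vartheta z_j\). Lemma~\ref{lem:general-continuation-budget} then gives
\[
\eta_i\ge a_i-\vartheta\sum_{j\ne i}h_{ji}z_j\ge a_i-\vartheta(a_i-z_i).
\]

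It remains to check that \(p_i\bigl(a_i-\vartheta(a_i-z_i)\bigr)\ge\vartheta z_i\). Dividing by \(a_i>0\) and writing \(\rho_i=z_i/a_i\), this becomes
\[
p_i(1-\vartheta+\vartheta\rho_i)\ge\vartheta\rho_i,
\quad\text{i.e.}\quad
p_i\ge\vartheta\bigl(p_i+\rho_i(1-p_i)\bigr),
\]
which is exactly the definition of \(\vartheta\) as a minimum.

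This crude bound \(\eta_j\le\xi_j\) loses the slack \(\xi_j-\eta_j=(1-p_i)(\kappa_j-\eta_j)\). That is acceptable here, since it already yields the target coefficient. The main point to verify carefully is therefore the direction of the regularity comparison \(\eta_j\le\kappa_j\), together with the independence of \(\{i\in P\}\) from \(Q^\beta_{-i}\). The latter holds because the inclusion and retention draws are independent across products.
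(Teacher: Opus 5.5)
Your proposal is correct and follows the paper's proof essentially step for step. Your \(\eta_j\) and \(\kappa_j\) are the paper's \(b_j^+\) and \(b_j^-\). The paper likewise uses the independence of \(\{i\in P\}\) from \(Q_{-i}\), regularity to get \(b_j^+\le\xi_j(\beta)\le\vartheta z_j\), the equality case of Proposition~\ref{prop:markov-continuation} on \(\{i\}\cup Q_{-i}\), Lemma~\ref{lem:general-continuation-budget}, and the definition of \(\vartheta\). The only cosmetic difference is in the last step: you divide by \(a_i\), whereas the paper writes it as \(\vartheta z_i+a_i[p_i-\vartheta\{p_i+\rho_i(1-p_i)\}]\ge\vartheta z_i\).
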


\proof{Proof.}
Write \(Q:=Q^\beta\) and \(Q_{-i}:=Q\setminus\{i\}\).
Since \(\beta_i=1\), product \(i\) belongs to \(Q\) with
probability \(p_i\), independently of \(Q_{-i}\).
For \(j\ne i\), define
\[
b_j^+:=\E[\phi_j(\{i\}\cup Q_{-i})],
\qquad
b_j^-:=\E[\phi_j(Q_{-i})].
\]
Independence gives \(\xi_j(\beta)=p_i b_j^++(1-p_i)b_j^-\) for every \(j\ne i\). Since regularity implies \(b_j^+\le b_j^-\), we obtain \(b_j^+\le\xi_j(\beta)\le\vartheta z_j\).
Every realization of \(\{i\}\cup Q_{-i}\) contains \(i\).
The equality case of
Proposition~\ref{prop:markov-continuation}, followed by
Lemma~\ref{lem:general-continuation-budget}, therefore gives
\[
\begin{aligned}
\xi_i(\beta)
&=p_i\E[\phi_i(\{i\}\cup Q_{-i})]
 =p_i\left[a_i-\sum_{j\ne i}h_{ji}b_j^+\right]\\
&\ge p_i\left[a_i-\vartheta\sum_{j\ne i}h_{ji}z_j\right]
 \ge p_i\bigl[a_i-\vartheta(a_i-z_i)\bigr]\\
&=\vartheta z_i
  +a_i\bigl[p_i-\vartheta\{p_i+\rho_i(1-p_i)\}\bigr]
 \ge\vartheta z_i,
\end{aligned}
\]
where the last inequality follows from the definition
of \(\vartheta\).
\hfill\(\blacksquare\)
\endproof

We now attain all targets simultaneously. Among retention vectors that keep every purchase probability at most its target, choose one maximizing total retention. If a product remains below its target, the preceding lemma implies that its retention probability is less than one. We can then increase this probability slightly without violating any target constraint, contradicting maximality.

\begin{lemma}[Simultaneous balanced attenuation]
\label{lem:general-balanced-attenuation}
There exists \(\beta^\star\in[0,1]^M\), with
\(\beta_i^\star=0\) for \(i\notin I^+\), such that
\[
\xi_i(\beta^\star)=\vartheta z_i,
\qquad i\in\cM.
\]
\end{lemma}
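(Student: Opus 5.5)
The plan is to follow the maximality argument sketched just before the statement. Restrict attention to the compact set
\[
\mathcal B:=\bigl\{\beta\in[0,1]^M:\ \beta_i=0\ \text{for } i\notin I^+,\ \xi_j(\beta)\le\vartheta z_j\ \text{for all } j\in\cM\bigr\}.
\]
It is nonempty because \(\beta=0\) gives \(\xi\equiv0\). It is closed because each \(\xi_j\) is continuous, being a polynomial in \(\beta\) weighted by purchase probabilities. For \(j\notin I^+\) we have \(z_j=0\), and \(\beta_j=0\) forces \(\xi_j(\beta)=0\), so those constraints hold trivially. The continuous function \(\sum_i\beta_i\) then attains its maximum on \(\mathcal B\) at some \(\beta^\star\). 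For \(i\notin I^+\) the conclusion \(\xi_i(\beta^\star)=0=\vartheta z_i\) is immediate, so it remains to show \(\xi_i(\beta^\star)=\vartheta z_i\) for every \(i\in I^+\).

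Suppose, for contradiction, that some \(i\in I^+\) has \(\xi_i(\beta^\star)<\vartheta z_i\). All other coordinates satisfy \(\xi_j(\beta^\star)\le\vartheta z_j\). If \(\beta_i^\star=1\), Lemma~\ref{lem:general-one-coordinate} would give \(\xi_i(\beta^\star)\ge\vartheta z_i\), which is impossible. Hence \(\beta_i^\star<1\).

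Now raise only the \(i\)-th coordinate to \(\beta_i^\star+\varepsilon\). Two properties of \(\xi\) established earlier control the effect of this change:
\begin{enumerate}[label=(\alph*)]
\item \(\xi_i\) is linear and nondecreasing in its own coordinate, and continuous. So for small \(\varepsilon>0\) it stays strictly below \(\vartheta z_i\).
\item For every \(j\ne i\), \(\xi_j\) is nonincreasing in \(\beta_i\), by regularity under common random draws. So the constraints \(\xi_j\le\vartheta z_j\) are preserved.
\end{enumerate}
The perturbed vector therefore lies in \(\mathcal B\) and has strictly larger total retention, contradicting the maximality of \(\beta^\star\). This yields \(\xi_i(\beta^\star)=\vartheta z_i\) for all \(i\in\cM\).

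The only delicate point is the monotonicity claim in (b): that increasing \(\beta_i\) cannot increase \(\xi_j\). I would verify it by coupling. Realize \(Q^\beta\) through uniform draws \(U_k\), retaining product \(k\) iff \(k\in P\) and \(U_k\le\beta_k\). Increasing \(\beta_i\) then only adds \(i\) to some realizations. Regularity gives \(\phi_j(S\cup\{i\})\le\phi_j(S)\) pathwise, and taking expectations gives the claim.

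Everything else is continuity and compactness. Here continuity is needed to guarantee both that the maximum exists and that the small step in (a) remains feasible.
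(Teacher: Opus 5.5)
Your proposal is correct and follows the paper's proof essentially step for step. You maximize total retention over the compact feasible set, use Lemma~\ref{lem:general-one-coordinate} to force \(\beta_i^\star<1\) for any product below its target, and then raise that coordinate slightly, using continuity for its own constraint and regularity-based monotonicity for the others; your explicit uniform-draw coupling just spells out the monotonicity claim the paper states in one line.
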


\proof{Proof.}
Consider
\[
\mathcal D:=
\left\{
\beta\in[0,1]^M:
\begin{array}{ll}
\beta_i=0 & \text{for }i\notin I^+,\\
\xi_i(\beta)\le\vartheta z_i & \text{for }i\in\cM
\end{array}
\right\}.
\]
The zero vector belongs to \(\mathcal D\), and continuity
of the functions \(\xi_i\) makes \(\mathcal D\) closed.
It is therefore a nonempty compact subset of \([0,1]^M\).
Choose \(\beta^\star\in\mathcal D\) maximizing
\(\sum_{i\in I^+}\beta_i\).

Suppose that \(\xi_i(\beta^\star)<\vartheta z_i\)
for some \(i\in I^+\).
Lemma~\ref{lem:general-one-coordinate} implies
\(\beta_i^\star<1\), since all other purchase probabilities
satisfy their target bounds.
By continuity, a sufficiently small increase in
\(\beta_i^\star\) preserves its own constraint.
This change cannot increase any \(\xi_j\) with \(j\ne i\),
so the resulting vector remains in \(\mathcal D\).
Its total retention is strictly larger, contradicting
the choice of \(\beta^\star\).

Thus, every target on \(I^+\) is attained.
For \(i\notin I^+\), the restriction \(\beta_i^\star=0\)
gives \(\xi_i(\beta^\star)=0=\vartheta z_i\), completing
the proof.
\hfill\(\blacksquare\)
\endproof

The balanced assortment now gives the desired revenue
comparison directly.

\proof{Proof of Theorem~\ref{thm:general-static}.}
The zero-bound cases were addressed above.
Otherwise, take \(\beta^\star\) from
Lemma~\ref{lem:general-balanced-attenuation}.
For every realization of \(P\) and the thinning draws,
\(Q^{\beta^\star}\subseteq P\), so
\(f(P)\ge R(Q^{\beta^\star})\).
Taking expectations gives
\(
\E[f(P)]\ge\E[R(Q^{\beta^\star})]
=\sum_{i\in\cM}r_i\xi_i(\beta^\star)
=\vartheta\sum_{i\in\cM}r_i z_i.
\)
Substituting the definition of \(\vartheta\) proves
\eqref{eq:general-static}.
For the final assertion, whenever \(z_i>0\),
\[
p_i\ge\frac{\rho_i}{1+\rho_i}
\quad\Longleftrightarrow\quad
p_i\ge\rho_i(1-p_i)
\quad\Longleftrightarrow\quad
\frac{p_i}{p_i+\rho_i(1-p_i)}\ge\frac12.
\]
Thus, the stated inclusion-probability condition gives
\(\vartheta\ge1/2\) and the half-revenue guarantee.
\hfill\(\blacksquare\)
\endproof

The thinning is solely a proof device.
The policy optimizes directly over \(P\) and does not need
to compute \(\beta^\star\).

\subsection{Sharpness}
The uniform factor \(1/2\) cannot be improved even under MNL choice with equal revenues and a deterministic benchmark assortment. This remains true when each candidate-inclusion probability equals its normalized benchmark target, \(p_i=\rho_i\).

\begin{proposition}[Sharpness of the uniform Markov extraction bound]
\label{prop:markov-extraction-sharpness}
For every \(\varepsilon>0\), there exist an MNL choice model,
equal positive product revenues, and a deterministic benchmark
assortment \(B\) with \(W=\sum_i r_i z_i>0\) such that an
independent candidate set \(P\) with inclusion probabilities
\(p_i=\rho_i=z_i/a_i\) satisfies
\(
{\E[f(P)]}/{W}
\le
{1}/{2}+\varepsilon.
\)
\end{proposition}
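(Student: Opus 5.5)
The plan is an explicit family of MNL instances in which the candidate set carries only a bounded expected total attractiveness, while the benchmark has revenue close to one. We take \(n\) identical products with unit revenues \(r_i=1\) and attractiveness \(u_i=\delta>0\), set \(L:=n\delta\), and use the deterministic benchmark \(B=\cM\), so \(\theta_{\cM}=1\). The parameters \(L\) and \(\delta\) are chosen at the end.

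First we compute the benchmark quantities from \eqref{eq:mnl-choice}. Each product has \(z_i=\delta/(1+L)\), so
\[
W=\sum_i r_i z_i=\frac{L}{1+L}>0 .
\]
The singleton purchase probability is \(a_i=\delta/(1+\delta)\), hence
\[
\rho_i=\frac{z_i}{a_i}=\frac{1+\delta}{1+L}.
\]
This lies in \((0,1]\) whenever \(\delta\le L\), and we set \(p_i=\rho_i\).

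Next we bound \(\E[f(P)]\). With equal revenues,
\[
R(S)=\frac{U(S)}{1+U(S)}, \qquad U(S):=\sum_{i\in S}u_i .
\]
This is increasing in \(S\), so \(f(P)=g(U(P))\) with \(g(x):=x/(1+x)\). The function \(g\) is concave and increasing on \([0,\infty)\). Jensen's inequality therefore gives \(\E[f(P)]\le g(\E[U(P)])\). We have
\[
\E[U(P)]=n\delta\rho_i=\frac{L(1+\delta)}{1+L}<1+\delta,
\]
and monotonicity of \(g\) yields
\[
\E[f(P)]\le \frac{1+\delta}{2+\delta}.
\]

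Finally,
\[
\frac{\E[f(P)]}{W}\le\frac{(1+\delta)(1+L)}{(2+\delta)L}.
\]
This tends to \(1/2\) as \(\delta\to0\) and \(L\to\infty\); for instance, take \(\delta=1/n\) and \(L=n\delta=1\cdot\)\(\ldots\) is not enough, so instead take \(\delta=1/\sqrt n\), giving \(L=\sqrt n\). Given \(\varepsilon>0\), we choose \(n\) large enough that the right-hand side is at most \(1/2+\varepsilon\), which proves the proposition.

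The only real step is choosing an instance where the bound collapses. The mechanism is that \(p_i=\rho_i\approx 1/(1+L)\) leaves the candidate set with expected total attractiveness near one, capping revenue near \(g(1)=1/2\), while \(W\to1\). Once this is seen, Jensen's inequality replaces any concentration argument. The remaining checks are routine: \(\rho_i\le1\), \(z_i>0\), and that the MNL model is a Markov chain choice model.
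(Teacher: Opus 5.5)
Your proof is correct and follows essentially the same construction as the paper: identical unit-revenue MNL products with vanishing attractiveness, the full assortment as the deterministic benchmark, and Jensen's inequality applied to the concave map $x\mapsto x/(1+x)$ of the candidate set's total attractiveness. The paper takes $n=m^2$ products with $\delta=1/(m-1)$, so $L\approx m$, where you take $\delta=1/\sqrt n$; in the write-up, delete the abandoned sentence about $\delta=1/n$ and state the choice $\delta=1/\sqrt n$, $L=\sqrt n$ directly.
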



\section{Independent Proposals and Revenue Extraction}
\label{sec:proposal-revenue}

We now return to the consecutive-stay system. We first use the
SBLP target flows to generate independent copy proposals and derive
the probability that each product is visible
(Subsection~\ref{sec:proposal-generation}). We then apply
Theorem~\ref{thm:general-static} to bound the expected revenue from
the visible products relative to the SBLP benchmark
(Subsection~\ref{sec:static-general}). Finally, we specify the target
updates that reproduce the SBLP interval flows
(Subsection~\ref{subsec:sblp-realizability}).

\subsection{Proposal Generation from the Fluid Solution}
\label{sec:proposal-generation}

We build on the virtual-state construction of \citet{huwu2026}, which uses interval-flow variables to generate independent copy proposals. In their choice-based model, each physical copy is a customer-facing alternative; in our model, several interchangeable copies may belong to the same product. Treating these copies as separate alternatives without adjusting the choice model can change product purchase probabilities.
We therefore make a product eligible if any copy proposes,
retaining copy identities for assignment and updates.

Fix a feasible copy-level SBLP solution. To generate proposals,
we maintain a \emph{virtual free-slot state}
\(\widehat{\mathcal F}_{tic}\subseteq\cN\) for each copy
\((i,c)\). For now, suppose that these states are mutually
independent across copies at the beginning of each period and
satisfy
\begin{equation}\label{eq:virtual-interval-marginals}
\Prb\bigl([a,b]\sim\widehat{\mathcal F}_{tic}\bigr)
=x_{tic}([a,b]),
\qquad [a,b]\in\cQ.
\end{equation}
We use these states to define the proposal rule.
Subsection~\ref{subsec:sblp-realizability} constructs updates
preserving these properties.
Section~\ref{sec:performance} verifies them for the implemented policy.

\textbf{Copy proposals.}
In period \(t\), a copy can propose only if its virtual state
contains \(I_t\). When this condition holds, there is a unique
maximal free interval containing \(I_t\), which determines the
proposal probability. Suppose that \([a,b]\) is the maximal free
interval of \(\widehat{\mathcal F}_{tic}\) containing \(I_t\).
Conditional on this event, copy \((i,c)\) proposes with probability
\begin{equation}
\pi_{tic}([a,b])
:=\frac{y_{tic}([a,b])}{p_ta_{ti}x_{tic}([a,b])}.
\label{eq:copy-proposal}
\end{equation}
We define the ratio as zero when \(x_{tic}([a,b])=0\), in which
case \eqref{eq:sblp-state-cap} also gives \(y_{tic}([a,b])=0\).
We also set \(\pi_{tic}([a,b])=0\) for intervals that do not
contain \(I_t\). If the virtual state cannot accommodate \(I_t\),
the copy does not propose. Constraint~\eqref{eq:sblp-state-cap}
ensures that \(0\le\pi_{tic}([a,b])\le1\).

We use random draws independent across copies and of the current
virtual states. With one copy per product under MNL,
\eqref{eq:copy-proposal} recovers the rule of \citet{huwu2026}.
Here, \(a_{ti}\) is the singleton purchase probability under
general Markov choice. Normalization by \(p_ta_{ti}\) enables
the updates in Subsection~\ref{subsec:sblp-realizability} to
reproduce the SBLP interval flows \(y_{tic}([a,b])\).

\textbf{Proposal marginals and product visibility.}
Although a copy may have several maximal free intervals, at most
one can contain \(I_t\). The corresponding events are therefore
mutually exclusive. Define the marginal proposal probability
of copy \((i,c)\) and the total proposal mass of product \(i\) by
\begin{align}
\rho_{tic}
:=\sum_{[a,b]\in\cQ}
x_{tic}([a,b])\pi_{tic}([a,b])
=\frac{1}{p_ta_{ti}}
\sum_{[a,b]\in\cQ}y_{tic}([a,b])\quad \textrm{and} \quad 
\rho_{ti}
:=\sum_{c=1}^{C_i}\rho_{tic}.
\label{eq:rho-copy}
\end{align}
Thus, \(\rho_{tic}\) is the probability that copy \((i,c)\)
proposes, and \(\rho_{ti}\) is the expected number of proposing
copies of product \(i\).

We call product \(i\) \emph{visible} if at least one of its
copies proposes, and write \(\widetilde\rho_{ti}\) for its
visibility probability.
A visible product is eligible for the subsequent assortment
decision and appears at most once in the offered assortment.
Because several copies may propose simultaneously,
\(\widetilde\rho_{ti}\) need not equal the total proposal mass
\(\rho_{ti}\).
The following lemma connects these quantities to the SBLP
sales targets.

\begin{lemma}[Proposal marginals and visibility]
\label{lem:aggregate-identities}
Suppose that, at the beginning of period \(t\), the virtual free-slot states are mutually independent across copies and satisfy the interval-marginal identities in \eqref{eq:virtual-interval-marginals}.
Under the proposal rule above, the following statements hold.
\begin{enumerate}[label=(\roman*),ref=(\roman*),leftmargin=*,itemsep=2pt]

\item \textit{Sales targets and normalization.}
For every product \(i\) and copy \(c\in[C_i]\),
\[
z_{ti}=a_{ti}\rho_{ti},
\qquad
0\le\rho_{tic}\le\rho_{ti}\le1.
\]

\item \textit{Independent proposals and visibility.}
The copy-proposal indicators are mutually independent, with
copy \((i,c)\) proposing with probability \(\rho_{tic}\).
The visible-product indicators are also mutually independent
across products, with
\begin{equation}
\widetilde\rho_{ti}
=1-\prod_{c=1}^{C_i}(1-\rho_{tic}).
\label{eq:visible-rho}
\end{equation}

\item \label{item:proposal-capacity-aggregation}\textit{Capacity aggregation.}
For every product \(i\),
\[
\widetilde\rho_{ti}
\ge1-e^{-\rho_{ti}}
\ge\frac{\rho_{ti}}{1+\rho_{ti}}
\ge\frac{\rho_{ti}}{2}.
\]
\end{enumerate}
\end{lemma}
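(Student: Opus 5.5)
The plan is to verify the three parts in order, using only the SBLP constraints, the definition \eqref{eq:copy-proposal}, and the assumed independence and marginal identities \eqref{eq:virtual-interval-marginals}.

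\textbf{Part (i).} Summing the second expression in \eqref{eq:rho-copy} over \(c\in[C_i]\) and applying the sales-linking constraint \eqref{eq:sblp-sales-link} gives \(\rho_{ti}=p_tz_{ti}/(p_ta_{ti})=z_{ti}/a_{ti}\). Recall that we restricted attention to \(a_{ti}>0\) and that \(p_t>0\). Regularity \eqref{eq:singleton-upper} combined with \eqref{eq:sblp-choice} and \eqref{eq:sblp-theta} yields \(z_{ti}\le\sum_S\theta_{tS}a_{ti}=a_{ti}\), so \(\rho_{ti}\le1\). Nonnegativity of each \(\rho_{tic}\) follows from \(y\ge0\), and then \(\rho_{tic}\le\rho_{ti}\) is immediate.

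One point needs care: the first equality in \eqref{eq:rho-copy} must be the actual proposal probability. The events \(\{[a,b]\sim\widehat{\mathcal F}_{tic}\}\) for distinct intervals containing \(I_t\) are disjoint, since maximal free intervals are disjoint and at most one contains \(I_t\). Conditioning on which interval occurs and using \eqref{eq:virtual-interval-marginals}, the proposal probability is \(\sum x_{tic}([a,b])\pi_{tic}([a,b])\). Intervals not containing \(I_t\) contribute zero because \(\pi_{tic}=0\) there, and \(y_{tic}=0\) there by \eqref{eq:sblp-state-cap}.

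\textbf{Part (ii).} Each copy's proposal indicator is a measurable function of that copy's virtual state and its own proposal draw. The virtual states are mutually independent across copies, and the draws are independent across copies and of the states. Hence the family of pairs (state, draw) is independent across copies, and so are the indicators. Visibility of product \(i\) is a function of the indicators of its own copies only. The copy sets of distinct products are disjoint, so grouping preserves independence across products. The formula \eqref{eq:visible-rho} is then the complement of the event that no copy of \(i\) proposes.

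\textbf{Part (iii).} This is elementary. First, \(1-\rho_{tic}\le e^{-\rho_{tic}}\), so \(\prod_c(1-\rho_{tic})\le e^{-\rho_{ti}}\), which gives the first inequality. Next, \(e^{x}\ge1+x\) implies \(e^{-x}\le1/(1+x)\), so \(1-e^{-x}\ge x/(1+x)\) for \(x\ge0\). Finally, \(x/(1+x)\ge x/2\) uses \(x\le1\) from part (i).

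No step is a real obstacle. The only subtle point is the disjointness and independence bookkeeping in (i)–(ii), and the argument above handles it.
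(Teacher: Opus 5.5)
Your proof is correct and follows the paper's argument essentially step for step. Part (i) sums \eqref{eq:rho-copy} against the sales-linking constraint and uses regularity; part (ii) uses independence of each copy's (state, draw) pair, grouped over disjoint copy sets; part (iii) uses \(1-u\le e^{-u}\), \(e^u\ge 1+u\), and \(\rho_{ti}\le 1\). Your extra remark that the first expression in \eqref{eq:rho-copy} is the actual proposal probability, by disjointness of the containing-interval events, is a point the paper makes in the text just before the lemma rather than inside its proof.
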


Under the symmetric lift, all copies of product \(i\) have
the same marginal proposal probability,
\(\rho_{tic}=\rho_{ti}/C_i\). Hence,
\eqref{eq:visible-rho} simplifies to
\[
\widetilde\rho_{ti}
=1-\left(1-\frac{\rho_{ti}}{C_i}\right)^{C_i}.
\]
With one copy per product, visibility equals the normalized
sales target: \(\widetilde\rho_{ti}=\rho_{ti}=z_{ti}/a_{ti}\).
With multiple copies, the visibility probability can be
strictly smaller than this target.
Part~\ref{item:proposal-capacity-aggregation} verifies the
condition for half-revenue extraction in
Theorem~\ref{thm:general-static}, uniformly over product
capacities.

\subsection{Revenue Extraction from Independent Proposals}
\label{sec:static-general}

Our policy selects a revenue-maximizing assortment from the
visible products. We compare the resulting expected revenue
with the product-level SBLP benchmark by applying
Theorem~\ref{thm:general-static}.

\textbf{Benchmark and Visible-Set Revenue.}
Fix a period and suppress its time index. All revenues below
are conditional on an arrival; the factor \(p_t\) is restored
in the online performance analysis.
Let \(B\) have the SBLP assortment distribution \(\theta\).
In the notation of Section~\ref{sec:general-static},
its purchase probabilities are \(z_i=\E[\phi_i(B)]\), and
\(z_i=a_i\rho_i\) by Lemma~\ref{lem:aggregate-identities}.
Its expected revenue \(W:=\sum_{i\in\cM}r_i z_i\) is the
period's conditional SBLP benchmark.
The policy does not sample \(B\).

The visible proposal set \(P_{\rm vis}\) contains product \(i\)
if and only if at least one of its copies proposes.
For a realization \(P_{\rm vis}=A\), offering an optimal
assortment yields expected revenue \(f(A)\).
Thus, \(\mathbb E[f(P_{\rm vis})]\) is the expected revenue from offering a revenue-maximizing assortment from the visible products, conditional on a customer arrival. The desired bound follows from nonnegativity when \(W=0\), so we assume \(W>0\).

\textbf{Revenue Guarantee.}
By Lemma~\ref{lem:aggregate-identities}, products enter
\(P_{\rm vis}\) independently with probabilities
\(\widetilde\rho_i\ge\rho_i/(1+\rho_i)\).
Applying Theorem~\ref{thm:general-static} with
\(p_i=\widetilde\rho_i\) yields the following corollary.

\begin{corollary}[Multi-capacity visible-set extraction]
\label{cor:visible-static-general}
The visible proposal set satisfies
\[
\E[f(P_{\rm vis})]
\ge
\left(
\min_{i:z_i>0}
\frac{\widetilde\rho_i}
     {\widetilde\rho_i+\rho_i(1-\widetilde\rho_i)}
\right)
\sum_{i\in\cM}r_i z_i
\ge
\frac12\sum_{i\in\cM}r_i z_i.
\]
\end{corollary}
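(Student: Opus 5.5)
The corollary follows by applying Theorem~\ref{thm:general-static} directly to the visible proposal set, with \(P=P_{\rm vis}\) and \(p_i=\widetilde\rho_i\). The benchmark \(B\) is the SBLP assortment distribution \(\theta\) for the period. Its purchase probabilities are \(z_i=\E[\phi_i(B)]\), which is exactly the form required in Section~\ref{sec:general-static}; no independence of \(B\) is needed.

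The first step is to check the hypotheses. By Lemma~\ref{lem:aggregate-identities}(ii), the visibility indicators are mutually independent across products, with \(\Prb(i\in P_{\rm vis})=\widetilde\rho_i\). So \(P_{\rm vis}\) is an independent candidate set.

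The second step is to identify the normalized target. By Lemma~\ref{lem:aggregate-identities}(i), \(z_i=a_i\rho_i\), where we restrict to \(a_i>0\) as in Subsection~\ref{subsec:model-choice}. Hence the theorem's quantity \(z_i/a_i\) coincides with the total proposal mass \(\rho_i\), and \(0\le\rho_i\le 1\).

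Substituting \(p_i=\widetilde\rho_i\) and this \(\rho_i\) into \eqref{eq:general-static} gives the first inequality of the corollary verbatim.

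For the second inequality, the final assertion of Theorem~\ref{thm:general-static} reduces it to checking \(\widetilde\rho_i\ge\rho_i/(1+\rho_i)\) for every product with \(z_i>0\). This is precisely Lemma~\ref{lem:aggregate-identities}\ref{item:proposal-capacity-aggregation}. If one wants it self-contained: independence gives \(1-\widetilde\rho_i=\prod_c(1-\rho_{ic})\le e^{-\rho_i}\), and \(1-e^{-x}\ge x/(1+x)\) holds since \(e^{x}\ge 1+x\). With this, the minimum coefficient is at least \(1/2\).

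Two degenerate cases need a remark. If \(W=0\), the bound follows from nonnegativity. If \(z_i>0\), then \(\rho_i>0\), so \(\widetilde\rho_i>0\) and the ratio is well defined.

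The argument has no real obstacle. The only delicate point is making sure the \(\rho_i\) appearing in the theorem, defined as \(z_i/a_i\), equals the proposal mass \(\rho_i\) rather than the visibility probability \(\widetilde\rho_i\). Lemma~\ref{lem:aggregate-identities}(i) settles this.
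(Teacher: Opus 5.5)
Your proposal is correct and follows the paper's proof essentially verbatim. The paper applies Theorem~\ref{thm:general-static} with $P=P_{\rm vis}$ and $p_i=\widetilde\rho_i$, identifies $z_i/a_i$ with the proposal mass $\rho_i$ via Lemma~\ref{lem:aggregate-identities}(i), and obtains the $1/2$ bound from part~(iii) of that lemma combined with the theorem's final assertion, exactly as you do; your remarks on the degenerate cases ($W=0$, and $\widetilde\rho_i>0$ whenever $z_i>0$) are accurate and merely make explicit what the paper leaves implicit.
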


The first bound retains the actual visibility probabilities
and can give a guarantee strictly stronger than \(1/2\).
The uniform \(1/2\) bound holds regardless of the number of
copies or the allocation of proposal mass across them.
Thus, aggregating multiple copy proposals into a single
visible product introduces no additional factor depending
on the product capacities.

\subsection{Target Updates and Flow Realizability}
\label{subsec:sblp-realizability}

We now construct virtual-state updates that preserve the
interval marginals and independence assumed in
Subsection~\ref{sec:proposal-generation}. The update rule
must ensure that the joint probability of \([a,b]\) being
present and \(I_t\) being removed from it equals the SBLP
flow \(y_{tic}([a,b])\). We first give a direct update rule,
then show that it can be implemented through the proposal
rule followed by a target update with probability
\(p_ta_{ti}\).

\begin{lemma}[Realizability of copy-level interval flows]
\label{lem:copy-flow-realizability}
Fix a feasible copy-level SBLP solution. For each copy \((i,c)\),
initialize \(\widehat{\mathcal F}_{1ic}=\cN\). In period \(t\),
if \(I_t\) is contained in a maximal free interval \([a,b]\) of
\(\widehat{\mathcal F}_{tic}\), remove \(I_t\) from the virtual
state with conditional probability
\begin{equation}
\kappa_{tic}([a,b])
:=
\frac{y_{tic}([a,b])}{x_{tic}([a,b])},
\label{eq:direct-cut-kernel}
\end{equation}
where the ratio is defined as zero when \(x_{tic}([a,b])=0\).
In all other cases, leave the state unchanged. Denote the
resulting state by \(\widehat{\mathcal F}_{t+1,i,c}\).
Then, for every copy \((i,c)\), period \(t\in\cT\), and
interval \([a,b]\in\cQ\),
\begin{equation}
\Prb\bigl([a,b]\sim\widehat{\mathcal F}_{tic}\bigr)
=
x_{tic}([a,b]).
\label{eq:realized-x-marginal}
\end{equation}
If the transition draws are independent across copies and
periods, the virtual states are mutually independent across
copies at the beginning of each period.
\end{lemma}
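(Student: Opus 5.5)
We prove \eqref{eq:realized-x-marginal} by induction on \(t\), working copy by copy, since the update of copy \((i,c)\) depends only on its own state and its own transition draw. First we check that \(\kappa_{tic}([a,b])\in[0,1]\). By \eqref{eq:sblp-state-cap}, \(y_{tic}([a,b])\le p_ta_{ti}x_{tic}([a,b])\le x_{tic}([a,b])\). The same constraint gives \(y_{tic}([a,b])=0\) whenever \(I_t\not\subseteq[a,b]\). Thus the rule only removes \(I_t\) from a maximal free interval that actually contains it, and it is well defined.

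The base case \(t=1\) holds because \(\widehat{\mathcal F}_{1ic}=\cN\). Its only maximal free interval is \([1,N]\), which matches \eqref{eq:sblp-boundary}.

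For the inductive step, assume \eqref{eq:realized-x-marginal} holds at period \(t-1\). Fix \([a,b]\) and classify how \([a,b]\sim\widehat{\mathcal F}_{tic}\) can occur. A cut at \(t-1\) removes exactly one maximal free interval \([a',b']\supseteq I_{t-1}\) and replaces it by the residuals \([a',\ell_{t-1}-1]\) and \([r_{t-1}+1,b']\), when these are nonempty. All other maximal free intervals are unchanged. It is important that the residuals are themselves maximal: slot \(\ell_{t-1}\) (respectively \(r_{t-1}\)) is no longer free, and the outer endpoints were already boundary points. Hence the event \([a,b]\sim\widehat{\mathcal F}_{tic}\) is the disjoint union of three events:
\begin{itemize}
\item[(a)] \([a,b]\) was maximal at \(t-1\) and was not cut;
\item[(b)] a parent \([a,b']\) with \(b'\ge r_{t-1}\) was cut and \(b=\ell_{t-1}-1\);
\item[(c)] a parent \([a',b]\) with \(a'\le\ell_{t-1}\) was cut and \(a=r_{t-1}+1\).
\end{itemize}
These events are disjoint because a single cut at \(t-1\) occurs inside a unique interval, and a residual cannot coincide with a surviving interval: the surviving intervals are disjoint from the parent.

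Next we compute the probability of each event. Conditioning on the maximal-interval event of the cut parent, the joint probability that \([a',b']\) is maximal and is cut equals
\[
x_{t-1,i,c}([a',b'])\,\kappa_{t-1,i,c}([a',b'])=y_{t-1,i,c}([a',b']),
\]
using the inductive hypothesis. This also holds when \(x=0\), since then \(y=0\). Therefore event (a) has probability \(x_{t-1,i,c}([a,b])-y_{t-1,i,c}([a,b])\), and events (b) and (c) contribute the two inflow sums. These three terms sum to the right-hand side of \eqref{eq:sblp-balance}, so \(\Prb([a,b]\sim\widehat{\mathcal F}_{tic})=x_{tic}([a,b])\).

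The main obstacle is the bookkeeping in this partition, specifically showing that the three cases are exhaustive and disjoint. Exhaustiveness needs one additional observation: cuts only shrink the free set. So if \([a,b]\) is maximal at \(t\) but was not maximal at \(t-1\), it must have been created by the cut at \(t-1\) and is therefore a residual. We handle this by the maximal-interval decomposition described in Section~\ref{subsec:sblp-formulation}.

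Finally, independence across copies follows by induction. Each \(\widehat{\mathcal F}_{t+1,i,c}\) is a measurable function of \(\widehat{\mathcal F}_{tic}\) and a transition draw that is independent across copies and periods. Independent inputs combined with independent draws yield independent outputs.
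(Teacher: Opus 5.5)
Your proof is correct and follows essentially the same route as the paper. You establish well-definedness from the state-capacity constraint, run the induction on the identity $x\kappa=y$ for the joint probability that an interval is maximal and cut, and use the persistence/left-residual/right-residual decomposition (which the paper obtains by applying its pathwise interval identity \eqref{eq:app-interval-identity}, from the proof of Proposition~\ref{prop:sblp-upper}, to the virtual states). Independence across copies then follows, as in the paper, from each copy's state depending only on its own transition draws.
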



By \eqref{eq:sblp-state-cap} and \(p_ta_{ti}\le1\), the
transition probability is well defined. If the current interval
marginal equals \(x_{tic}([a,b])\), the joint probability that
\([a,b]\) is present and cut is
\(x_{tic}([a,b])\kappa_{tic}([a,b])=y_{tic}([a,b])\).
The persistence and residual-interval events therefore reproduce
\eqref{eq:sblp-balance}. Induction from the initial condition
gives the prescribed marginals. Independent transition draws
across copies preserve their independence.

\begin{remark}\label{remark:target_update}
To implement the transitions in Lemma~\ref{lem:copy-flow-realizability} through proposals,
we require the following target-update law: conditional on
the current virtual states and realized proposals, the update
indicators of all proposed copies are mutually independent,
and each proposed copy of product \(i\) is updated with
probability \(p_ta_{ti}\).
Nonproposing copies remain unchanged.
Conditional on the current virtual states and averaging
over the proposal draws, the probability of removing \(I_t\)
from its containing maximal free interval \([a,b]\) is
\[
\pi_{tic}([a,b])p_ta_{ti}
=
\frac{y_{tic}([a,b])}{x_{tic}([a,b])}
=
\kappa_{tic}([a,b]),
\]
with the stated zero-denominator convention.
Together with independent proposal draws, this update law
reproduces the direct transitions jointly across copies.
\end{remark}

It remains to couple these virtual updates with customer
purchases so that every purchase can be assigned to an
updated copy. Section~\ref{sec:dynamic-general} constructs such a coupling while
preserving the prescribed joint update law.


\section{Purchase--Update Coupling}
\label{sec:dynamic-general}

We now couple customer purchases with the virtual updates specified in Remark~\ref{remark:target_update}. In each nonterminal period, conditional on the current virtual states and realized proposals, the prescribed update law selects each proposed copy of product \(i\) independently with probability \(p_ta_{ti}\). The coupling must recover this joint law after averaging over purchase outcomes, while ensuring that every purchase is assigned to an updated proposed copy of the purchased product. This construction reproduces the SBLP interval flows and, together with the containment of virtual availability within physical availability, ensures feasible assignment.

\subsection{Purchase--Update Coupling Conditions}
\label{subsec:categorical-product}

Fix a nonterminal period and suppress the time index. Condition
on the current virtual states and realized copy proposals.
Let \(m\) denote the number of visible products, indexed by \(i\in[m]\). For each product \(i\), let \(m_i\ge1\) denote the number of its proposed copies, indexed locally by \(\ell\in[m_i]\). Copies retain their
physical identities for assignment, while each product appears
at most once in the assortment. The static assortment oracle
selects \(S\in\operatorname*{arg\,max}_{A\subseteq[m]}R(A)\),
where \(R(A):=\sum_{i\in A}r_i\phi_i(A)\).
All probabilities below are also conditional on \(S\).

The policy draws an independent gate
\(G\sim\operatorname{Bernoulli}(\gamma)\), with
\(\gamma\in[0,1]\), before observing the arrival outcome.
An arriving customer is offered \(S\) if \(G=1\) and the empty
assortment otherwise. Subsection~\ref{subsec:gamma-coefficient}
specifies \(\gamma\); setting \(\gamma=1\) gives the ungated policy.

Let \(J\in\{0,1,\ldots,m\}\) denote the purchase outcome:
\(J=i\) means that product \(i\) is purchased, and \(J=0\)
denotes no purchase due to no arrival, a closed gate, or
selection of the outside option.
The purchase probabilities are
\(q_i:=\Prb(J=i)=\gamma p\phi_i(S)\) for \(i\in S\),
\(q_i=0\) otherwise, and \(q_0:=1-\sum_{i=1}^m q_i\),
where \(p\) is the arrival probability.

For each proposed copy \(\ell\in[m_i]\) of product \(i\),
let \(X_{i\ell}\) indicate whether it receives a virtual update.
The target-update law in Remark~\ref{remark:target_update}
requires mutually independent indicators
\(X_{i\ell}\sim\operatorname{Bernoulli}(\alpha_i)\),
where \(\alpha_i:=pa_i\) and \(a_i:=\phi_i(\{i\})\).
The gate changes the purchase probabilities but leaves this
joint update law unchanged.
Every purchase must be assigned to an updated proposed copy
of the purchased product. Updates to other copies remove the
requested interval only from their virtual free-slot sets,
without reserving physical capacity.

\textbf{Product-Level Coupling Conditions.}
Define
\(U_i:=\one\{\sum_{\ell=1}^{m_i}X_{i\ell}\ge1\}\)
and \(U:=\{i\in[m]:U_i=1\}\).
Under the prescribed copy-update law, the \(U_i\) are
independent Bernoulli indicators with parameters
\(\eta_i:=1-(1-\alpha_i)^{m_i}\).

We seek a coupling that preserves \(\Prb(J=j)=q_j\)
and ensures that every purchased product belongs to the update set:
\begin{equation}
J=i
\quad\Longrightarrow\quad
i\in U,
\qquad i\in[m],
\quad\text{almost surely}.
\label{eq:purchase-forces-update}
\end{equation}
Averaging over the purchase outcome must recover the prescribed
independent product-update law: for every
\(u=(u_1,\ldots,u_m)\in\{0,1\}^m\),
\begin{equation}
\sum_{\substack{j\in\{0,\ldots,m\}\\q_j>0}}
q_j\Pr\bigl((U_1,\ldots,U_m)=u\mid J=j\bigr)
=
\prod_{i=1}^m\eta_i^{u_i}(1-\eta_i)^{1-u_i}.
\label{eq:coupling-objectives}
\end{equation}
The update indicators need not be independent conditional on \(J\). In particular, observing \(J=i\) forces \(U_i=1\).

In the single-copy MNL setting, \citet{huwu2026} generate virtual updates after observing the customer's choice, ensuring that the copy fulfilling the purchase is updated and recovering the independent update law after averaging over outcomes. We extend this approach to multiple copies and general Markov chain choice by characterizing when the purchase and update distributions admit such a coupling.

For any \(A\subseteq[m]\), \eqref{eq:purchase-forces-update} implies \(\{J\in A\}\subseteq\{U\cap A\ne\varnothing\}\), yielding the necessary inequalities below. The priority-rule representation in the next subsection establishes their sufficiency constructively.

\begin{lemma}[Hall-type coupling characterization]
\label{lem:categorical-product}
Let \(J\in\{0,1,\ldots,m\}\) have probabilities
\((q_0,q_1,\ldots,q_m)\), and let \(U\subseteq[m]\) have
mutually independent Bernoulli\((\eta_i)\) indicators.
A coupling of \(J\) and \(U\) satisfying
\eqref{eq:purchase-forces-update} exists if and only if
\begin{equation}
\sum_{i\in A}q_i
\le
1-\prod_{i\in A}(1-\eta_i),
\qquad A\subseteq[m].
\label{eq:hall-condition}
\end{equation}
\end{lemma}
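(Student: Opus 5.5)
Necessity is immediate, as the paper notes: under \eqref{eq:purchase-forces-update}, for every \(A\subseteq[m]\) we have \(\{J\in A\}\subseteq\{U\cap A\neq\varnothing\}\). Taking probabilities gives \(\sum_{i\in A}q_i\le\Prb(U\cap A\ne\varnothing)=1-\prod_{i\in A}(1-\eta_i)\) by independence of the \(U_i\). The substance is sufficiency, which I would prove constructively. First sample \(U\) from its product law. Then, conditional on \(U\), select \(J\in U\cup\{0\}\) with a kernel whose average over \(U\) returns exactly \((q_0,\ldots,q_m)\). Constructing the joint law of \((U,J)\) and then conditioning on \(J\) yields both \eqref{eq:purchase-forces-update} and \eqref{eq:coupling-objectives}, since the \(U\)-marginal is the product law by design.

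The key step is to recognize the set function \(g(A):=1-\prod_{i\in A}(1-\eta_i)=\Prb(U\cap A\ne\varnothing)\) as a coverage function. It is normalized, monotone, and submodular. Submodularity follows because \(\Prb(U\cap A\ne\varnothing)\) is the measure of a union of events indexed by \(A\); alternatively, the marginal gain of adding \(i\) to \(A\) is \(\eta_i\prod_{k\in A}(1-\eta_k)\), which is decreasing in \(A\). Condition \eqref{eq:hall-condition} says that \(q=(q_1,\ldots,q_m)\) lies in the polymatroid \(\{x\ge0:x(A)\le g(A)\ \forall A\}\).

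By Edmonds' greedy theorem, the vertices of this polymatroid are generated by orderings \(\sigma\) of \([m]\) together with prefixes. The vertex associated with \(\sigma\) truncated after its first \(k\) elements has coordinate \(\Prb(\sigma(s)\in U,\ \sigma(1),\ldots,\sigma(s-1)\notin U)\) for \(s\le k\) and \(0\) afterward. The nonzero-prefix truncations handle points not on the top face; equivalently, the polymatroid is down-closed and \(0\) is a vertex. Each such vertex is realized by a \emph{priority rule}: given \(U\), set \(J\) equal to the highest-priority element of \(U\) among the first \(k\) elements of \(\sigma\), or \(J=0\) if there is none. This rule obviously satisfies \(J\in U\cup\{0\}\).

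By Carathéodory's theorem, \(q\) is a convex combination of at most \(m+1\) such vertices. Randomizing over the corresponding priority rules, with randomness independent of \(U\), therefore gives a kernel with marginals exactly \(q\). The residual mass \(q_0\) is automatically correct because the probabilities sum to one.

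The step I expect to need care is making the polymatroid vertex description precise. The coordinates \(q_i\) may be strictly below the top face, and some \(\eta_i\) may equal \(0\) or \(1\). I would handle this by invoking the standard fact that a polymatroid equals the convex hull of its greedy vertices over all orderings and prefixes. If one prefers an explicit argument, an alternative is to reduce to the top face: introduce a dummy "no-purchase" coordinate, or scale down the later coordinates of a chain decomposition. The paper's next subsection suggests an explicit priority-rule decomposition, so I would also note that the decomposition can be computed greedily, for example by the standard polymatroid membership and decomposition algorithm, although existence alone suffices for the lemma.
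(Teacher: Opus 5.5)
Your argument is correct, but its sufficiency half takes a different route from the paper's proof of this lemma. The paper argues by max-flow/min-cut on a bipartite network. It has left nodes $j\in\{0,1,\ldots,m\}$ with source capacities $q_j$, right nodes $B\subseteq[m]$ with sink capacities $\Prb(U=B)$, an edge from $i>0$ to $B$ whenever $i\in B$, and edges from $0$ to every $B$. Left sets containing $0$ give cut conditions that hold trivially, and left sets $A\subseteq[m]$ give exactly \eqref{eq:hall-condition}. Hence a unit flow exists, and it is read off as the joint law of $(J,U)$.

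You instead place $q$ in the coverage polymatroid $P(g)$ and use Edmonds' greedy convex-hull description together with Carath\'eodory's theorem. This is precisely the argument the paper reserves for Proposition~\ref{prop:priority-decomposition}. The flow proof is more elementary and self-contained. However, it produces an unstructured coupling over $2^m$ update sets and, as the paper itself notes, establishes existence only.

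Your route gives a stronger conclusion in one stroke: $J$ is a priority-rule selection from $U$, driven by an independent mixing index over at most $m+1$ rules. This is exactly the structure that the posterior sampler and the polynomial-time implementation need, so you effectively prove the lemma and the decomposition proposition together. The price is reliance on the polymatroid vertex theorem. The paper verifies it directly: for any $c\in\R^m$, listing the positive-weight coordinates in nonincreasing order maximizes $c^{\mathsf T}x$ over $P(g)$, and a separating-hyperplane argument then gives $P(g)=\operatorname{conv}\{v^\pi\}$. That short argument also settles your concerns about points below the top face and boundary values $\eta_i\in\{0,1\}$, with no dummy-coordinate reduction needed.
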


The two sides of \eqref{eq:hall-condition} are the probability
of purchasing a product in \(A\) and the probability that at
least one product in \(A\) receives an update, respectively.
Regularity and \(m_i\ge1\) give
\(q_i\le\alpha_i\le\eta_i\), so the singleton conditions
always hold. Larger subsets require additional conditions
because purchase events are mutually exclusive, whereas
update events can overlap.

For example, suppose \(p=\gamma=1\), one copy of each of two
products proposes, and both products are offered.
The customer starts at either product with equal probability
and transitions directly to the outside option if that product
is unavailable. Then \(q_1=q_2=\eta_1=\eta_2=1/2\).
Both singleton conditions hold, but the condition for
\(A=\{1,2\}\) fails: a purchase occurs with probability one,
whereas at least one update occurs with probability only
\(3/4\). Hence, no coupling can support every purchase while
preserving the independent update law.

The gate reduces purchase probabilities while preserving
the update distribution.
Subsection~\ref{subsec:gamma-coefficient} shows that
\(\gamma=1-1/e\) guarantees all subset inequalities uniformly
over Markov chain choice models and proposal realizations.
Under MNL, they hold with \(\gamma=1\), so no gate is needed.
Assuming \eqref{eq:hall-condition} holds, we next construct
a sampler conditional on the observed purchase outcome
and lift its product-level updates to individual copies.

\subsection{Priority-Rule Representation and Conditional Sampling}
\label{subsec:priority-sampling}

Lemma~\ref{lem:categorical-product} characterizes when the desired coupling exists. To construct it, we first draw \(U\) from its prescribed independent Bernoulli distribution and use a randomized priority rule to generate an auxiliary purchase outcome. We then reverse the sampling order to generate updates conditional on the actual purchase outcome.

Let \(\pi=(i_1,\ldots,i_k)\) be an ordered list of distinct
products in \([m]\), allowing \(k=0\). We examine the listed
products in order and select the first one belonging to
\(U\). If none belongs to \(U\), we return zero. We call
this selection procedure a \emph{priority rule} and define
\[
\mathsf{sel}_{\pi}(U)
:=
\begin{cases}
i_{\min\{\ell\in[k]:i_\ell\in U\}},
&
U\cap\{i_1,\ldots,i_k\}\ne\varnothing,
\\[2mm]
0,
&
\text{otherwise}.
\end{cases}
\]
For every \(i\in[m]\), \(\mathsf{sel}_{\pi}(U)=i\)
implies \(i\in U\), so each priority rule satisfies the
purchase-support condition. The empty rule always returns
zero.

A single priority rule need not produce the prescribed
purchase probabilities. We therefore randomize over rules.
The following proposition shows that, whenever the Hall-type
conditions hold, a mixture of at most \(m+1\) priority rules
reproduces the target purchase distribution exactly.

\begin{proposition}[Priority-rule decomposition]
\label{prop:priority-decomposition}
Suppose \(U_i\sim\operatorname{Bernoulli}(\eta_i)\)
independently for \(i\in[m]\). For an ordered list
\(\pi=(i_1,\ldots,i_k)\), define
\(v^\pi=(v_1^\pi,\ldots,v_m^\pi)\) by
\[
v^\pi_{i_\ell}
:=
\Prb\bigl(\mathsf{sel}_{\pi}(U)=i_\ell\bigr)
=
\eta_{i_\ell}\prod_{s<\ell}(1-\eta_{i_s}),
\qquad
\ell=1,\ldots,k,
\]
and set \(v_i^\pi:=0\) for products omitted from \(\pi\).
The no-selection probability is
\[
v_0^\pi
:=
\Prb\bigl(\mathsf{sel}_{\pi}(U)=0\bigr)
=
\prod_{\ell=1}^k(1-\eta_{i_\ell})
=
1-\sum_{i=1}^m v_i^\pi,
\]
where an empty product equals one.

If \(q=(q_1,\ldots,q_m)\in\R_+^m\) satisfies
\eqref{eq:hall-condition}, then there exist priority rules
\(\pi^1,\ldots,\pi^L\) and weights
\(\lambda_1,\ldots,\lambda_L\ge0\), with \(L\le m+1\)
and \(\sum_{h=1}^L\lambda_h=1\), such that
\begin{equation}
q
=
\sum_{h=1}^L\lambda_hv^{\pi^h}.
\label{eq:priority-decomposition}
\end{equation}
For rational \(q\) and \(\eta\), such a decomposition can be
computed in polynomial time using a greedy priority-rule oracle.
The fixed irrational gate is handled by the rational reduction
in Appendix~\ref{app:exact-gate-computation}.
\end{proposition}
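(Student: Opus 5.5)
The plan is to identify the Hall-type feasible region with the independence/base polytope of a coverage polymatroid and then invoke the greedy characterization of its vertices. Define \(g(A):=1-\prod_{i\in A}(1-\eta_i)\) for \(A\subseteq[m]\), with \(g(\varnothing)=0\).

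First I would check that \(g\) is a polymatroid rank function. It is normalized, and monotone since \(\prod_{i\in A}(1-\eta_i)\) is nonincreasing in \(A\). For submodularity, the marginal gain \(g(A\cup\{j\})-g(A)=\eta_j\prod_{i\in A}(1-\eta_i)\) is nonincreasing in \(A\). Hence
\[
\mathcal P(g):=\Bigl\{q\in\R_+^m:\textstyle\sum_{i\in A}q_i\le g(A)\ \forall A\Bigr\}
\]
is a polymatroid, and \eqref{eq:hall-condition} says exactly \(q\in\mathcal P(g)\).

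Second, I would invoke the classical result of \citet{edmonds2003submodular}: the vertices of \(\mathcal P(g)\) are the greedy vectors. These are obtained by choosing an ordered list \(\pi=(i_1,\ldots,i_k)\) of distinct elements and setting \(v_{i_\ell}=g(\{i_1,\ldots,i_\ell\})-g(\{i_1,\ldots,i_{\ell-1}\})\) on listed elements and zero elsewhere. Because the marginal gain computed above equals \(\eta_{i_\ell}\prod_{s<\ell}(1-\eta_{i_s})\), this vertex coincides with \(v^\pi\). The selection probability identity \(\Prb(\mathsf{sel}_\pi(U)=i_\ell)\) is immediate from independence of the \(U_i\). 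The same independence gives the formula for \(v_0^\pi\) and the identity \(v_0^\pi=1-\sum_i v_i^\pi\) by telescoping.

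Third, since \(\mathcal P(g)\) is a bounded polytope in \(\R^m\) whose vertex set consists of these greedy vectors (prefix lists including the empty one, which gives the origin), any \(q\in\mathcal P(g)\) is a convex combination of vertices. Carathéodory's theorem in dimension \(m\) then reduces the combination to at most \(m+1\) of them, which gives \eqref{eq:priority-decomposition} with \(L\le m+1\). The step I expect to need care is precisely this vertex characterization including non-full prefixes. I would rely on the standard statement that for a monotone submodular \(g\), maximizing \(w^\top q\) over \(\mathcal P(g)\) is solved by sorting the positive weights in decreasing order and applying greedy up to the last positive weight. Every vertex is such a maximizer for some \(w\), so it has the form \(v^\pi\).

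For the algorithmic claim, I would use the constructive version of Carathéodory via a linear-optimization oracle over \(\mathcal P(g)\), which is the greedy procedure and runs in polynomial time with rational \(\eta\). Concretely, starting from \(q\), I would repeatedly move along a face: find a greedy vertex on the minimal face containing the current point, determined by the chain of tight sets, and step toward or away from it until a new set becomes tight. Each step increases the dimension of the tight-set chain, so the loop terminates after at most \(m+1\) iterations. Tight-set detection reduces to submodular minimization of \(g(A)-q(A)\), which is polynomial, with the consistent tight sets forming a lattice. Alternatively, the ellipsoid method with the greedy oracle, following \citet{grotschel1981ellipsoid}, yields the decomposition. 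Encoding lengths remain polynomial because products of rational \(\eta_i\) over subsets have polynomial bit size.
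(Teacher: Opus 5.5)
Your proposal is correct and follows essentially the paper's route: the Hall inequalities define the coverage polymatroid $P(g)$ with $g(A)=1-\prod_{i\in A}(1-\eta_i)$, the priority vectors are exactly its greedy vectors, Carath\'eodory gives $L\le m+1$, and the ellipsoid method with the greedy oracle handles rational data. The paper differs only in that it proves greedy optimality directly by telescoping, obtains $P(g)=\operatorname{conv}\{v^\pi\}$ by a separating-hyperplane argument instead of citing the vertex characterization, and runs the ellipsoid step on the minimum-mass LP \eqref{eq:priority-primal-lp}; your face-walking alternative would additionally need a parametric line search (not just tight-set detection), and it must stop at the facets $x_i=0$ as well as at newly tight sets.
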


This representation follows from the classical greedy
description of polymatroids \citep{edmonds2003submodular},
applied to the coverage function
\(g(A)=1-\prod_{i\in A}(1-\eta_i)\).
Appendix~\ref{app:priority-decomposition} gives the proof
and the decomposition algorithm.

Given this decomposition, construct the auxiliary joint
distribution as follows. Draw the indicators \(U_i\)
independently with probabilities \(\eta_i\). Independently
of \(U\), draw \(H\in[L]\) with
\(\Prb(H=h)=\lambda_h\), and set
\[
J^{\rm f}:=\mathsf{sel}_{\pi^H}(U).
\]
Conditional on \(H=h\), the outcome probabilities are
\((v_0^{\pi^h},v_1^{\pi^h},\ldots,v_m^{\pi^h})\).
The decomposition and normalization therefore give
\[
\Prb(J^{\rm f}=j)
=
\sum_{h=1}^L\lambda_hv_j^{\pi^h}
=
q_j,
\qquad j=0,1,\ldots,m.
\]
The update indicators retain their prescribed independent
Bernoulli distribution, and every priority rule ensures
that \(J^{\rm f}=i\) implies \(i\in U\).
Thus, \((J^{\rm f},U)\) has both required marginal
distributions and satisfies the purchase-support condition.

\textbf{Conditional Sampling After the Customer Outcome.}
The auxiliary outcome \(J^{\rm f}\) defines the joint law;
the online policy observes the actual purchase outcome \(J\)
and then samples from \(\mathcal L(H,U\mid J^{\rm f}=j)\)
when \(J=j\).
For an observed outcome \(J=j\) with \(q_j>0\), Bayes' rule gives
\[
\Prb(H=h\mid J^{\rm f}=j)
=
\frac{\lambda_hv_j^{\pi^h}}{q_j},
\qquad h\in[L].
\]
After drawing \(H\), the policy conditions the update indicators
on the selected rule returning \(j\).
If \(j=i>0\), it sets \(U_i=1\) and the indicators of products
preceding \(i\) in the rule to zero.
If \(j=0\), it sets all indicators listed in the rule to zero.
In either case, the remaining indicators are sampled independently
with their original Bernoulli parameters.
Thus, products omitted from the rule may receive updates even
when no purchase occurs.

Because \(J\) and \(J^{\rm f}\) have the same distribution,
this conditional sampling reproduces the auxiliary joint law
after averaging over \(J\).
Algorithm~\ref{alg:conditional-coupling} in
Appendix~\ref{app:conditional-coupling} gives the detailed procedure.

\begin{proposition}[Correctness of the posterior sampler]
\label{prop:conditional-coupling}
Suppose \eqref{eq:hall-condition} holds and \(J\) has
probabilities \((q_0,q_1,\ldots,q_m)\). Given a decomposition
satisfying \eqref{eq:priority-decomposition},
Algorithm~\ref{alg:conditional-coupling} has the following
properties:
\begin{enumerate}[label=(\roman*)]
\item
For every \(i\in[m]\), \(J=i\) implies \(i\in U\)
almost surely.

\item
After averaging over \(J\), the indicators
\((U_i)_{i\in[m]}\) are mutually independent
Bernoulli\((\eta_i)\) variables. Equivalently, their
distribution satisfies \eqref{eq:coupling-objectives}.
\end{enumerate}

Given a priority decomposition of size \(L\) as in Proposition~\ref{prop:priority-decomposition}, the posterior weights for all positive-probability outcomes can be precomputed in \(O((m+1)L)\) arithmetic operations. Each execution uses \(O(m+L)\) arithmetic operations, one categorical draw, and at most \(m\) Bernoulli draws. Since \(L\le m+1\), preprocessing requires \(O(m^2)\) arithmetic operations and each execution requires \(O(m)\) arithmetic operations and random draws.
\end{proposition}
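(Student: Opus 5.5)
The plan is to show that the posterior sampler reproduces, exactly, the joint law of the auxiliary pair \((J^{\rm f},H,U)\) built from the priority decomposition in Proposition~\ref{prop:priority-decomposition}. Both properties then transfer from that auxiliary construction. Concretely, I will show that for every outcome \(j\) with \(q_j>0\), the algorithm's output \((H,U)\) given \(J=j\) has law \(\mathcal L(H,U\mid J^{\rm f}=j)\). Since \(\Prb(J=j)=q_j=\Prb(J^{\rm f}=j)\), the triple \((J,H,U)\) produced online then has the same distribution as \((J^{\rm f},H,U)\).

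\textbf{Step 1: conditional law given \(H=h\).} Under the auxiliary law, \(H\) is independent of \(U\) and the \(U_i\) are independent Bernoulli\((\eta_i)\). Fix a rule \(\pi^h=(i_1,\ldots,i_k)\). The event \(\{\mathsf{sel}_{\pi^h}(U)=i_\ell\}\) is a product event: \(U_{i_s}=0\) for \(s<\ell\) and \(U_{i_\ell}=1\), with no restriction on other coordinates. Similarly, \(\{\mathsf{sel}_{\pi^h}(U)=0\}\) requires \(U_{i_s}=0\) for every listed product. Because the event is a product event, conditioning an independent vector on it fixes the constrained coordinates and leaves the others independent with their original parameters. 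This is exactly the rule the algorithm uses after drawing \(H\). One caveat: a draw \(H=h\) with \(v_j^{\pi^h}=0\) has posterior weight zero, so every conditioned event actually used has positive probability.

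\textbf{Step 2: posterior on \(H\) and averaging.} Bayes' rule gives \(\Prb(H=h\mid J^{\rm f}=j)=\lambda_h v_j^{\pi^h}/q_j\), the weights the algorithm samples. Together with Step 1, this identifies the conditional law of \((H,U)\) given \(J^{\rm f}=j\).

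\textbf{Step 3: the two properties.} Property (i) holds because, when \(J=i>0\), the algorithm sets \(U_i=1\) deterministically. For property (ii), summing over \(j\) with weights \(q_j\) recovers the auxiliary marginal of \(U\), which is independent Bernoulli\((\eta_i)\) by construction. This is equivalent to \eqref{eq:coupling-objectives}. Outcomes with \(q_j=0\) are never observed and contribute nothing.

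\textbf{Step 4: complexity.} Precomputing the posterior weights \(\lambda_h v_j^{\pi^h}/q_j\) requires \(O(L)\) values per outcome for \(m+1\) outcomes. Each \(v^{\pi^h}\) is obtained in \(O(m)\) operations from running prefix products, and those \(O(mL)\) products are absorbed into \(O((m+1)L)\). Each execution takes one categorical draw over \(L\) rules (\(O(L)\) operations), one pass over the rule to fix coordinates (\(O(m)\)), and at most \(m\) Bernoulli draws. Using \(L\le m+1\) gives the stated \(O(m^2)\) preprocessing and \(O(m)\) per-execution bounds.

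The only delicate point is Step 1, namely verifying that the selection events are product events. Once that is checked, the rest is bookkeeping.
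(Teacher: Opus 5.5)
Your proposal is correct and takes essentially the same route as the paper. You identify the sampler's output given $J=j$ with $\mathcal L(H,U\mid J^{\rm f}=j)$ using Bayes' rule and the fact that each selection event fixes only the prefix (or listed) coordinates, transfer both properties through $\Prb(J=j)=q_j=\Prb(J^{\rm f}=j)$, and bound the cost with the same $O(mL)$ prefix-product computation and $O(m+L)$ per-execution accounting.
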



\subsection{Copy-Level Updates and Physical Assignment}
\label{subsec:copy-lifting}
The product-level sampler determines whether each product
has at least one proposed copy selected for a virtual update.
We now select the individual copies consistently with these
outcomes. Given \(U\), generate each product's update vector
using its own indicator \(U_i\) and independent random draws
across products:
\begin{itemize}
\item
If \(U_i=0\), set \(X_{i\ell}=0\) for every \(\ell\in[m_i]\).

\item
If \(U_i=1\), sample \((X_{i1},\ldots,X_{im_i})\) from
the joint law of \(m_i\) independent
Bernoulli\((\alpha_i)\) variables, conditional on at least
one success. Thus, for every nonzero vector
\(x\in\{0,1\}^{m_i}\),
\begin{equation}
\Prb\bigl((X_{i1},\ldots,X_{im_i})=x\mid U_i=1\bigr)
=
\frac{
\alpha_i^{\lVert x\rVert_1}
(1-\alpha_i)^{m_i-\lVert x\rVert_1}
}{\eta_i},
\label{eq:copy-lifting-law}
\end{equation}
where \(\eta_i=1-(1-\alpha_i)^{m_i}\).
\end{itemize}
If \(\eta_i=0\), then \(U_i=0\) almost surely, so the
conditional law in \eqref{eq:copy-lifting-law} is never used.

After averaging over \(J\), the product indicators \(U_i\) have their prescribed independent distribution. For each product, mixing the zero-update and nonempty-update branches recovers the joint law of independent Bernoulli\((\alpha_i)\) copy updates. Because the lifting uses independent random draws across products, the resulting updates are independent across all proposed copies.
Appendix~\ref{app:copy-update-law} gives the proof and a
direct sampler requiring \(O(m_i)\) operations per product.

If \(J=i\), Proposition~\ref{prop:conditional-coupling} ensures \(U_i=1\), so at least one proposed copy of product \(i\) is selected for an update. Assign the purchase to one such copy and reserve the requested interval on it. Regardless of the purchase outcome, remove the requested interval from the virtual free-slot set of every proposed copy selected for an update, across all products. Only the assigned copy receives a physical reservation; all other updates are virtual discards. When \(J=0\), every update is a virtual discard.

Thus, whenever \eqref{eq:hall-condition} holds, the construction
preserves the independent copy-update law and assigns every
purchase to an updated proposed copy.
Subsection~\ref{subsec:gamma-coefficient} selects a gate
that guarantees these conditions.

\section{Online Policy and Performance Guarantees}
\label{sec:performance}

We now combine the proposal mechanism and purchase--update
coupling into an online policy. We first choose the gate
probability to ensure that the coupling conditions hold
for every proposal realization. We then describe the policy
and establish its feasibility and approximation guarantees.

\subsection{Gate Selection}
\label{subsec:gamma-coefficient}

The coupling constructed in Section~\ref{sec:dynamic-general} requires
the Hall-type inequalities in Lemma~\ref{lem:categorical-product}.
We now choose the gate probability so that these inequalities
hold for every proposal realization, while preserving the
prescribed independent copy-update law.

Fix a nonterminal period and suppress the time index. Treat the
current virtual states, realized copy proposals, and preselected
assortment \(S\) as fixed. For each visible product \(i\), let
\(m_i\ge1\) denote the number of its proposed copies. Each
proposed copy must receive a virtual update with probability
\(pa_i\), where \(a_i:=\phi_i(\{i\})\), independently of the
other proposed copies. Thus, product \(i\) receives at least
one update with probability
\(\eta_i:=1-(1-pa_i)^{m_i}\).

Recall that the policy draws an independent gate
\(G\sim\operatorname{Bernoulli}(\gamma)\) before observing
the arrival outcome and offers \(S\) if the gate is open
and the customer arrives. A closed gate results in an empty
offered assortment. The resulting purchase probabilities are
\(q_i=\gamma p\phi_i(S)\) for \(i\in S\) and zero for
visible products outside \(S\). The coupling must recover the
prescribed independent copy-update law after averaging over
the gate, arrival, and choice outcomes summarized by \(J\).

By Lemma~\ref{lem:categorical-product}, the gate is feasible
if and only if
\begin{equation}
\gamma p\sum_{i\in A}\phi_i(S)
\le
1-\prod_{i\in A}(1-\eta_i),
\qquad A\subseteq S.
\label{eq:gated-hall-condition}
\end{equation}
It suffices to consider subsets of \(S\): products outside
\(S\) have zero purchase probability, and adding them to
a subset can only increase its probability of receiving
at least one update. Because the gate is independent of
arrival and choice, the expected revenue conditional on
an arrival is \(\gamma R(S)\).


For every visible product, \(m_i\ge1\) implies
\(\eta_i=1-(1-pa_i)^{m_i}\ge pa_i\).
Thus, for any \(A\subseteq S\),
\(
1-\prod_{i\in A}(1-\eta_i)
\ge
1-\prod_{i\in A}(1-pa_i).
\)
The right-hand side is the update probability when each
product in \(A\) has exactly one proposed copy. A gate
that satisfies all Hall-type inequalities in this case
therefore remains feasible with any larger number of
proposed copies.

\begin{definition}[Product-update compatibility]
\label{def:gamma-choice}
For a fixed choice model \(\phi\), define
\begin{equation}
\Gamma(\phi)
:=
\min
\left\{
1,
\inf_{\substack{
p\in(0,1],\ S\subseteq\cM,\ A\subseteq S,
\sum_{i\in A}\phi_i(S)>0
}}
\frac{
1-\prod_{i\in A}(1-pa_i)
}{
p\sum_{i\in A}\phi_i(S)
}
\right\}.
\label{eq:gamma-choice}
\end{equation}
\end{definition}

Each ratio in \eqref{eq:gamma-choice} compares the probability
of at least one update in \(A\), with one proposed copy per
product, to the ungated purchase probability of \(A\).
Thus, \(\Gamma(\phi)\) is the largest gate acceptance
probability that satisfies all these one-copy inequalities,
uniformly over \(p\), \(S\), and \(A\).
Every \(\gamma\in[0,\Gamma(\phi)]\) consequently satisfies
\eqref{eq:gated-hall-condition} for every proposal realization.


The following result gives a fixed gate that is valid for
every Markov chain choice model.

\begin{proposition}[Universal Markov coupling gate]
\label{prop:universal-gamma}
For every Markov chain choice model,
\begin{equation}
\Gamma(\phi)
\ge
1-\frac1e.
\label{eq:universal-gamma}
\end{equation}
Consequently, the fixed gate \(\gamma=1-1/e\) satisfies
the Hall-type conditions in
Lemma~\ref{lem:categorical-product} for every proposal
realization and every assortment selected from the
visible products.
\end{proposition}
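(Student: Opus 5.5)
The plan is to bound each ratio in \eqref{eq:gamma-choice} from below by \(1-1/e\) directly. The argument uses only the regularity bound \eqref{eq:singleton-upper} and the fact that purchase probabilities over an assortment sum to at most one. Fix \(p\in(0,1]\), \(S\subseteq\cM\) and \(A\subseteq S\) with \(\sum_{i\in A}\phi_i(S)>0\). Write \(x_i:=pa_i\in[0,1]\) and \(s:=\sum_{i\in A}x_i\). It suffices to show
\[
1-\prod_{i\in A}(1-x_i)\;\ge\;\Bigl(1-\frac1e\Bigr)\,p\sum_{i\in A}\phi_i(S).
\]

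\textbf{Step 1 (upper bound on the purchase side).} For each \(i\in A\subseteq S\), \eqref{eq:singleton-upper} gives \(\phi_i(S)\le a_i\), so \(p\sum_{i\in A}\phi_i(S)\le s\). The purchase probabilities over \(S\cup\{0\}\) sum to one, so \(\sum_{i\in A}\phi_i(S)\le1\). Together with \(p\le1\), this gives \(p\sum_{i\in A}\phi_i(S)\le 1\). Hence
\[
p\sum_{i\in A}\phi_i(S)\le\min\{1,s\}.
\]

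\textbf{Step 2 (lower bound on the update side).} Since \(1-x\le e^{-x}\), we get \(1-\prod_{i\in A}(1-x_i)\ge 1-e^{-s}\). The function \(s\mapsto 1-e^{-s}\) is concave and vanishes at \(0\), so on \([0,1]\) it lies above its chord, giving \(1-e^{-s}\ge(1-1/e)s\). For \(s\ge1\) it is nondecreasing, so \(1-e^{-s}\ge 1-1/e\). Thus \(1-e^{-s}\ge(1-1/e)\min\{1,s\}\). Combining this with Step 1 bounds every ratio in \eqref{eq:gamma-choice} below by \(1-1/e\). Taking the infimum, and noting that \(1-1/e<1\) so the outer minimum with one does not interfere, yields \eqref{eq:universal-gamma}.

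\textbf{Step 3 (consequence).} The remark preceding Definition~\ref{def:gamma-choice} shows that \(\eta_i\ge pa_i\) whenever \(m_i\ge1\). Therefore the right-hand side of \eqref{eq:gated-hall-condition} is at least the one-copy quantity \(1-\prod_{i\in A}(1-pa_i)\). Any \(\gamma\le\Gamma(\phi)\), in particular \(\gamma=1-1/e\), then satisfies \eqref{eq:gated-hall-condition} for every \(A\subseteq S\). Subsets containing products outside \(S\) only enlarge the update probability while adding no purchase mass. So \eqref{eq:hall-condition} holds for every proposal realization and every selected \(S\).

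There is no real obstacle. The only point needing care is Step 1, where the regularity bound handles small \(s\) and the total-probability bound handles large \(s\). The chord inequality in Step 2 then interpolates between the two regimes.
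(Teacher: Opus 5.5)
Your proposal is correct and essentially matches the paper's proof: regularity and total purchase mass at most one bound the purchase side by \(\min\{1,x\}\) with \(x=p\sum_{i\in A}a_i\), and \(1-y\le e^{-y}\), concavity of \(1-e^{-x}\) on \([0,1]\), and monotonicity beyond give the \((1-1/e)\min\{1,x\}\) lower bound on the update side. Your Step 3 extension to multiple proposed copies via \(\eta_i\ge pa_i\) is also the paper's argument.
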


\proof{Proof.}
Fix \(p\in(0,1]\), \(S\subseteq\cM\),
and \(A\subseteq S\), and set \(x:=p\sum_{i\in A}a_i\).
Regularity gives \(\phi_i(S)\le a_i\), while the total
purchase probability conditional on an arrival is at
most one. Hence,
\(
p\sum_{i\in A}\phi_i(S)
\le
\min\{p,x\}
\le
\min\{1,x\}.
\)
Using \(1-y\le e^{-y}\), we obtain
\[
\begin{aligned}
1-\prod_{i\in A}(1-pa_i)
\ge 1-e^{-x}
\ge \left(1-\frac1e\right)\min\{1,x\} \ge \left(1-\frac1e\right)
p\sum_{i\in A}\phi_i(S).
\end{aligned}
\]
The second inequality follows from the concavity of
\(1-e^{-x}\) on \([0,1]\) and its monotonicity for
\(x\ge1\). Every ratio in \eqref{eq:gamma-choice} is
therefore at least \(1-1/e\), proving
\eqref{eq:universal-gamma}. Additional proposed copies
only increase the probability of at least one update,
so the same gate remains feasible.
\hfill\(\blacksquare\)
\endproof

This proof uses only regularity and the fact that total
purchase probability is at most one. The gate bound
therefore applies to any regular discrete-choice model.

\textit{Sharpness of the uniform gate.}
The constant \(1-1/e\) in Proposition~\ref{prop:universal-gamma}
is the largest scalar gate that works uniformly over all
Markov chain choice models.

\begin{proposition}[Sharpness of the universal Markov gate]
\label{prop:gamma-sharp}
No scalar gate \(\gamma>1-1/e\) satisfies the required Hall-type
inequalities uniformly over all Markov chain choice models.
Hence, the constant \(1-1/e\) in
Proposition~\ref{prop:universal-gamma} is best possible among
uniform scalar gates.
\end{proposition}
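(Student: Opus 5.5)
We will exhibit, for each \(\gamma>1-1/e\), a Markov chain choice model, an arrival probability \(p\in(0,1]\), an assortment \(S\), and a subset \(A\subseteq S\) with one proposed copy per product, such that
\[
\gamma p\sum_{i\in A}\phi_i(S)>1-\prod_{i\in A}(1-pa_i).
\]
This violates \eqref{eq:gated-hall-condition}. By Lemma~\ref{lem:categorical-product}, no coupling then exists for this proposal realization. The proof of Proposition~\ref{prop:universal-gamma} shows where equality must come from. We need \(p\sum_{i\in A}\phi_i(S)=1\), so \(p=1\) and every arriving customer purchases within \(A\). We also need \(\sum_{i\in A}a_i\approx 1\) with each \(a_i\) small, so that \(\prod(1-a_i)\to e^{-1}\). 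Finally, \(\phi_i(S)=a_i\), meaning regularity is tight for every product.

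\textbf{Construction.} Take \(p=1\), products \(\cM=[n]\), and \(S=A=\cM\). Let \(\lambda_i=1/n\) for each \(i\in[n]\), with \(\lambda_0=0\). From every product state, let the chain move directly to the outside option \(0\) with probability one. The product submatrix is then zero, hence transient, as assumed in Subsection~\ref{subsec:model-choice}.

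\textbf{Choice probabilities.} When a product \(i\) is offered, a customer starting at \(i\) buys it. A customer starting elsewhere goes to \(0\) and can never reach \(i\). Hence \(a_i=\phi_i(\{i\})=1/n\) and \(\phi_i(S)=1/n\) for all \(i\), so \(\sum_{i\in A}\phi_i(S)=1\). This is the \(n\)-product version of the two-product example following Lemma~\ref{lem:categorical-product}.

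\textbf{Comparing the two sides.} The right-hand side of \eqref{eq:gated-hall-condition} equals \(1-(1-1/n)^n\), which decreases to \(1-1/e\) as \(n\to\infty\). It is still the update probability under the target-update law of Remark~\ref{remark:target_update}, because each proposed copy updates with probability \(pa_i=1/n\). Hence the ratio in \eqref{eq:gamma-choice} is \(1-(1-1/n)^n\). Consequently \(\Gamma(\phi)\le 1-(1-1/n)^n\) for this model.

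\textbf{Conclusion.} Given \(\gamma>1-1/e\), choose \(n\) large enough that \(1-(1-1/n)^n<\gamma\). The Hall inequality for \(A=S\) then fails, so \(\gamma\) is infeasible.

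It remains to check that this configuration can actually arise in the policy, since Lemma~\ref{lem:categorical-product} is applied conditional on a proposal realization and a selected assortment. We need the realization in which every product has exactly one proposing copy, and the oracle must select \(S=\cM\). With equal positive revenues, say \(r_i=1\), we have \(R(A')=|A'|/n\), so \(S=\cM\) is the unique maximizer.

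To make the all-visible realization have positive probability, take one copy per product and a one-period instance. Choose SBLP flows with \(\theta_{\cM}=1\), so \(z_i=1/n=a_i\), \(\rho_i=1\), and every copy proposes with probability one. If one prefers an interior example, \(\rho_i<1\) still gives the all-visible realization positive probability, and the Hall condition is evaluated on that realization.

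The main obstacle is only this bookkeeping: confirming that the uniform requirement in Proposition~\ref{prop:universal-gamma} ranges over all proposal realizations and assortments selected from visible products, so that a single bad realization suffices. Beyond that, the argument is the limit \((1-1/n)^n\to e^{-1}\).
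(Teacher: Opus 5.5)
Your construction, the violated inequality for \(A=S=\cM\) with \(p=1\), and the limit \(1-(1-1/n)^n\downarrow 1-1/e\) match Steps~1--2 of the paper's proof. If the claim is read purely as \(\Gamma(\phi)\le 1-(1-1/n)^n\), you are done at that point. The gap is in the step you yourself flag as necessary: showing that the obstruction actually arises inside the policy.

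In a one-period instance the only period is \(t=T\), which is terminal. Algorithm~\ref{alg:general-policy} sets \(G_T=1\) there, skips the purchase--update coupling, and assigns any purchase directly to a proposed copy. So no Hall inequality is ever imposed, and your instance exhibits no infeasibility. Your interior variant with \(\rho_i<1\) does not help. It is still the terminal period, and the unique SBLP optimum of this instance forces \(\rho_i=1\) anyway.

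The fix, which is what the paper does, is to make the bad period nonterminal. Append a second period that requests the same single service slot, with the same arrival and choice parameters but zero revenues. The balance equations give \(x_{2,i,1}([1,1])=1-1/n\). Taking the empty assortment and zero flows in period~2 gives a feasible SBLP solution of value one. This solution is optimal, because period-2 revenue is zero and period-1 revenue is at most the total purchase probability.

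Moreover, every optimal solution must put weight one on \(\cM\) in period~1, since \(R(S)=|S|/n<1\) for \(S\ne\cM\). Hence \(y_{1i1}([1,1])=p_1a_{1i}x_{1i1}([1,1])\), and every first-period copy proposes with probability one regardless of which optimum the policy uses. The oracle then selects \(\cM\), and the gated coupling in the now-nonterminal period~1 faces exactly your violated inequality.
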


This result concerns gates that must be valid throughout the
Markov family. Gates tailored to the realized instance or
different proposal and update mechanisms may permit stronger
guarantees.

\textit{Lossless coupling under MNL.}
Under the positive-outside normalization in
\eqref{eq:mnl-choice}, MNL is a special case of Markov
chain choice. Its choice probabilities satisfy a stronger
compatibility property: all Hall-type inequalities hold
without gating. Consequently, the coupling can be
implemented with \(\gamma=1\).

\begin{proposition}[Lossless product-update coupling under MNL]
\label{prop:mnl-gamma-one}
Under the positive-outside normalization in \eqref{eq:mnl-choice},
\(\Gamma(\phi)=1\). Thus, the customer outcome can be coupled with
the prescribed independent product-update law without attenuating
the purchase probabilities.
\end{proposition}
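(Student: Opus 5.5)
The plan is to reduce the infimum in Definition~\ref{def:gamma-choice} to a single elementary inequality. Since \(\Gamma(\phi)\le 1\) by definition, it suffices to show that every ratio in \eqref{eq:gamma-choice} is at least one. That is, for every \(p\in(0,1]\), \(S\subseteq\cM\), and \(A\subseteq S\),
\[
1-\prod_{i\in A}\Bigl(1-p\,\frac{u_i}{1+u_i}\Bigr)\;\ge\;p\sum_{i\in A}\frac{u_i}{1+\sum_{j\in S}u_j},
\]
where \(a_i=u_i/(1+u_i)\) under \eqref{eq:mnl-choice}.

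\textbf{Reduction to \(A=S\).} By regularity, or directly from the MNL formula, \(\phi_i(S)\le\phi_i(A)\) for \(i\in A\subseteq S\). The right-hand side can therefore only increase if \(S\) is replaced by \(A\), while the left-hand side depends only on \(A\). It suffices to treat \(S=A\). Write \(U:=\sum_{i\in A}u_i\), so the target becomes
\[
g(p):=1-\prod_{i\in A}(1-pa_i)\;\ge\;p\,\frac{U}{1+U}.
\]

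\textbf{Reduction to \(p=1\) by concavity.} The function \(p\mapsto\prod_{i\in A}(1-pa_i)\) is a product of nonnegative, nonincreasing affine functions on \([0,1]\). Its second derivative is \(\sum_{i\ne j}a_ia_j\prod_{k\ne i,j}(1-pa_k)\ge0\), so the product is convex and \(g\) is concave on \([0,1]\). Since \(g(0)=0\) and the right-hand side is linear in \(p\) and vanishes at \(p=0\), it suffices to verify the inequality at \(p=1\). Concavity then gives \(g(p)\ge p\,g(1)\).

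\textbf{The case \(p=1\).} Here \(1-a_i=1/(1+u_i)\), so the claim reads
\[
1-\prod_{i\in A}\frac{1}{1+u_i}\;\ge\;1-\frac{1}{1+U},
\]
which is equivalent to \(\prod_{i\in A}(1+u_i)\ge 1+\sum_{i\in A}u_i\). This holds by expanding the product, because all \(u_i>0\).

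Together these steps show every ratio in \eqref{eq:gamma-choice} is at least one, hence \(\Gamma(\phi)=1\). Proposition~\ref{prop:universal-gamma}'s remark on additional proposed copies then extends the Hall conditions \eqref{eq:gated-hall-condition} with \(\gamma=1\) to all proposal realizations. The only step needing care is the concavity reduction in \(p\); the rest is algebra. I expect no real obstacle.
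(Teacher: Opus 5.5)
Your proposal is correct and follows the paper's proof essentially step for step. The paper also bounds the purchase mass by \(pU_A/(1+U_A)\), which is your reduction to \(S=A\). It then uses concavity of \(s\mapsto 1-\prod_{i\in A}(1-sa_i)\), with value zero at \(s=0\), to reduce to \(s=1\), and concludes from \(\prod_{i\in A}(1+u_i)\ge 1+\sum_{i\in A}u_i\).
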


These results specify the gate probabilities used below:
\(\gamma=1-1/e\) under general Markov chain choice and
\(\gamma=1\) under MNL. We now combine these gates with
the proposal and coupling constructions.

\subsection{Online Policy}
\label{subsec:general-policy}

The policy uses an optimal SBLP solution to generate candidate
products in each period. It first draws independent proposals
from virtually feasible copies and makes a product visible
whenever one of its copies proposes. It then selects a revenue-maximizing assortment from the visible products. In each nonterminal period, the policy applies the selected gate and, after observing the purchase outcome, samples virtual updates and assigns any purchase to an updated proposed copy of the purchased product. Together, the updates associated with purchases and virtual discards reproduce the SBLP interval flows used to generate future proposals.

\textbf{Physical and virtual availability.}
For each copy \((i,c)\), the policy maintains an actual
free-slot set \(\mathcal F_{tic}\) and a virtual free-slot
set \(\overline{\mathcal F}_{tic}\), both initialized to \(\cN\).
The actual set changes only when a purchase is assigned to
that copy; the virtual set determines its proposals.
Removing an interval only from the virtual set is called a
\emph{virtual discard}.
The policy maintains
\(\overline{\mathcal F}_{tic}\subseteq\mathcal F_{tic}\),
so every proposed copy can physically fulfill the request.
In nonterminal periods, the updates of
\(\overline{\mathcal F}_{tic}\) implement the transitions
defined for \(\widehat{\mathcal F}_{tic}\) in
Lemma~\ref{lem:copy-flow-realizability}.

For example, in a nonterminal period, suppose two copies of a
product propose and both receive virtual updates. If a purchase is assigned to the first
copy, \(I_t\) is removed from both virtual sets but only from
the first copy's actual set. The second copy remains physically
available for \(I_t\); its update is a virtual discard.

\textbf{Decisions and updates.}
Let \(\mathcal O_t(A)\) return a revenue-maximizing assortment
contained in \(A\). We use \(\gamma=1-1/e\) under general
Markov chain choice and \(\gamma=1\) under MNL, as justified
by Propositions~\ref{prop:universal-gamma}
and~\ref{prop:mnl-gamma-one}, respectively.
Remark~\ref{rem:mnl-zero-outside} covers the MNL boundary case.

For each nonterminal period, we use the purchase--update
coupling developed above as a subroutine.
Given the proposals and selected assortment, it forms the
purchase and update probabilities from
Subsection~\ref{subsec:categorical-product} and computes the
priority decomposition in
Proposition~\ref{prop:priority-decomposition}.
It then applies Algorithm~\ref{alg:conditional-coupling}
conditional on the observed purchase outcome and lifts the
product updates to individual copies using
Subsection~\ref{subsec:copy-lifting}.
Appendix~\ref{app:exact-gate-computation} handles the fixed
irrational gate.

Let \(W_{ti}\) denote the labels of proposed copies of
product \(i\) selected for a virtual update.
The coupling ensures \(W_{ti}\ne\varnothing\) whenever
\(J_t=i\), allowing the purchase to be assigned to an updated
copy. Updates are sampled even when \(J_t=0\); in that case,
all updates are virtual discards.

The terminal period requires no gate or coupling because
no further proposal distributions need to be preserved.
Any purchase is assigned directly to a proposed copy, and
the requested interval is removed from both of that copy's
states. Let \(P_t\) denote the set of products with at least one proposed copy in period \(t\). When \(P_t=\varnothing\), the offered assortment and all update sets are empty.

In the terminal period, one could instead optimize over all
physically available products \(\cA_T\) and assign any purchase
to a feasible copy. Since \(P_T\subseteq\cA_T\), we have
\(f_T(\cA_T)\ge f_T(P_T)\), so this modification weakly
increases revenue and preserves the approximation guarantees.
Algorithm~\ref{alg:general-policy} retains proposal-based
assortment selection in every period for a uniform policy
definition.

\begin{algorithm}[t]
\caption{Multi-Capacity Markov Proposal-Discarding Policy}
\label{alg:general-policy}
\begin{algorithmic}[1]
\footnotesize
\REQUIRE An optimal copy-level SBLP solution \((x,y,z)\),
assortment oracles \((\mathcal O_t)\), and gate probability
\(\gamma\).

\STATE Initialize
\(\mathcal F_{1ic}=\overline{\mathcal F}_{1ic}=\cN\)
for every copy \((i,c)\).

\FOR{\(t=1,\ldots,T\)}
    \STATE Copy all current actual and virtual states
    to period \(t+1\).

    \STATE Generate the proposed-copy set \(P_t^c\)
    independently across virtually feasible copies using
    \eqref{eq:copy-proposal}.

    \STATE Form
    \(P_t:=\{i:\exists c,\ (i,c)\in P_t^c\}\)
    and select \(\widehat S_t\leftarrow\mathcal O_t(P_t)\).

    \STATE If \(t<T\), independently draw
    \(G_t\sim\operatorname{Bernoulli}(\gamma)\);
    otherwise set \(G_t\leftarrow1\).

    \STATE Set \(S_t\leftarrow\widehat S_t\) if \(G_t=1\),
    and \(S_t\leftarrow\varnothing\) otherwise.

    \STATE Observe the arrival. If a customer arrives,
    offer \(S_t\) and observe \(J_t\in S_t\cup\{0\}\);
    otherwise set \(J_t\leftarrow0\).

    \IF{\(t<T\)}
        \STATE Sample \((W_{ti})_{i\in P_t}\) conditional on
        \(J_t\) using Proposition~\ref{prop:priority-decomposition},
        Algorithm~\ref{alg:conditional-coupling}, and the lifting in
        Subsection~\ref{subsec:copy-lifting}, including when \(J_t=0\).
    \ELSE
        \STATE Set \(W_{ti}\leftarrow\varnothing\)
        for every \(i\in P_t\).
        \STATE If \(J_t>0\), choose any proposed copy
        \((J_t,c)\in P_t^c\) and set
        \(W_{t,J_t}\leftarrow\{c\}\).
    \ENDIF

    \IF{\(J_t>0\)}
        \STATE Set \(i\leftarrow J_t\), assign the purchase
        to any \(c^*\in W_{ti}\), and remove \(I_t\)
        from \(\mathcal F_{t+1,i,c^*}\).
    \ENDIF

    \STATE For every \(i\in P_t\) and \(c\in W_{ti}\),
    remove \(I_t\) from
    \(\overline{\mathcal F}_{t+1,i,c}\).
\ENDFOR
\end{algorithmic}
\end{algorithm}

\textbf{MNL assortment oracles.}
The required assortment oracles are efficient under MNL.
For nonnegative product weights, an optimal assortment
can be chosen as a prefix of the products ordered by
nonincreasing weight
\citep{talluri2004general,gallego2004managing}.
Sorting the products and evaluating the prefixes using
cumulative sums requires \(O(|A|\log|A|)\) arithmetic
and comparison operations for a candidate set \(A\). This implements both the weighted
assortment oracle used by the SBLP and the
revenue-maximizing oracle used by the online policy.

\subsection{State Invariants and Physical Feasibility}
\label{subsubsec:state-invariants}

We first verify that the coupling produces the prescribed
independent copy updates. We then use this update law to
establish the virtual-state invariants and physical feasibility.

\begin{lemma}[Exact copy-level update law]
\label{lem:copy-update-law-general}
Fix a nonterminal period \(t<T\), and condition on the current
virtual states, realized copy proposals, and preselected
assortment \(\widehat S_t\). After averaging over \(J_t\),
the update indicators of all proposed copies are mutually
independent. Each proposed copy of product \(i\) receives
a virtual update with probability \(\alpha_{ti}=p_ta_{ti}\).
\end{lemma}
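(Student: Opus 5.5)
The plan is to chain three facts already established: the Hall-type conditions hold for the chosen gate, the posterior sampler then yields the right product-level law, and the copy lifting turns that law into independent copy updates.

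\textbf{Step 1: the coupling is feasible.} Conditioning on the current virtual states, the realized proposals, and \(\widehat S_t\) fixes the visible set \(P_t\), the counts \(m_i\ge1\), and the parameters \(\alpha_{ti}=p_ta_{ti}\) and \(\eta_i=1-(1-\alpha_{ti})^{m_i}\). The gate \(G_t\), the arrival, and the customer's Markov path are independent of this conditioning information. Hence \(J_t\) has the law \(q_i=\gamma p_t\phi_{ti}(\widehat S_t)\) for \(i\in\widehat S_t\), with zero elsewhere. By Proposition~\ref{prop:universal-gamma} (when \(\gamma=1-1/e\)) or Proposition~\ref{prop:mnl-gamma-one} (MNL, \(\gamma=1\)), the inequalities \eqref{eq:gated-hall-condition} hold. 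Lemma~\ref{lem:categorical-product} and Proposition~\ref{prop:priority-decomposition} then supply a priority decomposition.

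\textbf{Step 2: the product-level law.} Proposition~\ref{prop:conditional-coupling}(ii) gives that, after averaging over \(J_t\), the indicators \((U_i)_{i\in P_t}\) are independent Bernoulli\((\eta_i)\). The one point to check is that the sampler's internal draws are fresh. They are independent of everything except \(J_t\), so averaging over \(J_t\) with the correct law \(q\) reproduces exactly the auxiliary joint law of \((J^{\rm f},U)\), whose \(U\)-marginal is the prescribed product measure.

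\textbf{Step 3: lifting to copies.} Conditional on \(U\), the lifting draws the copy vectors \((X_{i\ell})_\ell\) independently across products. Each product's vector depends only on its own \(U_i\): it is zero if \(U_i=0\), and follows \eqref{eq:copy-lifting-law} if \(U_i=1\). For a fixed \(x\in\{0,1\}^{m_i}\):
\begin{itemize}
\item if \(x=0\), its probability is \(1-\eta_i=(1-\alpha_i)^{m_i}\);
\item if \(x\ne0\), its probability is \(\eta_i\cdot\alpha_i^{\|x\|_1}(1-\alpha_i)^{m_i-\|x\|_1}/\eta_i\).
\end{itemize}
Both cases give the product-Bernoulli\((\alpha_i)\) mass, and the case \(\eta_i=0\) is trivial. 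Since the \(U_i\) are independent after averaging over \(J_t\), and the lifting is a product kernel, the joint law of all copy indicators factorizes as
\[
\prod_{i\in P_t}\prod_{\ell=1}^{m_i}\alpha_{ti}^{x_{i\ell}}(1-\alpha_{ti})^{1-x_{i\ell}}.
\]

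\textbf{Main obstacle.} The delicate step is Step 2: justifying that averaging the posterior sampler over the actual outcome \(J_t\) recovers the auxiliary joint law. This requires \(J_t\) to match \(J^{\rm f}\) in distribution conditional on exactly the stated information, including \(\widehat S_t\). It therefore relies on the gate, arrival, and choice randomness being independent of the proposals and virtual states, which is how the model is set up.
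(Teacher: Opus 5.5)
Your proposal is correct and takes essentially the same route as the paper. You invoke Proposition~\ref{prop:conditional-coupling} to obtain independent Bernoulli\((\eta_i)\) product indicators after averaging over \(J_t\), and you check that the lifting \eqref{eq:copy-lifting-law} mixes the zero and nonempty branches into the product-Bernoulli\((\alpha_{ti})\) law, which factorizes across products because the lifting depends on \(J_t\) only through \(U\). The paper leaves two things implicit that you verify explicitly: that the gated Hall conditions hold, and that \(J_t\) has law \(q\) given the stated conditioning information; both are exactly what its appeal to Proposition~\ref{prop:conditional-coupling} presupposes.
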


Lemma~\ref{lem:copy-update-law-general} follows from
Proposition~\ref{prop:conditional-coupling} and the copy-level
lifting in Subsection~\ref{subsec:copy-lifting}, which together
recover the prescribed joint update law.

\begin{theorem}[Multi-capacity implementation invariants]
\label{thm:implementation-general}
Algorithm~\ref{alg:general-policy} satisfies the following properties:
\begin{enumerate}[label=(\roman*),ref=(\roman*)]
\item
\label{item:invariant-containment}
For every copy \((i,c)\) and \(t=1,\ldots,T+1\),
\(\overline{\mathcal F}_{tic}\subseteq\mathcal F_{tic}\)
along every sample path.

\item
Every purchase is assigned to a physically feasible copy of
the purchased product, and the requested interval is reserved
on exactly one physical copy.

\item
\label{item:invariant-marginals}
At the beginning of every period \(t\in\cT\),
\(
\Prb\bigl([a,b]\sim\overline{\mathcal F}_{tic}\bigr)
=
x_{tic}([a,b])
\)
for every copy \((i,c)\) and interval \([a,b]\in\cQ\).

\item
\label{item:invariant-independence}
At the beginning of every period \(t\in\cT\), the virtual
free-slot sets are mutually independent across copies.
\end{enumerate}
Consequently, actual occupancy never exceeds physical capacity.
\end{theorem}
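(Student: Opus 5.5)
We will prove all four properties jointly by induction on the period \(t\). The induction hypothesis at the start of period \(t\) is that \ref{item:invariant-containment} holds pathwise through period \(t\), and that the virtual states \((\overline{\mathcal F}_{tic})\) satisfy \ref{item:invariant-marginals} and \ref{item:invariant-independence}. The base case \(t=1\) is immediate. All sets equal \(\cN\), the marginals are \(\one\{[a,b]=[1,N]\}=x_{1ic}([a,b])\) by \eqref{eq:sblp-boundary}, and deterministic states are trivially independent.

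For the distributional inductive step (\(t<T\)), we identify the update of \(\overline{\mathcal F}_{tic}\) with the direct transition kernel of Lemma~\ref{lem:copy-flow-realizability}, jointly across copies. We condition on the current virtual states. The proposal draws are independent across copies and of these states, and each copy proposes with probability \(\pi_{tic}([a,b])\) when \(I_t\subseteq[a,b]\sim\overline{\mathcal F}_{tic}\). Next we condition additionally on the proposals and on \(\widehat S_t\), which is a deterministic function of the proposals (up to oracle tie-breaking, which we take to be independent of everything else). Then Lemma~\ref{lem:copy-update-law-general} says that, after averaging over \(J_t\), the update indicators of proposed copies are independent Bernoulli\((p_ta_{ti})\), and nonproposed copies are never updated. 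This requires the Hall condition; for \(\gamma=1-1/e\) it follows from Proposition~\ref{prop:universal-gamma}, and for MNL with \(\gamma=1\) from Proposition~\ref{prop:mnl-gamma-one}. This is exactly the target-update law of Remark~\ref{remark:target_update}. Integrating out the proposals, each copy's interval \(I_t\) is therefore cut with probability \(\kappa_{tic}([a,b])\), independently across copies given the states. Averaging over states that are independent with the prescribed marginals, Lemma~\ref{lem:copy-flow-realizability} yields \ref{item:invariant-marginals} and \ref{item:invariant-independence} at \(t+1\). The terminal period needs no such argument, since the statement only concerns periods in \(\cT\).

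The main obstacle is to justify that the joint law of the new virtual states depends on the past only through the current virtual states and the fresh kernel draws. The difficulty is that \(J_t\), the gate, and the actual states are correlated with the updates, and \(\widehat S_t\) depends on all proposals jointly, which could break cross-copy independence. We handle this carefully. Given states and proposals, the vector of update indicators has product law regardless of \(\widehat S_t\), by Lemma~\ref{lem:copy-update-law-general}. Hence its law given states and proposals is also product, and the proposal draws are themselves independent per copy. So the pair (proposal, update) per copy is independent across copies given the states, with per-copy law depending only on that copy's state. This conditional product structure is exactly what the induction needs. We also note that the relevant filtration is that of the virtual states only. The marginal and independence claims are statements about the virtual states alone, so correlations with actual states are irrelevant.

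Containment and feasibility are pathwise. If \((i,c)\) proposes, then \(I_t\subseteq\overline{\mathcal F}_{tic}\subseteq\mathcal F_{tic}\) by the hypothesis, so the copy is physically feasible. A purchase \(J_t=i\) is assigned to some \(c^*\in W_{ti}\); the set \(W_{ti}\) is nonempty by Proposition~\ref{prop:conditional-coupling}(i) and the lifting, or by direct choice in the terminal period. The interval \(I_t\) is removed from both the actual and the virtual set of \(c^*\). Every other change removes \(I_t\) only from virtual sets, and removing the same interval from both sides or only from the smaller set preserves \(\overline{\mathcal F}_{t+1,i,c}\subseteq\mathcal F_{t+1,i,c}\). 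Since exactly one actual set changes per purchase, and only on slots that were free, actual occupancy never exceeds capacity.
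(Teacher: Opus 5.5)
Your proposal is correct and follows essentially the same route as the paper: induction in which, given the virtual-state profile, proposals are independent across copies; Lemma~\ref{lem:copy-update-law-general} gives a product update law that does not depend on the selected assortment; integrating out the proposals yields independent per-copy cuts with probability \(\kappa_{tic}\), so the transition kernel factors and Lemma~\ref{lem:copy-flow-realizability} preserves the marginals and independence. Your pathwise containment and feasibility argument, including the separate treatment of the terminal period, likewise matches the paper's Steps~4--5.
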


The argument proceeds by induction over periods. Initially, all
actual and virtual copies are fully available, so properties
\ref{item:invariant-containment}, \ref{item:invariant-marginals}, and \ref{item:invariant-independence} hold by the initialization and
\eqref{eq:sblp-boundary}.

\textit{Virtual-state evolution.}
Consider a nonterminal period \(t<T\) and a virtually feasible
copy whose maximal free interval \([a,b]\) contains \(I_t\).
Conditional on the current virtual states, its probability
of proposing and receiving an update is
\[
\pi_{tic}([a,b])p_ta_{ti}
=
\frac{y_{tic}([a,b])}{x_{tic}([a,b])},
\]
where the ratio is defined as zero when its denominator is zero.
This is the transition probability in
Lemma~\ref{lem:copy-flow-realizability}.
The proposal draws are independent across copies, and
Lemma~\ref{lem:copy-update-law-general} gives independent
updates conditional on the realized proposals. Since each
proposed copy's update probability \(p_ta_{ti}\) does not
depend on which other copies propose, the joint transition
law factors across copies and each copy's transition depends
only on its own virtual state. The virtual states therefore
retain the prescribed interval marginals and independence
through the beginning of period \(T\).

\textit{Physical feasibility.}
Suppose virtual containment holds at the beginning of period
\(t\). A copy proposes only if its virtual state contains
\(I_t\), so every proposed copy is physically feasible.
In a nonterminal period, the coupling assigns every
purchase to an updated proposed copy. The policy removes
\(I_t\) from both states of the assigned copy and only from
the virtual states of other updated copies, preserving
virtual containment. If no purchase occurs, actual states
remain unchanged and virtual states can only shrink.
In the terminal period, a purchase is assigned directly to
a proposed copy and \(I_t\) is removed from both its states.
Thus, every purchase reserves the requested interval on
exactly one physically feasible copy, and containment holds
through period boundary \(T+1\).

The distributional statements in \ref{item:invariant-marginals} and \ref{item:invariant-independence} average over
past outcomes; they do not condition on a realized purchase
history. 


\subsection{Approximation Guarantees}
\label{subsec:theoretical-results}

We prove both parts of Theorem~\ref{thm:main-guarantees} using a common argument. Fix the optimal SBLP solution \((x,y,z)\) used by Algorithm~\ref{alg:general-policy}. By Theorem~\ref{thm:implementation-general}, the virtual states at the beginning of each period have the prescribed interval marginals and are independent across copies. Corollary~\ref{cor:visible-static-general} therefore gives \(\mathbb E[f_t(P_t)]\ge \frac12\sum_{i\in\mathcal M}r_{ti}z_{ti}\) for every period \(t\), where the expectation averages over the virtual states and proposals.
The current arrival is independent of the candidate set. In each nonterminal period, the gate retains a fraction \(\gamma\) of the expected assortment revenue, and the coupling preserves the resulting purchase probabilities. The terminal period uses no gate. Consequently,
\[
\begin{aligned}
V^\pi(I)
=
\gamma\sum_{t=1}^{T-1}p_t\E[f_t(P_t)]
+p_T\E[f_T(P_T)]
\ge
\frac{\gamma}{2}
\sum_{t=1}^{T}p_t\sum_{i\in\cM}r_{ti}z_{ti}
=
\frac{\gamma}{2}\SBLP^C(I).
\end{aligned}
\]
The inequality follows from Corollary~\ref{cor:visible-static-general}, \(\gamma\le1\), and nonnegative revenues. The final equality follows from optimality of the SBLP solution. Proposition~\ref{prop:universal-gamma} permits \(\gamma=1-1/e\) under general Markov chain choice, while Proposition~\ref{prop:mnl-gamma-one} permits \(\gamma=1\) under MNL. Substituting these values and applying the SBLP upper bound on \(V^{\rm on}(I)\) proves \eqref{eq:markov-main} and \eqref{eq:mnl-main}.


The following result shows that the \(1/2\) guarantee is tight
for the proposed MNL policy relative to the optimal online
benchmark.

\begin{proposition}[Tightness for the proposed MNL policy]
\label{prop:mnl-policy-tightness}
There exists a sequence of one-period, unit-capacity MNL instances
\((I_m)_{m\ge1}\) such that
\begin{equation}
\frac{V^{\pi_{\rm MNL}}(I_m)}{V^{\rm on}(I_m)}
\longrightarrow
\frac12.
\label{eq:mnl-policy-tightness}
\end{equation}
Hence, the \(1/2\) guarantee in Theorem~\ref{thm:main-guarantees}(ii) is tight for the proposed MNL policy.
\end{proposition}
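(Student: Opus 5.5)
We build the instances explicitly and evaluate both sides in closed form. Take \(T=1\), \(p_1=1\), one service slot \(N=1\) with \(I_1=[1,1]\), and \(m\) products with \(C_i=1\) and equal revenues \(r_i=1\). Give every product the same MNL attractiveness \(u=u_m\), with outside attractiveness one. Set \(c_m:=mu_m\), and let \(c_m\to\infty\) while \(u_m\to0\); for example \(c_m=\sqrt m\).

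\emph{Step 1: the benchmark.} With one period, every copy is free, so \(V^{\rm on}(I_m)=\max_S R(S)\). Here \(R(S)=|S|u/(1+|S|u)\), which is strictly increasing in \(|S|\), so the full set \(\cM\) is the unique optimal assortment. I will argue that the SBLP optimum coincides with this value and has unique sales flows. The mixture objective \(\sum_S\theta_S R(S)\) is linear, and the capacity constraints \eqref{eq:sblp-state-cap} are slack because \(\phi_i(S)\le a_i\) and \(x_{1ic}([1,1])=1\). Hence any optimal \(\theta\) puts all its weight on \(\cM\), which gives
\[
z_i=\frac{u}{1+mu},\qquad a_i=\frac{u}{1+u},\qquad \rho_i=\frac{1+u}{1+c_m}.
\]
Consequently \(V^{\rm on}(I_m)=\SBLP^C(I_m)=c_m/(1+c_m)\to1\). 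Uniqueness matters because it means the policy's choice among optimal SBLP solutions cannot affect the computation.

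\emph{Step 2: the policy.} Period 1 is terminal, so Algorithm~\ref{alg:general-policy} uses no gate. With one copy per product, Lemma~\ref{lem:aggregate-identities} shows that products are visible independently with probability \(\widetilde\rho_i=\rho_i\). The number \(K\) of visible products is therefore \(\mathrm{Bin}(m,\rho)\). Since revenues are equal and \(R\) is increasing, \(\mathcal O_1\) offers all visible products, so
\[
V^{\pi_{\rm MNL}}(I_m)=\E\!\left[\frac{Ku}{1+Ku}\right].
\]
The map \(k\mapsto ku/(1+ku)\) is concave, so Jensen's inequality gives the upper bound
\[
V^{\pi_{\rm MNL}}(I_m)\le \frac{m\rho u}{1+m\rho u}.
\]
Here \(m\rho u=c_m(1+u_m)/(1+c_m)\to1\), so the right-hand side tends to \(1/2\). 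No concentration argument is needed because Jensen already supplies the upper bound.

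\emph{Step 3: conclusion.} Combining the two steps gives \(\limsup_m V^{\pi_{\rm MNL}}/V^{\rm on}\le1/2\). Theorem~\ref{thm:main-guarantees}(ii) supplies the matching lower bound \(1/2\), which proves \eqref{eq:mnl-policy-tightness}.

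The main point requiring care is Step 1. I need to verify that the single-period SBLP collapses to the static optimum, in particular that constraint \eqref{eq:sblp-state-cap} is nonbinding, and that its optimal \(z\) is unique. Otherwise the policy could be fed a different optimal SBLP solution with different proposal probabilities. The strict monotonicity of \(R\) in \(|S|\) under equal revenues is what delivers this uniqueness.
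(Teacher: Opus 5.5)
Your proof is correct and follows essentially the same route as the paper. Both use a one-period, one-slot, unit-capacity MNL instance with equal revenues and identical small attractiveness, and both note that the optimal SBLP solution is unique, so proposals occur independently with probability \(z_i/a_i\). Both then bound the expected revenue of the binomially sized visible assortment from above by Jensen's inequality, and the paper's specific choice of \(m^2\) products with \(u_i=1/(m-1)\) is one instance of your condition \(mu_m\to\infty\), \(u_m\to0\).
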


This tightness result concerns Algorithm~\ref{alg:general-policy}, which uses independent proposals even in the terminal period. In a one-period instance, the optimal online policy offers a revenue-maximizing assortment from all available products, while the proposed policy first restricts the candidate set through independent proposals. The limiting loss therefore arises from this proposal step. The terminal-period modification described above is optimal on these one-period instances and is not covered by this tightness result.

\textit{Running time.}
The aggregate SBLP is solved once before the booking horizon,
as established in Proposition~\ref{prop:sblp-polytime-general}. It contains \(O(TMN^2)\)
interval variables, and forming the symmetric copy-level
interval targets requires \(O(TN^2C_{\mathrm{tot}})\)
arithmetic operations.

In each period, the policy solves an assortment problem over
the realized visible set \(P_t\). This problem is polynomial-time
solvable under Markov chain choice and requires
\(O(m\log(2+m))\) arithmetic and comparison operations under
MNL, where \(m:=|P_t|\).
In each nonterminal period, the policy also computes a priority
decomposition for the induced purchase and update probabilities,
using Proposition~\ref{prop:priority-decomposition} and Appendix~\ref{app:exact-gate-computation}.
Given this decomposition, for \(m\ge1\), computing all posterior
weights takes \(O(m^2)\) arithmetic operations, and one
execution of the conditional sampler takes \(O(m)\) operations.
The copy-level sampling then takes
\(O(\sum_{i\in P_t}m_{ti})\) operations.
These sampling counts treat random draws as primitives and
exclude the polynomial-time computation of the decomposition.
The assortment oracle and, in nonterminal periods, the
priority decomposition are the main online optimization tasks.

The remaining state-maintenance operations are polynomial
in the model dimensions. Under the computational convention
stated at the beginning of the electronic companion, the bound
\(C_{\mathrm{tot}}\le MT\) and the exact-sampling construction
in Appendix~\ref{app:exact-gate-computation} imply that
the complete policy has expected running time polynomial
in \(T,N,M\) and the model input size.

\begin{remark}
When \(C_i=1\) for every product, product visibility
coincides with the unique copy's proposal, whose marginal
probability is \(\rho_{ti}=z_{ti}/a_{ti}\).
In each nonterminal period, conditional on the current
virtual states and realized proposals, the prescribed
update probability of a proposing copy is \(p_ta_{ti}\).
This specialization retains the resource-level
proposal-discarding framework of \citet{huwu2026}, without
any copy aggregation. Since the MNL coupling requires no
gate, the improvement from \(0.271\) to \(1/2\) follows
from the stronger static revenue-extraction analysis.
\end{remark}

\begin{remark}[Zero outside-option attractiveness]
\label{rem:mnl-zero-outside}
If the outside option has zero attractiveness in every period,
the same ungated MNL policy satisfies the stronger guarantee
\[
V^{\pi_{\rm MNL}}(I)
\ge
\left(1-\frac1e\right)\SBLP^C(I)
\ge
\left(1-\frac1e\right)V^{\rm on}(I).
\]
In this case, \(a_{ti}=1\), and a highest-revenue singleton
is optimal within any nonempty candidate set. Since the
independent copy-proposal probabilities satisfy
\(\sum_c\rho_{tic}=z_{ti}\) and \(\sum_i z_{ti}\le1\),
a direct bound on the probability of seeing a product above
each revenue threshold gives
\(\E[f_t(P_t)]\ge(1-1/e)\sum_i r_{ti}z_{ti}\).
The Hall-type inequalities remain valid with \(\gamma=1\),
so the coupling introduces no further loss.
Our policy attains the same \(1-1/e\) guarantee as
\citet{huwu2026} when the outside option has zero
attractiveness in every period. Appendix~\ref{app:mnl-zero-outside}
establishes this guarantee for arbitrary copy multiplicities
and shows that the factor is tight for the policy using the
symmetric lift.
\end{remark}


\section{Conclusion}
\label{sec:conclusion}

We study online assortment optimization under Markov chain choice
when each booking reserves consecutive service slots on a single
physical copy and each product may have multiple interchangeable
copies. We formulate a copy-expanded SBLP and show that aggregating
copies of the same product preserves its optimal value. The
resulting target flows guide a proposal-discarding policy that
includes each product at most once in each offered assortment
while retaining copy identities for assignment. By keeping each
virtual free-slot set within its actual counterpart, the policy
ensures that every purchase is assigned to a copy available
throughout the requested interval.

Our analysis separates static revenue extraction from dynamic
implementation. A Markov-path argument shows that optimizing
over products made available by independent copy proposals
recovers at least half of the SBLP revenue benchmark in
expectation. A coverage-polymatroid coupling then coordinates
customer purchases with the prescribed independent virtual
updates. A uniform gate with acceptance probability \(1-1/e\)
ensures the feasibility of this coupling under general Markov
choice, whereas MNL requires no gating. Together, these results
yield approximation guarantees of \((1-1/e)/2\) and \(1/2\),
respectively, relative to the optimal nonanticipative online
value, for arbitrary product capacities. Under MNL with zero
outside-option attractiveness in every period, the guarantee
improves to \(1-1/e\).

The \(1/2\) extraction factor and the \(1-1/e\) uniform
scalar-gate factor are separately sharp, and neither loss
is an artifact of copy aggregation. These sharpness results
do not establish tightness of the combined Markov guarantee,
since the two worst cases need not coincide. Under MNL, the
\(1/2\) guarantee is tight for the proposed policy even with
one copy per product, but this policy-specific result does
not rule out stronger guarantees for other policies. Further improvements may come from adapting the gate
to the realized proposals or jointly redesigning proposal
generation and virtual updates.


{\small
\SingleSpacedXI
\bibliographystyle{informs2014}
\setlength{\parskip}{0pt}
\setlength{\bibsep}{4pt plus 0.3ex}
\bibliography{ref}
}

\RUNTITLE{Multi-Capacity Online Assortment Optimization under Markov Choice}
\clearpage
\normalsize
\pagestyle{ECheadings}
\makeatletter
\ECHowEquations
\ECHowSections
\setcounter{figure}{0}
\renewcommand\thefigure{EC.\@arabic\c@figure}
\setcounter{table}{0}
\renewcommand\thetable{EC.\@arabic\c@table}
\setcounter{page}{1}
\renewcommand\thepage{ec\arabic{page}}
\makeatother
\onecolumn
\begin{center}
\large
\textbf{{\sc Electronic Companion}\\
\textquotedblleft Multi-Capacity Online Assortment Optimization with Consecutive Stays under Markov Chain Choice\textquotedblright}
\end{center}
\vspace{5mm}
\begin{APPENDICES}
\small


\section{Proofs of Results in Section~\ref{sec:sblp}}
\label{sec:proof-sblp}

\subsection{Proof of Proposition~\ref{prop:sblp-upper}}

\proof{Proof.}
We construct a feasible copy-expanded SBLP solution from any
feasible nonanticipative policy and show that its objective
equals the policy's expected revenue.

\textit{Step 1: Define the assortment and purchase variables.}
Fix a feasible nonanticipative policy \(\pi\).
Let \(\mathscr H_t\) be the information sigma-field immediately
before the period-\(t\) arrival, including the observed history
and the policy randomization used to select the assortment.
The current free-slot sets and the preselected assortment
\(S_t\) are therefore \(\mathscr H_t\)-measurable.
Let \(A_t\) denote the arrival indicator, and let \(J_t\)
denote the purchased product, with \(J_t=0\) when no arrival
or no purchase occurs. By the arrival and choice assumptions,
\(\Prb_\pi(A_t=1\mid\mathscr H_t)=p_t\) and
\(\Prb_\pi(A_t=1,J_t=i\mid\mathscr H_t) = p_t\phi_{ti}(S_t)\).
After observing a purchase, the policy may use additional
randomization to select a feasible copy.

Define
\(\theta_{tS}:=\Prb_\pi(S_t=S)\) and
\(z_{ti}:=\Prb_\pi(J_t=i\mid A_t=1)\).
Because the assortment is selected before the independent
arrival, its distribution conditional on an arrival remains
\(\theta_t\). Consequently,
\(\sum_{S\subseteq\cM}\theta_{tS}=1\) and \(z_{ti} =
\sum_{S\subseteq\cM}\theta_{tS}\phi_{ti}(S)\).
Using the convention \(\phi_{ti}(S)=0\) for \(i\notin S\),
these identities establish
\eqref{eq:sblp-theta}--\eqref{eq:sblp-choice}.

\textit{Step 2: Define the copy-level variables and verify their bounds.}
For each copy \((i,c)\) and interval \(K\in\cQ\), let
\(H_{tic}(K):=\one\{K\sim\mathcal F_{tic}\}\).
Let \(D_{tic}(K)\) indicate that the period-\(t\) customer
arrives, purchases product \(i\), and is assigned to copy
\((i,c)\), with \(K\) as the maximal free interval containing
the requested interval. Set
\(x_{tic}(K):=\E_\pi[H_{tic}(K)]\) and
\(y_{tic}(K):=\E_\pi[D_{tic}(K)]\).

Every purchase is assigned to exactly one copy, and its
consecutive service interval lies in a unique maximal free
interval of that copy. Thus, pathwise,
\(\sum_{c=1}^{C_i}\sum_{K\in\cQ}D_{tic}(K) = \one\{A_t=1,J_t=i\}\).
Taking expectations gives
\(\sum_c\sum_K y_{tic}(K)=p_tz_{ti}\), which is
\eqref{eq:sblp-sales-link}.

The event \(D_{tic}(K)=1\) requires \(I_t\subseteq K\).
Moreover,
\[
\begin{aligned}
y_{tic}(K)
\le
\E_\pi\!\left[
H_{tic}(K)\one\{A_t=1,J_t=i\}
\right]=
p_t\E_\pi\!\left[
H_{tic}(K)\phi_{ti}(S_t)
\right]
\le
p_ta_{ti}x_{tic}(K).
\end{aligned}
\]
The equality follows by conditioning on \(\mathscr H_t\),
since \(H_{tic}(K)\) is \(\mathscr H_t\)-measurable.
The final inequality follows from
\eqref{eq:singleton-upper}.
Together with the containment requirement, this establishes
\eqref{eq:sblp-state-cap}.
The argument allows the assortment to depend on the copy's
availability and allows assignment randomization after
the purchase.

\textit{Step 3: Verify interval balances and initial conditions.}
Fix an interval \(K=[a,b]\).
For every \(t<T\), the interval indicators satisfy the
pathwise identity
\begin{equation}
\begin{aligned}
H_{t+1,i,c}([a,b])
={}&
H_{tic}([a,b])-D_{tic}([a,b])\\
&+
\one\{b+1=\ell_t\}
\sum_{\substack{
b'\ge r_t\\
{[a,b']\in\cQ}
}}
D_{tic}([a,b'])+
\one\{a-1=r_t\}
\sum_{\substack{
a'\le\ell_t\\
{[a',b]\in\cQ}
}}
D_{tic}([a',b]).
\end{aligned}
\label{eq:app-interval-identity}
\end{equation}

To verify this identity, first suppose that copy \((i,c)\)
is not used in period \(t\). All of its maximal free intervals
then persist. Otherwise, the reservation removes exactly
one parent interval. A parent starting at \(a<\ell_t\)
creates the nonempty left residual \([a,\ell_t-1]\),
and a parent ending at \(b>r_t\) creates the nonempty right
residual \([r_t+1,b]\).

These residuals are maximal free intervals. Their inner
endpoints are adjacent to newly reserved slots, while their
outer endpoints inherit maximality from the parent.
Removing service slots cannot merge free intervals or alter
any other maximal free interval.

For a fixed interval \([a,b]\), persistence and creation
as a residual are mutually exclusive. Although a reservation
may create two different residuals, it cannot create the
same interval twice. Thus,
\eqref{eq:app-interval-identity} counts each possible source
of the interval exactly once.
Taking expectations yields \eqref{eq:sblp-balance},
with the time index shifted by one.
Full initial availability gives
\(x_{1ic}(K)=\one\{K=[1,N]\}\), establishing
\eqref{eq:sblp-boundary}.

\textit{Step 4: Compare objective values.}
All induced variables are nonnegative, and their objective
value satisfies
\(\sum_{t,i,c,K}r_{ti}y_{tic}(K) = \sum_{t,i}r_{ti}\Prb_\pi(A_t=1,J_t=i) =
V^\pi(I)\).
Hence, \(V^\pi(I)\le\SBLP^C(I)\) for every feasible
nonanticipative policy \(\pi\).
Taking the supremum over these policies proves
\eqref{eq:sblp-upper}.\hfill\(\blacksquare\)
\endproof

\subsection{Proof of Proposition~\ref{prop:sblp-symmetry}}

\proof{Proof.}
Write \(K\in\cQ\) for a service interval.
We show that aggregation and symmetric lifting preserve
both feasibility and objective value.

\textit{Aggregation.}
Let \((x,y,\theta,z)\) be a feasible copy-expanded solution.
Define
\(X_{ti}(K):=\sum_{c=1}^{C_i}x_{tic}(K)\) and
\(Y_{ti}(K):=\sum_{c=1}^{C_i}y_{tic}(K)\),
leaving \(\theta\) and \(z\) unchanged.

The choice constraints remain satisfied because they
involve only \(\theta\) and \(z\).
The sales-linking constraints give
\(\sum_{K\in\cQ}Y_{ti}(K) = \sum_{c=1}^{C_i}\sum_{K\in\cQ}y_{tic}(K) =
p_tz_{ti}\).
The interval-capacity coefficients are identical across
copies of the same product. Summing the corresponding
constraints therefore yields
\(0\le Y_{ti}(K) \le p_ta_{ti}X_{ti}(K)\one\{I_t\subseteq K\}\).
Similarly, summing the copy-level balance equations gives
\eqref{eq:aggregate-balance}, and summing the initial
conditions gives
\(X_{1i}(K)=C_i\one\{K=[1,N]\}\).
Thus, the aggregate solution is feasible.
Its objective is unchanged because
\(\sum_{t,i,c,K}r_{ti}y_{tic}(K)
=\sum_{t,i,K}r_{ti}Y_{ti}(K)\).

\textit{Symmetric lifting.}
Conversely, let \((X,Y,\theta,z)\) be a feasible aggregate
solution. For every \(c\in[C_i]\), define
\(x_{tic}(K):=X_{ti}(K)/C_i\) and
\(y_{tic}(K):=Y_{ti}(K)/C_i\),
again leaving \(\theta\) and \(z\) unchanged.
These variables are well defined and nonnegative because
\(C_i\ge1\).

Dividing each aggregate interval-capacity constraint and
balance equation by \(C_i\) gives the corresponding
copy-level constraint.
The aggregate initial condition likewise gives
\(x_{1ic}(K)=\one\{K=[1,N]\}\).
Summing the lifted variables over copies recovers \(X\)
and \(Y\); in particular,
\(\sum_{c=1}^{C_i}\sum_{K\in\cQ}y_{tic}(K) = \sum_{K\in\cQ}Y_{ti}(K) =
p_tz_{ti}\).
Thus, the sales-linking constraints and objective value
are also preserved.

The two formulations consequently have the same attainable
objective values and hence the same optimal value.

\textit{Number of interval variables.}
There are \(|\cQ|=N(N+1)/2\) possible intervals and two
aggregate interval variables, \(X_{ti}(K)\) and \(Y_{ti}(K)\),
for each \((t,i,K)\).
Their total number is
\(2TM|\cQ|=TMN(N+1)=O(TMN^2)\),
independently of the capacities.
This count concerns only the interval state and sales-flow
variables. The assortment-mixture variables are handled
in the proof of
Proposition~\ref{prop:sblp-polytime-general}.\hfill\(\blacksquare\)
\endproof

\subsection{Proof of Proposition~\ref{prop:sblp-polytime-general}}

\proof{Proof.}

Let \(L\) denote the size of the numerical model input.
Since \(C_i\le T\), at most \(MT\) copies are represented.
We first derive the aggregate LP dual and construct its
separation oracle. We then establish polynomial solvability,
recover an optimal primal solution with sparse assortment
mixtures, and construct the symmetric copy-level lift.

\textit{Step 1: Formulate the aggregate LP and its dual.}
Let \(u\) collect the aggregate interval variables
\((X_{ti}(K),Y_{ti}(K))\).
Write the interval balance equations and initial conditions
as \(Eu=b\), and the interval-capacity constraints as
\(Fu\le0\).
Specifically, the row of \(Fu\) indexed by \((t,i,K)\) is
\(Y_{ti}(K)
-p_ta_{ti}\one\{I_t\subseteq K\}X_{ti}(K)\).
The matrices \(E,F\) and vector \(b\) are explicitly specified
by the aggregate constraints and have polynomial size.

Define the linear map \(H\) and objective coefficient
vector \(g\) by
\((Hu)_{ti}:=\sum_{K\in\cQ}Y_{ti}(K)\) and \(g^{\mathsf
T}u:=\sum_{t,i,K}r_{ti}Y_{ti}(K)\).
Since \(p_t>0\), the sales-linking constraints allow us
to eliminate \(z\) by setting
\(z_{ti}=(Hu)_{ti}/p_t\).
The resulting primal formulation is
\begin{equation}
\begin{aligned}
\max_{u\ge0,\;\theta\ge0}\quad
&g^{\mathsf T}u\\
\textnormal{s.t.}\quad
&Eu=b,\qquad Fu\le0,\\
&(Hu)_{ti}
=
p_t\sum_{S\subseteq\cM}
\theta_{tS}\phi_{ti}(S),
&&t\in\cT,\ i\in\cM,\\
&\sum_{S\subseteq\cM}\theta_{tS}=1,
&&t\in\cT.
\end{aligned}
\label{eq:app-aggregate-primal}
\end{equation}
Here and below, \(\phi_{ti}(S)=0\) for \(i\notin S\).
This LP has polynomially many rows and interval-variable
columns, but exponentially many assortment columns
indexed by \(\theta_{tS}\).
We access the assortment columns implicitly through
the oracle constructed below.

Associate unrestricted multipliers \(\xi\) with the
interval equalities, nonnegative multipliers \(\mu\)
with the interval-capacity inequalities, unrestricted
multipliers \(\beta_{ti}\) with the sales-linking equalities,
and unrestricted multipliers \(\alpha_t\) with the
assortment-normalization equalities.
Choose the signs of the equality multipliers so that
the Lagrangian is
\[
\begin{aligned}
\mathcal L
={}
b^{\mathsf T}\xi+\sum_t\alpha_t+
\bigl(
g-E^{\mathsf T}\xi-F^{\mathsf T}\mu-H^{\mathsf T}\beta
\bigr)^{\mathsf T}u+
\sum_t\sum_{S\subseteq\cM}
\theta_{tS}
\left(
p_t\sum_{i\in\cM}\beta_{ti}\phi_{ti}(S)-\alpha_t
\right).
\end{aligned}
\]
Its supremum over \(u,\theta\ge0\) is finite precisely
when every coefficient of these variables is nonpositive.
The dual is therefore
\begin{equation}
\begin{aligned}
\min_{\xi,\mu,\beta,\alpha}\quad
&b^{\mathsf T}\xi+\sum_t\alpha_t\\
\textnormal{s.t.}\quad
&E^{\mathsf T}\xi+F^{\mathsf T}\mu+H^{\mathsf T}\beta
\ge g,\\
&\alpha_t
\ge
p_t\sum_{i\in\cM}\beta_{ti}\phi_{ti}(S),
&&t\in\cT,\ S\subseteq\cM,\\
&\mu\ge0,\qquad
\xi,\beta,\alpha\text{ unrestricted}.
\end{aligned}
\label{eq:app-aggregate-dual}
\end{equation}

Each assortment variable \(\theta_{tS}\) thus corresponds
to one assortment-specific dual inequality.
The dual has polynomially many variables and explicit
constraints; only the assortment-specific family is
exponential.

\textit{Step 2: Construct a separation oracle for the dual.}
Consider a candidate dual solution.
Its explicit constraints can be checked directly.
For a fixed period \(t\), define
\(d_i:=p_t\beta_{ti}\).
All assortment-specific inequalities for this period
hold if and only if
\(\max_{S\subseteq\cM} \sum_{i\in S}d_i\phi_{ti}(S) \le\alpha_t\).
An optimizing assortment therefore either identifies a
violated inequality or certifies the entire family
for period \(t\).

The weights \(d_i\) may be negative.
Let \(A^+:=\{i\in\cM:d_i>0\}\), and for any assortment
\(S\), put \(S^+:=S\cap A^+\).
Nonnegativity of purchase probabilities and regularity give
\(\sum_{i\in S}d_i\phi_{ti}(S) \le \sum_{i\in S^+}d_i\phi_{ti}(S) \le
\sum_{i\in S^+}d_i\phi_{ti}(S^+)\).
Hence, an optimal assortment can be chosen within \(A^+\).
If \(A^+=\varnothing\), the empty assortment is optimal
with value zero. Otherwise, separation reduces to
\eqref{eq:weighted-static} with nonnegative rewards.

We now give an exact implementation of this weighted
assortment oracle. Suppress the period index, let
\(A\subseteq\cM\) be the eligible products, and let
\(w_i\ge0\) be their rewards.
Consider the stopping problem in which the chain may stop
at a product \(i\in A\) for reward \(w_i\), must continue
at products outside \(A\), and receives zero upon reaching
the outside option.
Its value is characterized by
\begin{equation}
\begin{aligned}
\min_{v\ge0}\quad
&\sum_{j\in\cM}v_j\\
\textnormal{s.t.}\quad
&v_j\ge\sum_{k\in\cM}q_{jk}v_k,
&&j\in\cM,\\
&v_j\ge w_j,
&&j\in A.
\end{aligned}
\label{eq:app-stopping-lp}
\end{equation}
The inequalities require \(v_j\) to dominate the value
of continuation and, at eligible products, the reward
from stopping.

To verify this characterization, start from \(v^{(0)}=0\)
and apply the Bellman operator:
\[
v_j^{(n+1)}
=
\begin{cases}
\max\{w_j,\sum_{k\in\cM}q_{jk}v_k^{(n)}\},
&j\in A,\\
\sum_{k\in\cM}q_{jk}v_k^{(n)},
&j\notin A.
\end{cases}
\]
The iterates are componentwise nondecreasing and bounded
above by
\(W:=\max(\{0\}\cup\{w_i:i\in A\})\).
Every feasible vector of \eqref{eq:app-stopping-lp}
dominates every iterate, by induction.
Their limit \(v^*\) satisfies the Bellman equations and
is the componentwise least feasible vector.
It is therefore the unique minimizer of
\eqref{eq:app-stopping-lp}.

Define \(S^*:=\{j\in A:v_j^*=w_j\}\).
At every product outside \(S^*\), the Bellman equations give
\(v_j^*=\sum_{k\in\cM}q_{jk}v_k^*\).
Repeated substitution expresses \(v_j^*\) as the expected
reward from reaching \(S^*\cup\{0\}\) within finitely many
transitions, plus a remainder bounded by \(W\) times the
probability of not yet reaching this set.
Transience of the product-state submatrix makes the
remainder tend to zero.
Thus, stopping at the first visit to \(S^*\cup\{0\}\)
attains \(v_j^*\) from every initial product state.

Conversely, any feasible vector of
\eqref{eq:app-stopping-lp} bounds the expected revenue
from every assortment contained in \(A\).
It dominates the reward at offered products and the
continuation value elsewhere; a finite-step expansion
followed by transience gives the corresponding
infinite-horizon bound.
Starting from the initial distribution \(\lambda\),
we therefore obtain
\(\max_{S\subseteq A}\sum_{i\in S}w_i\phi_i(S) =
\sum_{j\in\cM}\lambda_jv_j^* = \sum_{i\in S^*}w_i\phi_i(S^*)\).
Solving \eqref{eq:app-stopping-lp} and returning \(S^*\)
therefore implements the required oracle, consistent
with \citet{blanchet2016markov}.
Products outside \(A\) remain transition states of the
chain; they are only ineligible for offering.

The stopping LP has \(M\) variables and at most \(2M\)
constraints apart from nonnegativity.
For rational inputs, it can be solved exactly in time
polynomial in the model input and query size.
The value iteration above is used only to prove the
characterization; the oracle is computed by solving the LP.
The choice probabilities of a returned assortment can
also be computed exactly in polynomial time, as shown
in Step~3.
Applying the oracle to each period therefore provides
a polynomial-time strong separation oracle for
\eqref{eq:app-aggregate-dual}.

\textit{Step 3: Establish polynomial solvability.}
We first verify uniform bounds on the numerical descriptions
of the LP coefficients.
For an assortment \(S\), let \(D=\cM\setminus S\).
Transience implies that \(I-Q_{t,DD}\) is nonsingular.
Writing \(\lambda_{t,D}\) as a column vector, we have
\(\phi_{ti}(S) = \lambda_{ti} + \lambda_{t,D}^{\mathsf T}
(I-Q_{t,DD})^{-1}Q_{t,Di}\) for \(i\in S\).
The second term is absent when \(D=\varnothing\).
Rational linear algebra and determinant bounds show
that these probabilities have encoding length polynomial
in \(M\) and \(L\), uniformly over all assortments.
The singleton probabilities \(a_{ti}\) can likewise be
computed exactly in polynomial time.
Thus, every coefficient in the primal and dual has
polynomial encoding length, without enumerating the
assortments.

The primal is feasible: set all sales variables to zero,
offer the empty assortment with probability one, and keep
\(X_{ti}(K)=C_i\one\{K=[1,N]\}\) throughout the horizon.
It is also bounded.
Multiplying each interval balance equation by interval
length and summing over intervals gives
\[
\sum_{K\in\cQ}|K|X_{t+1,i}(K)
=
\sum_{K\in\cQ}|K|X_{ti}(K)
-
|I_t|\sum_{K\in\cQ}Y_{ti}(K),
\qquad t<T.
\]
Indeed, a feasible cut removes exactly \(|I_t|\) service
slots from its parent interval.
The initial weighted sum is \(NC_i\), so nonnegativity
implies \(0\le X_{ti}(K)\le NC_i\).
The choice-mixture and sales-linking constraints also give
\(0\le \sum_{K\in\cQ}Y_{ti}(K) = p_t\sum_{S\subseteq\cM}
\theta_{tS}\phi_{ti}(S) \le p_t\le1\).
Together with \(0\le\theta_{tS}\le1\), these bounds show
that the primal has a finite attained optimum.
Strong duality therefore applies.

The dual has polynomial dimension, uniformly bounded
coefficient encoding lengths, and the strong separation
oracle from Step~2.
Standard rational-polyhedral bounds also ensure the
existence of an optimal dual solution of polynomial
encoding length.
The rational ellipsoid method consequently computes
the optimal value and an optimal dual solution in time
polynomial in \(T,N,M,L\)
\citep{grotschel1981ellipsoid}.
This exact rational framework also accommodates feasible
regions contained in a proper affine subspace.

\textit{Step 4: Recover an optimal primal solution and sparse mixtures.}
The online policy requires the interval flows, so computing
the optimal value alone is insufficient.
We use the constructive primal-recovery procedure associated
with the rational ellipsoid method to identify polynomially
many assortment columns that support an optimal primal
solution.

To explain why these columns suffice, let \(R^*\) be the
optimal value from Step~3.
The primal has polynomially many rows, and all column
coefficients have uniformly bounded encoding lengths.
A basic feasible solution uses at most as many basic
variables as there are rows.
Determinant bounds therefore give a computable polynomial
bound \(B\) such that every finite optimal value has reduced
denominator at most \(2^B\), both for the full primal and
for any restriction to a subset of its assortment columns.
Two distinct such optimal values consequently differ by
at least \(2^{-2B}\).

Set \(\varepsilon:=2^{-2B-1}\).
The dual constraints together with
\(b^{\mathsf T}\xi+\sum_t\alpha_t \le R^*-\varepsilon\)
form an infeasible rational system.
The certificate-producing rational ellipsoid method
identifies polynomially many assortment inequalities
that, together with the explicit dual constraints and
this objective bound, already certify infeasibility.
This is the constructive primal-recovery step of the
optimization--separation framework
\citep{grotschel1981ellipsoid}.

Any auxiliary bounding box used in this procedure can
be chosen uniformly for all relevant subsystems.
Indeed, their dimensions and coefficient encoding lengths
are polynomially bounded, so every feasible subsystem
has a feasible point of polynomial encoding length.
A sufficiently large common box therefore preserves
feasibility of every such subsystem.
Consequently, the infeasibility certificate remains
valid after the auxiliary box is removed.

For each period \(t\), let \(\mathcal S_t\) consist of the
assortments appearing in the certificate, together with
the empty assortment.
Restrict \eqref{eq:app-aggregate-primal} to variables
\(\theta_{tS}\) with \(S\in\mathcal S_t\).
This restricted primal is feasible because it includes
the empty assortment.
Its dual retains precisely the corresponding
assortment-specific inequalities, together with all
explicit dual constraints.

Let \(R_{\mathcal S}\) be the restricted primal optimum.
The infeasibility certificate and strong duality imply
\(R_{\mathcal S}>R^*-\varepsilon\).
Because restricting the available columns cannot increase
the primal optimum, \(R_{\mathcal S}\le R^*\).
Thus,
\(R^*-\varepsilon<R_{\mathcal S}\le R^*\).
The rational-value separation bound forces
\(R_{\mathcal S}=R^*\).
Solving this polynomial-size restricted primal therefore
recovers optimal interval flows and an optimal assortment
mixture without enumerating all assortments.

Recover \(z_{ti}=(Hu)_{ti}/p_t\).
For each period, we can reduce the mixture support further
while holding \(u\) and \(z\) fixed.
Over the recovered family \(\mathcal S_t\), consider
\(\sum_{S\in\mathcal S_t}\theta_{tS}=1\), \(\sum_{S\in\mathcal
S_t}\theta_{tS}\phi_{ti}(S)=z_{ti}\) for \(i\in\cM\), and
\(\theta_{tS}\ge0\).
This system is feasible and bounded.
Its equality matrix has rank at most \(M+1\), so a basic
feasible solution has at most \(M+1\) positive weights.
Such a solution can be obtained by polynomial-time linear
programming and basis extraction.
Replacing the recovered mixture by this basic solution
preserves the purchase probabilities, interval flows,
and objective value.
We output only its positive weights, with all unlisted
assortment weights understood to be zero.

\textit{Step 5: Construct the copy-level lift.}
By Proposition~\ref{prop:sblp-symmetry}, the optimal aggregate
solution yields an optimal copy-expanded solution through
\(x_{tic}(K)=X_{ti}(K)/C_i\) and
\(y_{tic}(K)=Y_{ti}(K)/C_i\) for \(c\in[C_i]\),
with \(\theta\) and \(z\) unchanged.
Writing \(C_{\rm tot}:=\sum_i C_i\), the lift contains
\(2T|\cQ|C_{\rm tot}=O(TN^2C_{\rm tot})\)
interval entries, each of polynomial encoding length.

Solving the aggregate relaxation takes time polynomial
in \(T,N,M,L\).
Explicitly constructing the lift adds a dependence on
\(C_{\rm tot}\).
Since \(C_{\rm tot}\le MT\), the full procedure runs
in time polynomial in \(T,N,M,L\), as claimed. \hfill\(\blacksquare\)
\endproof

\section{Proofs of Results in Section~\ref{sec:general-static}}
\label{sec:proof-general-static}

\subsection{Proof of Proposition~\ref{prop:markov-extraction-sharpness}}\label{app:proof-markov-extraction-sharpness}

\proof{Proof.}
\emph{Step 1: Construct the choice model.}
Fix an integer \(m\ge2\). Consider an MNL model with
\(m^2\) products, each with unit revenue and attractiveness
\(u_i=1/(m-1)\), and normalize the outside-option
attractiveness to one. An assortment \(S\) has expected
revenue
\[
R(S)=\frac{|S|}{m-1+|S|}.
\]
This expression is strictly increasing in \(|S|\), so
offering every available product is optimal. Consequently,
\(f(A)=|A|/(m-1+|A|)\) for every candidate set
\(A\subseteq\cM\), including \(A=\emptyset\).

\emph{Step 2: Specify the benchmark and candidate probabilities.}
Let the benchmark offer the full assortment
\(B=\cM\) with probability one, and write
\(d_m:=m^2+m-1\). For every product,
\[
a_i=\frac1m,
\qquad
z_i=\frac1{d_m},
\qquad
W=\sum_{i\in\cM}r_i z_i=\frac{m^2}{d_m}.
\]
Let \(P\) include each product independently with probability
\[
p_i=\rho_i=\frac{z_i}{a_i}
=\frac{m}{d_m}
=:\rho_m.
\]
Thus, the model has equal positive revenues, the benchmark
is deterministic, and the candidate probabilities satisfy \(p_i=\rho_i=z_i/a_i\).

\emph{Step 3: Bound the extraction ratio.}
Let \(K_m:=|P|\). Then
\(K_m\sim\operatorname{Binomial}(m^2,\rho_m)\), with
\(\E[K_m]=m^3/d_m\), and
\[
f(P)=\frac{K_m}{m-1+K_m}.
\]
The function \(x\mapsto x/(m-1+x)\) is concave on
\([0,\infty)\). Jensen's inequality therefore gives
\[
\frac{\E[f(P)]}{W}
\le
\frac{1}{W}
\frac{\E[K_m]}{m-1+\E[K_m]}
=
\frac{m d_m}{(m-1)d_m+m^3}
\longrightarrow \frac12.
\]
Hence, for every \(\varepsilon>0\), choosing \(m\)
sufficiently large yields
\(\E[f(P)]/W\le1/2+\varepsilon\), proving the
proposition. Since Theorem~\ref{thm:general-static} also gives
\(\E[f(P)]/W\ge1/2\), the extraction ratio converges
to \(1/2\) as \(m\to\infty\).

\hfill\(\blacksquare\)
\endproof

\section{Proofs of Results in Section~\ref{sec:proposal-revenue}}
\label{sec:proof-proposal-revenue}

We verify the proposal identities, apply the general extraction
bound to the visible products, and establish the target interval flows.

\subsection{Proof of Lemma~\ref{lem:aggregate-identities}}
\proof{Proof.}
For part~(i), \eqref{eq:sblp-sales-link} and
\eqref{eq:rho-copy} give
\[
p_tz_{ti}
=\sum_{c=1}^{C_i}\sum_{[a,b]\in\cQ}y_{tic}([a,b])
=p_ta_{ti}\rho_{ti}.
\]
Since \(p_t>0\), we obtain \(z_{ti}=a_{ti}\rho_{ti}\).
The choice-mixture constraints and the singleton bound give
\(z_{ti}\le a_{ti}\), so \(\rho_{ti}\le1\).
Nonnegativity then gives
\(0\le\rho_{tic}\le\rho_{ti}\le1\).

For part~(ii), each copy's proposal depends only on its own
virtual state and independent random draw.
The state independence assumption therefore implies independence
of the copy-proposal indicators, whose marginals are given by
\eqref{eq:rho-copy}.
Product \(i\) is invisible precisely when none of its copies
proposes, yielding \eqref{eq:visible-rho}.
Visibility events for distinct products depend on disjoint
groups of independent copy proposals and are therefore
independent.

For part~(iii), \eqref{eq:visible-rho} and
\(1-u\le e^{-u}\) imply
\[
1-\widetilde\rho_{ti}
=\prod_{c=1}^{C_i}(1-\rho_{tic})
\le\exp\!\left(-\sum_{c=1}^{C_i}\rho_{tic}\right)
=e^{-\rho_{ti}}.
\]
The remaining inequalities follow from \(e^u\ge1+u\)
and \(\rho_{ti}\le1\), respectively.
\hfill\(\blacksquare\)
\endproof

\subsection{Proof of Corollary~\ref{cor:visible-static-general}}

\proof{Proof.}
Fix a period, suppress its time index, and write
\(W:=\sum_{i\in\cM}r_i z_i\). If \(W=0\), the revenue
lower bound follows from nonnegativity. Suppose \(W>0\).

Lemma~\ref{lem:aggregate-identities} gives
\(\rho_i=z_i/a_i=\sum_{c=1}^{C_i}\rho_{ic}\).
By Lemma~\ref{lem:aggregate-identities}, the product-inclusion
indicators of \(P_{\rm vis}\) are mutually independent,
with
\(\Prb(i\in P_{\rm vis}) = \widetilde\rho_i =
1-\prod_{c=1}^{C_i}(1-\rho_{ic})\).
Applying Theorem~\ref{thm:general-static} with
\(P=P_{\rm vis}\) and \(p_i=\widetilde\rho_i\) yields
\[
\E[f(P_{\rm vis})]
\ge
\left(
\min_{i:z_i>0}
\frac{\widetilde\rho_i}
     {\widetilde\rho_i+\rho_i(1-\widetilde\rho_i)}
\right)W,
\]
which proves the first inequality.

For the second inequality,
Lemma~\ref{lem:aggregate-identities} gives
\(\widetilde\rho_i\ge\rho_i/(1+\rho_i)\).
The final assertion of Theorem~\ref{thm:general-static}
therefore implies
\(\E[f(P_{\rm vis})] \ge \frac12 W = \frac12\sum_{i\in\cM}r_i z_i\).
The argument allows different proposal probabilities
across copies of the same product.\hfill\(\blacksquare\)
\endproof

\subsection{Proof of Lemma~\ref{lem:copy-flow-realizability}}

\proof{Proof.}
\textit{Step 1: Verify that the transition is well defined.}
Fix a feasible copy-level SBLP solution and a copy
\((i,c)\). By \eqref{eq:sblp-state-cap} and
\(p_ta_{ti}\le1\), we have
\(0\le y_{tic}(K)\le x_{tic}(K)\) for every \(K\in\cQ\).
Thus, the ratio
\(\kappa_{tic}(K)=y_{tic}(K)/x_{tic}(K)\), defined as
zero when \(x_{tic}(K)=0\), is a valid probability.
The same constraint gives \(y_{tic}(K)=0\) whenever
\(I_t\nsubseteq K\).

A consecutive request can be contained in at most one
maximal free interval of a given state. The rule therefore
specifies a unique possible cut in each state and leaves
the state unchanged when the request is infeasible.

\textit{Step 2: Establish the prescribed interval marginals.}
We prove \eqref{eq:realized-x-marginal} by induction.
At \(t=1\), the virtual state is \(\cN\), whose only
maximal free interval is \([1,N]\). Its interval
marginals therefore agree with
\eqref{eq:sblp-boundary}.

Suppose the required marginals hold at the beginning
of period \(t\). Let \(\widehat D_{tic}(K)\) indicate
that \(K\) is a maximal free interval of
\(\widehat{\mathcal F}_{tic}\) and the transition
removes \(I_t\) from \(K\). The induction hypothesis
and the transition rule give
\(\Prb\bigl(\widehat D_{tic}(K)=1\bigr) = x_{tic}(K)\kappa_{tic}(K) =
y_{tic}(K)\).
If \(x_{tic}(K)=0\), both the interval-event probability
and its sales flow are zero. If \(I_t\nsubseteq K\),
both the cut probability and the sales flow are zero.
Thus, the identity holds for every \(K\in\cQ\).

For \(t<T\), apply the pathwise interval identity
\eqref{eq:app-interval-identity} to the virtual state
and its cut indicators. Taking expectations yields
\[
\begin{aligned}
\Prb\bigl([a,b]\sim\widehat{\mathcal F}_{t+1,i,c}\bigr)
={}&
x_{tic}([a,b])-y_{tic}([a,b])\\
&+
\one\{b+1=\ell_t\}
\sum_{\substack{b'\ge r_t\\{[a,b']\in\cQ}}}
y_{tic}([a,b'])+
\one\{a-1=r_t\}
\sum_{\substack{a'\le\ell_t\\{[a',b]\in\cQ}}}
y_{tic}([a',b])\\
={}&
x_{t+1,i,c}([a,b]),
\end{aligned}
\]
where the final equality follows from
\eqref{eq:sblp-balance}. This completes the induction
through the beginning of period \(T\).

If the auxiliary transition is also applied in period
\(T\), the same argument gives terminal cut probability
\(y_{Tic}(K)\). No marginal identity involving
\(x_{T+1,i,c}\) is required.

\textit{Step 3: Establish independence and connect the
transition to proposals.}
Implement the direct transitions using random draws
independent across copies and periods. Each copy's
virtual-state trajectory then depends only on its
deterministic initial state and its own draws.
The trajectories, and hence the virtual states at
each fixed period, are mutually independent across copies.

For a copy whose virtual state contains \(I_t\) in
maximal free interval \(K\), proposing with probability
\(\pi_{tic}(K)\) and then receiving a target update
with probability \(p_ta_{ti}\) gives cut probability
\(\pi_{tic}(K)p_ta_{ti} = \frac{y_{tic}(K)}{x_{tic}(K)} = \kappa_{tic}(K)\),
with the stated convention when \(x_{tic}(K)=0\).
Under the target copy-update law in
Remark~\ref{remark:target_update}, these updates are
independent across proposed copies conditional on
the current virtual states and realized proposals.
Each proposed copy's update probability depends only
on its product, not on the other proposals. Averaging
over the independent proposal draws therefore gives
the same joint transition law as the independent
direct-cut construction.

Theorem~\ref{thm:implementation-general} verifies that
the purchase--update coupling implements this target
law in every nonterminal period.\hfill\(\blacksquare\)
\endproof

\section{Proofs of Results in Section~\ref{sec:dynamic-general}}
\label{sec:proof-dynamic-general}

These proofs establish the period-level purchase--update coupling
and its exact computational implementation.

\subsection{Proof of Lemma~\ref{lem:categorical-product}}

\proof{Proof.}
Write
\(\mu(B):=\Prb(U=B) = \prod_{i\in B}\eta_i \prod_{i\notin B}(1-\eta_i)\) for
\(B\subseteq[m]\).
If \(m=0\), the deterministic pair
\((J,U)=(0,\varnothing)\) satisfies the claim.
We therefore assume \(m\ge1\).

\textit{Necessity.}
The support condition \(J=i\Rightarrow i\in U\) implies
\(\{J\in A\}\subseteq\{U\cap A\ne\varnothing\}\).
Consequently, for every \(A\subseteq[m]\),
\(\sum_{i\in A}q_i \le \Prb(U\cap A\ne\varnothing) = 1-\prod_{i\in
A}(1-\eta_i)\).

\textit{Sufficiency.}
Construct a bipartite flow network with a source and a sink.
There is a left node for each \(j\in\{0,1,\ldots,m\}\),
connected from the source by an edge of capacity \(q_j\).
There is a right node for each \(B\subseteq[m]\),
connected to the sink by an edge of capacity \(\mu(B)\).
Connect a left node \(i>0\) to \(B\) whenever \(i\in B\),
and connect node \(0\) to every right node.
Give these intermediate edges unlimited capacity.
The total capacity leaving the source and the total
capacity entering the sink both equal one.

For a set \(A\) of left nodes, let \(N(A)\) denote its
right neighbors. A finite cut that places exactly \(A\)
of the left nodes on the source side must also place
\(N(A)\) on that side. The minimum capacity of such a
cut is
\(1-\sum_{j\in A}q_j+\sum_{B\in N(A)}\mu(B)\).
The max-flow/min-cut theorem therefore gives a flow of
value one if and only if
\(\sum_{j\in A}q_j \le \sum_{B\in N(A)}\mu(B)\)
for every set \(A\) of left nodes.

If \(0\in A\), then \(N(A)\) contains every right node,
so its total mass is one and the inequality holds
automatically. If \(0\notin A\), then
\(N(A)=\{B:B\cap A\ne\varnothing\}\), and the inequality
is precisely \eqref{eq:hall-condition}.
Thus, the assumed Hall inequalities guarantee a
value-one flow.

Such a flow saturates all source and sink capacities.
Assigning the flow on edge \((j,B)\) as the probability
of \((J,U)=(j,B)\) gives the prescribed marginals
\((q_j)\) and \((\mu(B))\). The intermediate edges ensure
that \(J=i>0\) implies \(i\in U\).

This network establishes existence. The priority-rule
representation below provides a construction that
does not enumerate all update sets. \hfill\(\blacksquare\)
\endproof

\subsection{Proof of Proposition~\ref{prop:priority-decomposition}}\label{app:priority-decomposition}

\proof{Proof.}
We establish the convex-hull representation through
greedy optimization and then recover a decomposition
with small support.

Let \(\mathfrak P\) denote the collection of ordered
subsets of \([m]\), including the empty list. Define
\(g(A):=1-\prod_{i\in A}(1-\eta_i)\) and
\[
P(g):=
\left\{
x\in\R_+^m:
\sum_{i\in A}x_i\le g(A)
\text{ for every }A\subseteq[m]
\right\}.
\]
The function \(g\) is normalized and monotone. For
\(i\notin A\), its marginal increment is
\(g(A\cup\{i\})-g(A) = \eta_i\prod_{j\in A}(1-\eta_j)\),
which is nonincreasing as \(A\) grows. Thus, \(g\) is
submodular and \(P(g)\) is a polymatroid.
The priority vectors are the classical greedy vectors
of this polymatroid \citep{edmonds2003submodular}.
We give the specialization and its computational
implementation below.

\textit{Step 1: Verify feasibility of the priority vectors.}
Fix \(\pi\in\mathfrak P\), generate the independent
Bernoulli update indicators, and apply the priority rule.
If the selected product belongs to \(A\), then
\(U\cap A\ne\varnothing\). Hence,
\(\sum_{i\in A}v_i^\pi\le g(A)\) for every \(A\),
so \(v^\pi\in P(g)\).

For \(\pi=(i_1,\ldots,i_k)\), let
\(A_\ell:=\{i_1,\ldots,i_\ell\}\), with
\(A_0=\varnothing\). The priority probabilities satisfy
\(v_{i_\ell}^\pi = g(A_\ell)-g(A_{\ell-1}) =
\eta_{i_\ell}\prod_{s<\ell}(1-\eta_{i_s})\).
In particular,
\(\sum_{s=1}^{\ell}v_{i_s}^\pi=g(A_\ell)\).

\textit{Step 2: Optimize a linear objective by a priority rule.}
Fix \(c\in\R^m\). List the indices with \(c_i>0\) as
\(\pi=(i_1,\ldots,i_k)\), in nonincreasing order of
weight, and use the convention \(c_{i_{k+1}}:=0\).
For every \(x\in P(g)\), nonnegativity and telescoping
give
\[
\begin{aligned}
c^{\mathsf T}x
\le
\sum_{\ell=1}^k c_{i_\ell}x_{i_\ell}=
\sum_{\ell=1}^k
(c_{i_\ell}-c_{i_{\ell+1}})
\sum_{s=1}^{\ell}x_{i_s}
\le
\sum_{\ell=1}^k
(c_{i_\ell}-c_{i_{\ell+1}})g(A_\ell)=
c^{\mathsf T}v^\pi.
\end{aligned}
\]
The last equality follows from the prefix identities
in Step 1. If no weight is positive, the empty rule
returns the optimal vector \(v^\varnothing=0\).
Therefore,
\begin{equation}
c^{\mathsf T}v^\pi
=
\max_{x\in P(g)}c^{\mathsf T}x
=
\max_{\sigma\in\mathfrak P}c^{\mathsf T}v^\sigma.
\label{eq:priority-oracle-correctness}
\end{equation}
For \(m\ge1\), sorting requires \(O(m\log m)\)
comparisons, and successive products give the priority
vector in \(O(m)\) arithmetic operations.

\textit{Step 3: Establish the representation and support bound.}
Step 1 gives
\(\operatorname{conv}\{v^\pi:\pi\in\mathfrak P\}
\subseteq P(g)\).
Conversely, if a point of \(P(g)\) lay outside this
convex hull, a separating hyperplane would contradict
\eqref{eq:priority-oracle-correctness}.
The convex hull is compact because \(\mathfrak P\)
is finite. Thus,
\(P(g) = \operatorname{conv}\{v^\pi:\pi\in\mathfrak P\}\).

The Hall inequalities state exactly that \(q\in P(g)\).
Carathéodory's theorem therefore gives a convex
decomposition of \(q\) using at most \(m+1\) priority
vectors. This conclusion holds for arbitrary real
probabilities, including coordinates equal to zero
or one.

\textit{Step 4: Compute a decomposition for rational data.}
Consider the minimum-mass representation problem
\begin{equation}
\begin{aligned}
\min_{\mu\ge0}\quad&
\sum_{\pi\in\mathfrak P}\mu_\pi\\
\textnormal{s.t.}\quad&
\sum_{\pi\in\mathfrak P}\mu_\pi v^\pi=q.
\end{aligned}
\label{eq:priority-primal-lp}
\end{equation}
Step 3 supplies a feasible solution of total mass one,
so the optimal value \(\tau\) satisfies \(0\le\tau\le1\).
The dual is
\begin{equation}
\begin{aligned}
\max_{c\in\R^m}\quad&
c^{\mathsf T}q\\
\textnormal{s.t.}\quad&
c^{\mathsf T}v^\pi\le1,
\qquad \pi\in\mathfrak P.
\end{aligned}
\label{eq:priority-dual-lp}
\end{equation}

Given \(c\), the greedy oracle from Step 2 returns a
maximizing priority vector. If its value exceeds one,
the corresponding inequality separates \(c\) from the
dual feasible region; otherwise, all dual inequalities
hold. For rational \(\eta\), every priority vector has
rational entries of polynomial description size,
because each entry is a product of at most \(m\)
input factors. Coordinates with \(\eta_i=0\) can be
omitted, since their Hall singleton inequalities
force \(q_i=0\).

The constructive ellipsoid and primal-recovery procedure
for rational linear programs
\citep{grotschel1981ellipsoid} therefore solves
\eqref{eq:priority-primal-lp} using polynomially many
greedy-oracle calls. As in the proof of
Proposition~\ref{prop:sblp-polytime-general}, the
procedure recovers an optimal primal solution using
polynomially many oracle-generated columns.
Solving the restricted primal and extracting an
optimal basic solution gives at most \(m\) positive
weights, because there are \(m\) equality constraints.
The empty rule has zero vector and positive objective
coefficient, so it receives zero weight in an optimum
of \eqref{eq:priority-primal-lp}.

Let \(\mu\) be this basic solution and write
\(\tau=\sum_\pi\mu_\pi\). If \(\tau<1\), append the
empty rule with weight \(1-\tau\). Its vector is zero,
so this addition preserves the represented point \(q\).
The resulting weights sum to one and use at most
\(m+1\) rules. Moreover,
\(\sum_h\lambda_hv_0^{\pi^h} = \sum_h\lambda_h \left(1-\sum_{i=1}^m
v_i^{\pi^h}\right) = 1-\sum_{i=1}^m q_i = q_0\).
Thus, the same mixture also gives the correct
no-purchase probability.

Algorithm~\ref{alg:priority-decomposition} summarizes
the construction. The irrational scalar gate used by
the Markov policy is handled by the rational reduction
in Subsection~\ref{app:exact-gate-computation}.\hfill\(\blacksquare\)
\endproof

\begin{algorithm}[t]
\caption{Priority-Rule Decomposition for Rational Inputs}
\label{alg:priority-decomposition}
\begin{algorithmic}[1]
\footnotesize
\REQUIRE Rational \(q\ge0\) and \(\eta\in[0,1]^m\)
satisfying \eqref{eq:hall-condition}.
\ENSURE A decomposition \eqref{eq:priority-decomposition}
with \(L\le m+1\).

\IF{\(q=0\)}
    \STATE \textbf{return} the empty priority rule
    with weight one.
\ENDIF

\STATE Define \(\mathsf{GreedyVertex}(c)\) by listing
the positive-weight products in nonincreasing order
and computing their priority vector by successive
products.

\STATE Use \(\mathsf{GreedyVertex}\) as a separation
oracle for \eqref{eq:priority-dual-lp}, as established
in \eqref{eq:priority-oracle-correctness}.

\STATE Apply the constructive ellipsoid and
primal-recovery procedure to obtain polynomially
many columns supporting an optimum of
\eqref{eq:priority-primal-lp}.

\STATE Solve the restricted primal and extract an
optimal basic solution \(\mu\).

\STATE Assign each positive-weight rule its weight
\(\lambda_\pi\leftarrow\mu_\pi\), and set
\(\tau\leftarrow\sum_\pi\mu_\pi\).

\IF{\(\tau<1\)}
    \STATE Append the empty rule with weight \(1-\tau\).
\ENDIF

\STATE \textbf{return} the rules with positive weights.
\end{algorithmic}
\end{algorithm}

\subsection{Exact Computation of the Fixed-Gate Coupling}
\label{app:exact-gate-computation}

The fixed gate \(\gamma=1-1/e\) introduces an irrational
scalar into the purchase probabilities. We compute the
priority decomposition through a rational LP and then
implement the resulting sampling probabilities exactly.

Fix a nonterminal period, the current virtual states,
realized proposals, and preselected assortment.
Index the visible products by \([m]\), and define
\(b_i:=p_t\one\{i\in\widehat S_t\}\phi_{ti}(\widehat S_t)\)
and \(\eta_i:=1-(1-p_ta_{ti})^{m_{ti}}\).
Under rational model primitives and \(C_{\rm tot}\le MT\),
both vectors are rational and computable in polynomial time. The gated purchase
vector is \(q=\gamma b\).

If \(b=0\), set \(\tau=0\) and use the empty priority
rule with weight one.
Otherwise, solve \eqref{eq:priority-primal-lp} with
right-hand side \(b\), using the greedy separation
oracle and primal-recovery procedure from
Proposition~\ref{prop:priority-decomposition}.
This rational LP is feasible: regularity gives
\(b_i\le p_ta_{ti}\le\eta_i\), and each positive
coordinate \(b_i e_i\) can be represented by the
singleton priority vector \(\eta_i e_i\) with weight
\(b_i/\eta_i\).
Let \(\mu\) be an optimal basic solution supported
on at most \(m\) nonempty rules, and set
\(\tau:=\sum_h\mu_h\).
The minimum representation mass of \(\gamma b\)
is \(\gamma\tau\).
Proposition~\ref{prop:universal-gamma} gives
\(\gamma b\in P(g)\), so \(\gamma\tau\le1\).
Consequently, the weights
\(\lambda_h:=\gamma\mu_h\) and
\(\lambda_{\varnothing}:=1-\gamma\tau\)
form a probability distribution on at most \(m+1\)
priority rules and satisfy
\(\sum_h\lambda_hv^{\pi^h}=q\).
Thus, only rational LPs need to be solved.

For a purchase outcome \(J=i>0\), the posterior
weight of a nonempty rule is
\(\lambda_hv_i^{\pi^h}/q_i
=\mu_hv_i^{\pi^h}/b_i\), which is rational.
For \(J=0\), write \(s:=\sum_i b_i\le1\).
The posterior weights of the nonempty rules and
the empty rule are, respectively,
\(\gamma\mu_hv_0^{\pi^h}/(1-\gamma s)\) and
\((1-\gamma\tau)/(1-\gamma s)\).
Their common denominator satisfies
\(1-\gamma s\ge e^{-1}\).

All required draws admit exact implementations using
unbiased random bits. Certified bounds on \(\gamma\),
and hence on the cumulative sampling probabilities,
with error at most \(2^{-k}\) can be computed in time
polynomial in \(k\) and the input size.
Reveal successive bits of an independent uniform
random number and refine these bounds until its
dyadic interval is certified to lie between two
consecutive cumulative probabilities.
For a distribution with \(d\) outcomes, the probability
of remaining undecided at precision \(k\) is
\(O(d2^{-k})\), so the sampler terminates almost surely
with expected polynomial cost. This also implements
the gate with probability exactly \(1-1/e\).
Under MNL, \(\gamma=1\) and all required sampling
probabilities are rational.
Since \(C_{\rm tot}\le MT\), the full policy therefore
runs in expected time polynomial in \(T,N,M\) and the
model input size, with the stated approximation guarantees.
Arithmetic-operation counts elsewhere treat random
draws as primitive operations.


\subsection{Conditional Sampling of Product-Level Updates}
\label{app:conditional-coupling}

This subsection gives the detailed sampler introduced in
Subsection~\ref{subsec:priority-sampling} and proves
Proposition~\ref{prop:conditional-coupling}.
Fix a priority decomposition satisfying
\eqref{eq:priority-decomposition}.
In the auxiliary joint law, \(U_i\sim\operatorname{Bernoulli}(\eta_i)\)
independently, \(H\) is independent of \(U\) with
\(\Prb(H=h)=\lambda_h\), and
\(J^{\rm f}=\mathsf{sel}_{\pi^H}(U)\).
For an observed purchase outcome \(J=j\) with \(q_j>0\),
the policy samples
\((H,U)\sim\mathcal L(H,U\mid J^{\rm f}=j)\).

\textit{(i) Purchase outcome \(J=i>0\).}
For an observed outcome \(J=i\) with \(q_i>0\), first
draw a priority-rule index \(H\) using the posterior
probabilities
\[
\Prb(H=h\mid J^{\rm f}=i)
=
\frac{\lambda_hv_i^{\pi^h}}{q_i},
\qquad h\in[L].
\]
These probabilities follow from Bayes' rule. Every rule
with positive posterior probability contains \(i\).
Write the selected rule as \(\pi^H=(j_1,\ldots,j_k)\),
and let \(\ell\) be the position of \(i\), so \(j_\ell=i\).

For this rule to select \(i\), all preceding products must
have update indicator zero, and product \(i\) must have
update indicator one. Accordingly, set
\[
U_{j_s}=0\quad(s<\ell),
\qquad
U_i=1.
\]
Finally, independently sample
\(U_j\sim\operatorname{Bernoulli}(\eta_j)\) for every
\(j\in[m]\setminus\{j_1,\ldots,j_\ell\}\).
This includes both products appearing after \(i\) in the
list and products omitted from the rule. Their indicators
retain their independent Bernoulli laws because, for the
selected rule, the event \(J^{\rm f}=i\) constrains only
the prefix through \(i\).

\textit{(ii) No-purchase outcome \(J=0\).}
For an observed outcome \(J=0\) with \(q_0>0\), first
draw \(H\) using the posterior probabilities
\[
\Prb(H=h\mid J^{\rm f}=0)
=
\frac{\lambda_hv_0^{\pi^h}}{q_0},
\qquad h\in[L].
\]
Write the selected rule as \(\pi^H=(j_1,\ldots,j_k)\).
For this rule to return zero, none of its listed products
can have update indicator one. Accordingly, set
\(U_{j_s}=0\) for every \(s\in[k]\).

Finally, independently sample
\(U_j\sim\operatorname{Bernoulli}(\eta_j)\) for every
\(j\in[m]\setminus\{j_1,\ldots,j_k\}\).
Conditional on the selected rule \(H\) and \(J^{\rm f}=0\),
these remaining indicators retain their independent
Bernoulli\((\eta_j)\) distributions. Thus, products outside
the selected list may still receive updates even when no
purchase occurs. The same procedure applies whether \(J=0\)
results from no arrival, a closed gate, or selection of the
outside option.
Algorithm~\ref{alg:conditional-coupling} summarizes these two cases.

\begin{algorithm}[t]
\caption{Conditional Sampling of Product-Level Updates}
\label{alg:conditional-coupling}
\begin{algorithmic}[1]
\footnotesize
\REQUIRE Purchase probabilities \((q_0,\ldots,q_m)\), an
observed outcome \(J\in\{0,\ldots,m\}\) with \(q_J>0\),
update probabilities \(\eta\in[0,1]^m\), and a priority
decomposition \eqref{eq:priority-decomposition}.
\ENSURE A product-level update set \(U\subseteq[m]\).

\IF{\(J>0\)}
    \STATE Set \(i\leftarrow J\), and draw \(H=h\) with
    probability \(\lambda_hv_i^{\pi^h}/q_i\).
    \STATE Write \(\pi^H=(j_1,\ldots,j_k)\), and let
    \(j_\ell=i\).
    \STATE Set \(U_{j_s}\leftarrow0\) for every \(s<\ell\),
    and set \(U_i\leftarrow1\).
    \STATE Independently draw
    \(U_j\sim\operatorname{Bernoulli}(\eta_j)\) for every
    \(j\in[m]\setminus\{j_1,\ldots,j_\ell\}\).
\ELSE
    \STATE Draw \(H=h\) with probability
    \(\lambda_hv_0^{\pi^h}/q_0\).
    \STATE Write \(\pi^H=(j_1,\ldots,j_k)\), and set
    \(U_{j_s}\leftarrow0\) for every \(s\in[k]\).
    \STATE Independently draw
    \(U_j\sim\operatorname{Bernoulli}(\eta_j)\) for every
    \(j\in[m]\setminus\{j_1,\ldots,j_k\}\).
\ENDIF

\STATE \textbf{return} \(U:=\{j\in[m]:U_j=1\}\).
\end{algorithmic}
\end{algorithm}

\proof{Proof of Proposition~\ref{prop:conditional-coupling}.}
For every outcome \(j\) with \(q_j>0\), the posterior weights
are nonnegative and sum to one because
\(\sum_{h=1}^L\lambda_hv_j^{\pi^h}=q_j\).
Conditional on \(H=h\), the event \(J^{\rm f}=i>0\) fixes
exactly the prefix indicators through \(i\), while the event
\(J^{\rm f}=0\) fixes exactly the indicators listed in \(\pi^h\).
All other indicators retain their independent Bernoulli laws.
The two cases above therefore sample precisely from
\(\mathcal L(H,U\mid J^{\rm f}=j)\).

Write \(\mathbf U:=(U_1,\ldots,U_m)\).
Since \(J\) and \(J^{\rm f}\) have the same distribution,
for every \(j\) with \(q_j>0\), every \(h\in[L]\), and every
\(u\in\{0,1\}^m\), the implemented joint law satisfies
\begin{align*}
\Prb(J=j,H=h,\mathbf U=u)
&=q_j\Prb(H=h,\mathbf U=u\mid J^{\rm f}=j)\\
&=\Prb(J^{\rm f}=j,H=h,\mathbf U=u).
\end{align*}
Outcomes with \(q_j=0\) have zero probability under both laws.
Thus, the sampler reproduces the auxiliary joint distribution.
Every purchase belongs to \(U\), and averaging over \(J\)
recovers the prescribed independent Bernoulli update law,
proving both properties.

For the computational bounds, each vector \(v^{\pi^h}\),
including \(v_0^{\pi^h}\), can be computed in \(O(m)\)
arithmetic operations by scanning the rule and maintaining
the prefix product. Computing all posterior weights therefore
takes \(O(mL)\) operations.
For one observed outcome, drawing \(H\) by scanning its posterior
weights takes \(O(L)\) operations. Fixing the constrained
indicators and sampling the remaining ones takes \(O(m)\)
operations, giving \(O(m+L)\) operations per sample.
The bounds \(O(m^2)\) and \(O(m)\) follow from \(L\le m+1\).
\hfill\(\blacksquare\)
\endproof

\section{Proofs of Results in Section~\ref{sec:performance}}
\label{sec:proof-performance}

The universal-gate bound is proved in
Subsection~\ref{subsec:gamma-coefficient}. We first establish its
sharpness and the lossless MNL coupling, then verify the MNL
assortment oracle, copy-update law, and state invariants.
We prove the two approximation guarantees together and give
their associated MNL tightness result, followed by the stronger
zero-outside-option guarantee.

\subsection{Proof of Proposition~\ref{prop:gamma-sharp}}

\proof{Proof.}
\textit{Step 1: Construct the choice and update probabilities.}
For \(m\ge1\), let the customer start uniformly
on \(m\) product states, each of which transitions
directly to the outside option.
Then \(\phi_i(S)=1/m\) for \(i\in S\) and
\(a_i=1/m\).
The product-state submatrix is zero and
therefore transient.

Take \(p=1\), select \(S=[m]\), and suppose
one copy of each product proposes.
The gated purchase probabilities sum to
\(\gamma\).
Each proposed copy is updated with probability
\(1/m\), independently under the target law.
Thus, the probability of at least one update is
\(g_m:=1-\left(1-\frac1m\right)^m\).

\textit{Step 2: Apply the Hall inequality.}
The Hall condition for \(A=[m]\) requires
\(\gamma\le g_m\).
Since \(g_m\to1-1/e\), any fixed
\(\gamma\in(1-1/e,1]\) violates this condition
for sufficiently large \(m\).
The required coupling then does not exist.

\textit{Step 3: Realize the proposal profile in a nonterminal period.}
First consider a one-period instance with
one service slot, one initially free copy
per product, and unit revenues.
The full-assortment SBLP solution has
\(z_i=a_i=1/m\) and objective one.
It is optimal because the total purchase
probability cannot exceed one.
Its proposal probabilities are
\(\rho_i=z_i/a_i=1\), so every copy proposes.

To make this period nonterminal, append a
second request for the same service slot,
with the same arrival and choice parameters
and zero revenues.
The interval balances give
\(x_{2,i,1}([1,1])=1-1/m\).
In the second period, select the empty
assortment and set all sales flows to zero.
This extends the first-period solution to
a feasible SBLP solution of value one.
It remains optimal because the second period
generates no revenue.

The first period therefore realizes the stated
proposal profile with probability one.
With unit revenues, the assortment oracle
selects all products, so the Hall obstruction
arises within the online construction.
Together with Proposition~\ref{prop:universal-gamma},
this proves sharpness among uniform scalar gates.\hfill\(\blacksquare\)
\endproof

\subsection{Proof of Proposition~\ref{prop:mnl-gamma-one}}

\proof{Proof.}
Fix an arrival probability \(p\in(0,1]\), an
assortment \(S\), and a subset \(A\subseteq S\).
The case \(A=\varnothing\) is immediate.
For \(A\ne\varnothing\), write
\(U_A:=\sum_{i\in A}u_i\) and
\(U_S:=\sum_{i\in S}u_i\).
The MNL choice probabilities satisfy
\begin{equation}
p\sum_{i\in A}\phi_i(S)
=
\frac{pU_A}{1+U_S}
\le
\frac{pU_A}{1+U_A}.
\label{eq:mnl-choice-mass}
\end{equation}

Define
\(H_A(s):=1-\prod_{i\in A}(1-sa_i)\)
for \(s\in[0,1]\), where
\(a_i=u_i/(1+u_i)\).
This function is concave because
\[
H_A''(s)
=
-2\sum_{\substack{i,j\in A\\i<j}}
a_i a_j
\prod_{k\in A\setminus\{i,j\}}(1-sa_k)
\le0.
\]
For a singleton \(A\), the sum is empty and
\(H_A\) is linear.
Since \(H_A(0)=0\), concavity gives
\(H_A(p)\ge pH_A(1)\).
Furthermore,
\[
\begin{aligned}
H_A(1)
=
1-\frac{1}{\prod_{i\in A}(1+u_i)}\ge
1-\frac{1}{1+U_A}
=
\frac{U_A}{1+U_A},
\end{aligned}
\]
where the inequality follows from
\(\prod_{i\in A}(1+u_i)\ge1+\sum_{i\in A}u_i\).

Combining these bounds with
\eqref{eq:mnl-choice-mass} yields
\(1-\prod_{i\in A}(1-pa_i) = H_A(p) \ge \frac{pU_A}{1+U_A} \ge p\sum_{i\in
A}\phi_i(S)\).
Thus, every Hall inequality holds without gating
when each visible product has one proposed copy.
Additional proposed copies can only increase
update coverage.

Every positive-denominator ratio in
\eqref{eq:gamma-choice} is therefore at least
one. Since the definition of \(\Gamma(\phi)\)
caps its value at one, \(\Gamma(\phi)=1\).\hfill\(\blacksquare\)
\endproof

\subsection{Efficient MNL Assortment Optimization}

Consider a candidate set \(A\), nonnegative
product weights \(w_i\), and positive outside
attractiveness normalized to one.
Write
\(R_w(S):= \frac{\sum_{i\in S}w_i u_i} {1+\sum_{i\in S}u_i}\) and
\(R^*:=\max_{S\subseteq A}R_w(S)\).
For \(j\in A\setminus S\), adding product \(j\)
changes the objective by
\[
R_w(S\cup\{j\})-R_w(S)
=
\frac{u_j(w_j-R_w(S))}
     {1+\sum_{i\in S}u_i+u_j}.
\]
For \(j\in S\), removing it changes the objective by
\[
R_w(S\setminus\{j\})-R_w(S)
=
\frac{u_j(R_w(S)-w_j)}
     {1+\sum_{i\in S}u_i-u_j}.
\]

At an optimum, every product with \(w_i>R^*\)
must be included, and every product with
\(w_i<R^*\) must be excluded.
Products with \(w_i=R^*\) can be added without
changing the objective.
Thus, an optimal assortment can be chosen as
a prefix of the products ordered by
nonincreasing weight.

After sorting, cumulative sums of \(u_i\)
and \(w_i u_i\) allow all prefixes to be
evaluated in linear time.
Including the empty assortment, the oracle
requires \(O(|A|\log|A|)\) arithmetic and
comparison operations; an empty candidate
set is handled directly.
This implements both the actual-revenue oracle
and the nonnegative weighted queries used
in solving the SBLP.

\subsection{Copy-Level Sampling and Proof of Lemma~\ref{lem:copy-update-law-general}}
\label{app:copy-update-law}

\proof{Proof.}
Fix a nonterminal period \(t<T\), the current virtual
states, realized proposals, and preselected assortment.
All probabilities below retain this conditioning;
the gate and customer outcomes remain random.
For \(i\in P_t\), write
\(m_i:=m_{ti}\),
\(\alpha_i:=p_ta_{ti}\), and
\(\eta_i:=1-(1-\alpha_i)^{m_i}\).

By Proposition~\ref{prop:conditional-coupling},
averaging over \(J_t\) gives mutually independent
\(U_i\sim\operatorname{Bernoulli}(\eta_i)\).

\textit{Step 1: Recover the update law within each product.}
For \(b\in\{0,1\}^{m_i}\), let
\(\nu_i(b):= \prod_{\ell:b_\ell=1}\alpha_i
\prod_{\ell:b_\ell=0}(1-\alpha_i)\)
denote its probability under the independent
Bernoulli\((\alpha_i)\) law.
The zero vector has probability
\(\nu_i(0)=(1-\alpha_i)^{m_i}=1-\eta_i\).

The lifting returns the zero vector when \(U_i=0\).
When \(U_i=1\), it returns a nonzero vector \(b\)
with conditional probability \(\nu_i(b)/\eta_i\).
Thus, after averaging over \(U_i\), every vector
has probability \(\nu_i(b)\).
If \(\eta_i=0\), then \(\alpha_i=0\) and \(U_i=0\)
almost surely, so the vector is deterministically
zero and no division by \(\eta_i\) is needed.

\textit{Step 2: Establish independence across products and copies.}
Write \(X_i=(X_{i1},\ldots,X_{im_i})\), and let
\(L_i(b_i\mid u_i)\) denote its conditional lifting
law, with arbitrary values on conditioning events
of probability zero. The lifting uses independent
random draws across products and depends on the
purchase outcome only through \(U\).
Therefore,
\[
\begin{aligned}
\Prb(X_i=b_i\text{ for all }i\in P_t)
&=
\sum_{u\in\{0,1\}^{P_t}}
\prod_{i\in P_t}
\Prb(U_i=u_i)L_i(b_i\mid u_i)\\
&=
\prod_{i\in P_t}
\left[
\sum_{u_i=0}^1
\Prb(U_i=u_i)L_i(b_i\mid u_i)
\right]\\
&=
\prod_{i\in P_t}\nu_i(b_i).
\end{aligned}
\]
The first equality uses independence of the
product-level indicators and independent lifting
conditional on them. The final equality follows
from Step 1. Since each \(\nu_i\) is itself a
Bernoulli product law, the resulting joint
distribution factors over all proposed copies.

\textit{Step 3: Sample a nonempty copy vector directly.}
For \(0<\alpha_i<1\) and \(U_i=1\), draw the position
of the first success according to
\[
\Prb(K_i=k\mid U_i=1)
=\frac{\alpha_i(1-\alpha_i)^{k-1}}{\eta_i},
\qquad k\in[m_i].
\]
These probabilities sum to one because
\(\sum_{k=1}^{m_i}\alpha_i(1-\alpha_i)^{k-1}
=1-(1-\alpha_i)^{m_i}=\eta_i\).

Set all coordinates before \(K_i\) to zero, set
coordinate \(K_i\) to one, and sample the remaining
coordinates independently as
Bernoulli\((\alpha_i)\).
For any nonzero vector \(b\), multiplying the
first-success probability by the probabilities
of its remaining coordinates gives
\(\nu_i(b)/\eta_i\), as required.

If \(\alpha_i=1\), the vector is deterministically
all ones. If \(\alpha_i=0\), the branch \(U_i=1\)
never occurs. The construction requires
\(O(m_i)\) arithmetic operations and random draws,
without a rejection step whose cost depends on
\(\eta_i\).\hfill\(\blacksquare\)
\endproof

\subsection{Proof of Theorem~\ref{thm:implementation-general}}

\proof{Proof.}
Fix the SBLP solution used by the policy.
We prove the virtual-state properties by induction
using the joint transition law, and establish
physical feasibility along each sample path.
The terminal period is treated separately.

\textit{Step 1: Initialize the induction.}
At \(t=1\), every actual and virtual free-slot set
equals the deterministic set \(\cN\).
Thus, virtual availability is contained in actual
availability, the interval marginals satisfy
\eqref{eq:sblp-boundary}, and the virtual states
are independent across copies.

\textit{Step 2: Derive the joint transition law in a
nonterminal period.}
Suppose the assertions hold at the beginning of
period \(t<T\). Index copies by \(k=(i,c)\), and
write \(\overline{\mathcal F}_t\) for the entire
virtual-state profile.
Fix a realization \(F=(F_k)_k\) of this profile.

Let \(B_k\) denote copy \(k\)'s proposal indicator.
Its conditional proposal probability
\(\beta_k(F_k)\) equals \(\pi_{tic}(K)\) if \(F_k\)
has a maximal free interval \(K\) containing \(I_t\),
and equals zero otherwise.
Conditional on \(F\), the proposal indicators are
independent, with
\(\Prb(B_k=1\mid F)=\beta_k(F_k)\).

Let \(D_k\) indicate whether copy \(k\) receives
a virtual cut, and write
\(\alpha_k:=p_ta_{ti}\) for \(k=(i,c)\).
A nonproposed copy is never cut.
For a proposed copy,
Lemma~\ref{lem:copy-update-law-general} gives the
target Bernoulli update law after averaging over
the gate and customer outcome.
Thus,
\(\Prb(D=d\mid F,B=b) = \prod_k L_k(d_k\mid b_k)\),
where
\(L_k(1\mid0)=0\),
\(L_k(0\mid0)=1\),
\(L_k(1\mid1)=\alpha_k\), and
\(L_k(0\mid1)=1-\alpha_k\).
The lemma holds for every preselected assortment,
and this product law does not depend on which
assortment is selected. The identity therefore
also holds if the assortment oracle randomizes
among maximizers.

Each \(L_k\) depends on the proposal profile only
through \(b_k\). Averaging over the independent
proposals consequently gives
\[
\begin{aligned}
\Prb(D=d\mid F)
&=
\sum_{b\in\{0,1\}^{C_{\rm tot}}}
\prod_k
\Prb(B_k=b_k\mid F_k)L_k(d_k\mid b_k)\\
&=
\prod_k
\sum_{b_k\in\{0,1\}}
\Prb(B_k=b_k\mid F_k)L_k(d_k\mid b_k).
\end{aligned}
\]
Hence, the cut indicators are independent
conditional on \(F\), with success probabilities
\(\delta_k(F_k):=\beta_k(F_k)\alpha_k\).
For a virtually feasible copy whose containing
maximal interval is \(K\),
\(\delta_k(F_k) = \pi_{tic}(K)p_ta_{ti} = \frac{y_{tic}(K)}{x_{tic}(K)} =
\kappa_{tic}(K)\),
using the zero-ratio convention when
\(x_{tic}(K)=0\).
Virtually infeasible copies have zero cut probability.

\textit{Step 3: Preserve independence and interval marginals.}
The next virtual state of copy \(k\) is
\(F_k\setminus I_t\) if \(D_k=1\), and \(F_k\)
otherwise. Its transition therefore depends
only on its own current state, and the joint
transition kernel factors across copies.

Let \(K_k(F'_k\mid F_k)\) denote this one-copy
transition kernel and \(\mu_{t,k}\) the current
one-copy state distribution.
Independence at the beginning of period \(t\)
then gives
\[
\begin{aligned}
\Prb(\overline{\mathcal F}_{t+1}=F')
&=
\sum_F
\prod_k
\mu_{t,k}(F_k)K_k(F'_k\mid F_k)\\
&=
\prod_k
\sum_{F_k}
\mu_{t,k}(F_k)K_k(F'_k\mid F_k).
\end{aligned}
\]
This proves independence at the next period boundary.
The local transition agrees with
Lemma~\ref{lem:copy-flow-realizability}, so it also
preserves the prescribed interval marginals
\(x_{t+1,i,c}(K)\).

Iterating this argument establishes properties
(iii) and (iv) through the beginning of period \(T\).
These distributional statements average over past
outcomes; they do not condition on a realized
purchase history.

\textit{Step 4: Preserve physical feasibility and state inclusion.}
Suppose virtual availability is contained in
actual availability at the beginning of period \(t<T\).
Every proposed copy then satisfies
\(I_t\subseteq\overline{\mathcal F}_{tic}
\subseteq\mathcal F_{tic}\), so it can physically
fulfill the request.

If \(J_t=i\), the coupling gives \(U_{ti}=1\),
and the lifting ensures \(W_{ti}\ne\varnothing\).
The policy assigns the purchase to one copy
\(c^*\in W_{ti}\). Removing \(I_t\) from both
sets of this copy preserves inclusion:
\(\overline{\mathcal F}_{t,i,c^*}\setminus I_t \subseteq \mathcal
F_{t,i,c^*}\setminus I_t\).
Every other updated copy loses \(I_t\) only from
its virtual set, and all remaining copies are
unchanged. If \(J_t=0\), the actual sets remain
unchanged and the virtual sets can only shrink.

Thus, each nonterminal period preserves virtual
containment, and every purchase reserves the
requested interval on exactly one physically
feasible copy.

\textit{Step 5: Complete the terminal period.}
At the beginning of period \(T\), the preceding
induction has established all state properties.
Any purchased product belongs to the visible set
and therefore has a proposed, physically feasible
copy. The terminal rule assigns the purchase to
one such copy and removes the requested interval
from both its actual and virtual sets.

This establishes physical feasibility throughout
the horizon and virtual containment through
period boundary \(T+1\). No prescribed interval
marginals or independence property at \(T+1\)
are required, completing the proof.\hfill\(\blacksquare\)
\endproof

\subsection{Proof of Theorem~\ref{thm:main-guarantees}}
\label{subsec:proof-performance-markov}

\proof{Proof.}
Let \(\pi\) denote Algorithm~\ref{alg:general-policy} with
the exact assortment oracle for the relevant choice model
and gate probability \(\gamma\) in every nonterminal period.
Theorem~\ref{thm:implementation-general} establishes feasibility
and ensures that the static extraction bound applies in every
period.

Set \(W_t:=\sum_{i\in\cM}r_{ti}z_{ti}\), and let
\(P_t\) denote the visible product set in period \(t\).
By Corollary~\ref{cor:visible-static-general},
\(\E[f_t(P_t)]\ge W_t/2\).
The coupling preserves the purchase probabilities induced
by the selected gate, and the terminal period requires no
gate. Multiplying by the arrival probabilities and summing
over periods therefore gives
\[
\begin{aligned}
V^\pi(I)
&=\gamma\sum_{t<T}p_t\E[f_t(P_t)]
  +p_T\E[f_T(P_T)]\\
&\ge\frac12\left(\gamma\sum_{t<T}p_tW_t+p_TW_T\right)\\
&\ge\frac{\gamma}{2}\sum_{t=1}^T p_tW_t
 =\frac{\gamma}{2}\SBLP^C(I).
\end{aligned}
\]
The final inequality uses \(0\le\gamma\le1\) and
nonnegative revenues.

Under general Markov chain choice,
Proposition~\ref{prop:universal-gamma} makes the coupling
feasible with \(\gamma=1-1/e\), giving
\(V^{\pi_{\rm MC}}(I)\ge(1-1/e)\SBLP^C(I)/2\).
Under MNL, Proposition~\ref{prop:mnl-gamma-one} permits
\(\gamma=1\), giving
\(V^{\pi_{\rm MNL}}(I)\ge\SBLP^C(I)/2\).
Since the SBLP upper-bounds \(V^{\rm on}(I)\), these
inequalities establish \eqref{eq:markov-main} and
\eqref{eq:mnl-main}, respectively.

The relaxation, assortment optimization, and priority
decompositions are computable in polynomial time.
Our convention \(C_i\le T\) gives \(C_{\rm tot}\le MT\), so
maintaining individual copies also requires polynomial space
and time. The required random draws admit exact implementations
with expected polynomial running time, as detailed in
Appendix~\ref{app:exact-gate-computation}.
\hfill\(\blacksquare\)
\endproof

\subsection{Proof of Proposition~\ref{prop:mnl-policy-tightness}}

\proof{Proof.}
For each \(m\ge2\), take the MNL choice model constructed
in Appendix~\ref{app:proof-markov-extraction-sharpness}
and form an instance \(\mathcal I_m\) with one booking
period, one service slot, arrival probability one, and
one physical copy per product. Let \(W_m\) denote the
full-assortment revenue, and let \(P_m\) and \(f_m\)
denote the independent candidate set and optimal
assortment-revenue function from that construction.
Since revenue is strictly increasing in assortment size,
the full assortment is the unique optimal offer, and
\(V^{\mathrm{on}}(\mathcal I_m)=W_m\).

Suppressing the period and copy indices, the full-assortment
mixture induces the feasible SBLP solution
\[
x_i([1,1])=1,
\qquad
y_i([1,1])=z_i=\phi_i(\cM).
\]
The state-capacity constraint holds because \(z_i\le a_i\).
Every feasible SBLP objective is a convex combination of
assortment revenues and therefore cannot exceed \(W_m\).
Strict optimality of the full assortment forces every
optimal mixture to assign it weight one. With one copy
and one service interval per product, the sales flows
are consequently unique.

The policy therefore proposes each product independently
with probability
\[
\frac{y_i([1,1])}{a_i x_i([1,1])}
=\frac{z_i}{a_i},
\]
so its visible set has exactly the law of \(P_m\),
regardless of how an optimal SBLP solution is selected.
The single period is ungated, and the assortment oracle
earns \(f_m(P_m)\) in conditional expectation. Hence,
\[
\frac{V^{\pi_{\mathrm{MNL}}}(\mathcal I_m)}
     {V^{\mathrm{on}}(\mathcal I_m)}
=
\frac{\E[f_m(P_m)]}{W_m}
\longrightarrow \frac12,
\]
where the limit was established in
Appendix~\ref{app:proof-markov-extraction-sharpness}.
The first member \(\mathcal I_1\) may be any admissible
one-period, unit-capacity MNL instance with positive
optimal revenue.
\hfill\(\blacksquare\)
\endproof

\subsection{Verification of Remark~\ref{rem:mnl-zero-outside}}\label{app:mnl-zero-outside}

\proof{Proof.}
\textit{Step 1: Define the boundary choice model.}
Fix a period and suppress its index. With zero outside-option
attractiveness and \(u_i>0\) for every product, define
\[
\phi_i(S)=
\begin{cases}
\displaystyle\frac{u_i}{\sum_{j\in S}u_j}, & i\in S,\\[2mm]
0, & i\notin S,
\end{cases}
\qquad
\phi_0(S)=\one\{S=\varnothing\}.
\]
The denominator is used only when \(i\in S\), so it is
positive. Every nonempty assortment has total purchase
probability one, and \(a_i=1\). Regularity follows directly
from the choice formula.

The revenue of a nonempty assortment is an
attractiveness-weighted average of its product revenues.
Consequently, \(f(A)=\max_{i\in A}r_i\) for \(A\ne\varnothing\),
with \(f(\varnothing)=0\); a highest-revenue singleton attains
this value.

\textit{Step 2: Establish the stronger static extraction bound.}
Consider the independent copy proposals generated from the SBLP
interval marginals, and let \(P_{\rm vis}\) be their visible
product set. Lemma~\ref{lem:aggregate-identities} and \(a_i=1\)
give \(\sum_c\rho_{ic}=z_i\), while the choice-mixture
constraints give \(\sum_i z_i\le1\).

For a revenue threshold \(\tau\ge0\), define
\(A_\tau:=\{i:r_i>\tau\}\) and
\(Z_\tau:=\sum_{i\in A_\tau}z_i\in[0,1]\).
By independence across copies,
\[
\begin{aligned}
\Prb\bigl(f(P_{\rm vis})>\tau\bigr)
&=1-\prod_{i\in A_\tau}\prod_{c=1}^{C_i}(1-\rho_{ic})\\
&\ge1-\exp\!\left(-\sum_{i\in A_\tau}\sum_{c=1}^{C_i}\rho_{ic}\right)\\
&=1-e^{-Z_\tau}
\ge\left(1-\frac1e\right)Z_\tau.
\end{aligned}
\]
The first inequality uses \(1-u\le e^{-u}\); the last follows
from concavity of \(1-e^{-z}\) on \([0,1]\). Integrating
over revenue thresholds yields
\[
\begin{aligned}
\E[f(P_{\rm vis})]
&=\int_0^\infty\Prb\bigl(f(P_{\rm vis})>\tau\bigr)\,d\tau\\
&\ge\left(1-\frac1e\right)\int_0^\infty Z_\tau\,d\tau
=\left(1-\frac1e\right)\sum_i r_i z_i.
\end{aligned}
\]
This bound holds for arbitrary copy multiplicities and does
not require the copy-level flows to be symmetric.

\textit{Step 3: Verify the coupling and online guarantee.}
For every arrival probability \(p\in(0,1]\) and nonempty
\(A\subseteq S\),
\(1-\prod_{i\in A}(1-pa_i)=1-(1-p)^{|A|}\ge p
\ge p\sum_{i\in A}\phi_i(S)\).
The empty-subset inequality is automatic. Hence, all Hall
conditions hold with \(\gamma=1\); additional proposed copies
only increase the update coverage.

The SBLP upper-bound and interval-flow arguments remain valid
for these regular choice probabilities and singleton bounds.
The posterior sampler and copy-level lifting reproduce the
prescribed independent update law, so the induction in
Theorem~\ref{thm:implementation-general} also applies. Thus,
the proposal distribution used in Step 2 is maintained in
every period, and every purchase is physically feasible.

Restoring the time index, set \(W_t:=\sum_i r_{ti}z_{ti}\).
If every period has zero outside-option attractiveness,
all periods are ungated and the stronger static bound gives
\[
\begin{aligned}
V^{\pi_{\rm MNL}}(I)
&=\sum_{t=1}^T p_t\E[f_t(P_t)]\\
&\ge\left(1-\frac1e\right)\sum_{t=1}^T p_tW_t
=\left(1-\frac1e\right)\SBLP^C(I)
\ge\left(1-\frac1e\right)V^{\rm on}(I).
\end{aligned}
\]

\textit{Step 4: Verify tractability and the scope of the improvement.}
For arbitrary signed product weights, a singleton attaining
the largest positive eligible weight is optimal. If no eligible
weight is positive, the empty assortment is optimal. This
provides a polynomial-time weighted assortment oracle.
The aggregate relaxation, symmetric lift, and conditional
sampling procedures therefore remain tractable.

If positive and zero outside-option attractiveness occur in
different periods, the same policy remains feasible. Its
period-level extraction factor is \(1-1/e\) in each zero-outside
period and \(1/2\) in each positive-outside period. In particular,
the general \(1/2\) guarantee still holds in this mixed setting;
the uniform \(1-1/e\) guarantee above concerns instances with
zero outside-option attractiveness in every period.
\hfill\(\blacksquare\)
\endproof

\textit{Tightness under the symmetric lift.}
Consider \(T=C\) booking periods, one service slot, and
one product with \(C\) copies. In every period, arrival
is certain, the request is for the single service slot,
and the outside option has zero attractiveness.
Revenue is one in the first period and zero in all
remaining periods. Every optimal SBLP solution has
first-period sales probability one, so the symmetric
lift assigns first-period proposal probability \(1/C\)
to each copy. The policy's expected revenue is therefore
\(1-(1-1/C)^C\), whereas \(V^{\rm on}(I)=1\).
The ratio approaches \(1-1/e\) as \(C\to\infty\),
showing that the factor is tight for the policy using
the symmetric lift.

\end{APPENDICES}

\end{document}